\documentclass[pdflatex]{sn-jnl}

\usepackage[numbers,square]{natbib}
\usepackage[noend,ruled,vlined]{algorithm2e}

\SetCommentSty{mycommfont}

\usepackage{graphicx}%
\usepackage{multirow}%
\usepackage{amsmath,amssymb,amsfonts}%
\usepackage{amsthm}%
\usepackage{mathrsfs}%
\usepackage[title]{appendix}%
\usepackage{xcolor}%
\usepackage{textcomp}%
\usepackage{manyfoot}%
\usepackage{booktabs}%
\usepackage{placeins}
\usepackage{listings}%
\usepackage{mathtools}
\usepackage{subcaption}
\usepackage{xr}
\usepackage{bm}
\usepackage{xfrac}
\usepackage{hyperref}
\usepackage{array}
\usepackage{tabularx}
\usepackage{tikz}
\usetikzlibrary{shapes.geometric}
\usepackage{booktabs} 
\usepackage[version=4]{mhchem} 
\usepackage{enumitem}
\usepackage{algorithm2e}

\theoremstyle{thmstyleone}%
\newtheorem{theorem}{Theorem}
\newtheorem{corollary}[theorem]{Corollary}%
\newtheorem{proposition}[theorem]{Proposition}%
\newtheorem{lemma}[theorem]{Lemma}%
\makeatletter
\newtheoremstyle{examplesty}
{18pt plus2pt minus1pt}
{18pt plus2pt minus1pt}
{\normalfont}
{0pt}
{\bfseries}
{.}
{.5em}
{\thmname{#1}\thmnumber{\@ifnotempty{#1}{ }{#2}}%
  \thmnote{ {\the\thm@notefont(#3)}}}
\makeatother
\theoremstyle{examplesty}%
\newtheorem{remark}{Remark}%
\theoremstyle{thmstylethree}%
\newtheorem{definition}[theorem]{Definition}%

\makeatletter
\def\moverlay{\mathpalette\mov@rlay}
\def\mov@rlay#1#2{\leavevmode\vtop{%
    \baselineskip\z@skip \lineskiplimit-\maxdimen
    \ialign{\hfil$\m@th#1##$\hfil\cr#2\crcr}}}
\newcommand{\charfusion}[3][\mathord]{
  #1{\ifx#1\mathop\vphantom{#2}\fi
    \mathpalette\mov@rlay{#2\cr#3}
  }
  \ifx#1\mathop\expandafter\displaylimits\fi}
\DeclareRobustCommand\bigop[1]{%
  \mathop{\vphantom{\sum}\mathpalette\bigop@{#1}}\slimits@
}
\newcommand{\bigop@}[2]{%
  \vcenter{%
    \sbox\z@{$#1\sum$}%
    \hbox{\resizebox{\ifx#1\displaystyle.9\fi\dimexpr\ht\z@+\dp\z@}{!}{$\m@th#2$
}}%
  }%
}
\makeatother

\newcommand{\king}{K}

\newcommand{\food}{\ensuremath{\mathbf{F}}}
\newcommand{\SM}{S}

\newcommand{\closure}[2]{\mathscr{C}_{#1}(#2)}

\newcommand{\CR}[0]{\mathcal{R}}

\newcommand{\FR}[0]{\mathbf{R}}

\newcommand{\fr}[0]{\mathbf{r}}

\newcommand{\RAF}[0]{\mathbf{R}'}
\newcommand{\block}[0]{\mathbf{B}}

\newcommand{\cat}{\kappa_F}
\newcommand{\sgen}{g}
\newcommand{\rgen}{\gamma}

\newcommand{\feducts}{\rho_F}

\newcommand{\fproducts}{\pi_F}

\newcommand{\ORCID}[1]{}

\definecolor{middlegreen}{RGB}{0,128,0}
\definecolor{royalpurple}{RGB}{85,26,139}

\begin{document}

\title{Minimality in Reflexive and Stoichiometric Autocatalysis} 

\abstract{
	Autocatalysis, the ability of a chemical subsystem to sustain its own constituents when supplied with sufficient food molecules, 
	has been closely related to the origin of life on Earth. Emerging from Wilhelm Ostwald's considerations about an explicit autocatalytic 
	reaction, different notions of autocatalysis have been developed over the years. The two most prominent are \textsc{reflexively autocatalytic 
	F-generated sets (RAFs)} and \emph{stoichiometric autocatalysis}. After having shown that each \textsc{RAFs} is, under reasonable conditions, 
	in general \emph{stoichiometrically autocatalytic}, we examine here the relationship between the two notions of minimality: 
	\textsc{irreducible RAFs} and \emph{autocatalytic cores}. To this end, we overcome the obstacle that \textsc{RAFs} and 
	\emph{stoichiometric autocatalysis} have been formalized in distinct systems of chemical reactions, i.e., catalytic reaction 
	systems (CRS) and chemical reaction networks (CRNs), respectively. We show that reactions in a CRS constitute equivalence 
	classes of reactions of the corresponding CRNs w.r.t. their specific catalyzations. Using the fact that CRN and CRS can be canonically identified 
	whenever each CRS reaction is associated with a single catalyzation, we demonstrate that the K\H{o}nig graph of a monocatalyzed, 
	irreducible RAF is composed of strong blocks devoid of food and waste species that are separated by reaction vertices, each of which contains 
	an autocatalytic core. In fact, a single irreducible RAF can, in general, contain exponentially many autocatalytic cores.
}

\author*[1,4]{\fnm{Richard} \sur{Golnik}}\email{richard@bioinf.uni-leipzig.de}
\ORCID{0000-0002-8582-5006} 

\author[1]{\fnm{Thomas} \sur{Gatter}}\email{thomas@bioinf.uni-leipzig.de}
\ORCID{0000-0002-5016-5191} 

\author[1]{\fnm{Wim} \sur{Hordijk}}\email{wim@worldwidewanderings.net}
\ORCID{0000-0002-0223-6194} 

\author[1,2,3,4,5,6,7,8,9]{\fnm{Peter F.}\sur{Stadler}}\email{studla@bioinf.uni-leipzig.de}
\ORCID{0000-0001-5567-3016}

\author[1,2]{\fnm{Nicola} \sur{Vassena}}\email{nicola.vassena@uni-leipzig.de}
\ORCID{0000-0001-5411-4976} 

\affil[1]{\orgdiv{Bioinformatics Group, Department of Computer Science}
  \orgname{Leipzig University}, \orgaddress{\street{H{\"a}rtelstra{\ss}e 16–18},
    \postcode{D-04107} \city{Leipzig}, \country{Germany}}} 

\affil[2]{\orgdiv{Interdisciplinary Center for Bioinformatics}
	\orgaddress{\orgname{Leipzig University}, \postcode{D-04107}
	 \city{Leipzig}, \country{Germany}}}  
	 
\affil[3]{\orgdiv{Center for Scalabale Data Analytics and Artificial Intelligence},
  \orgaddress{\orgname{Leipzig University}, \postcode{D-04107}
    \city{Leipzig}, \country{Germany}}} 

\affil[4]{\orgdiv{Zuse School for Embedded and Composite Artificial Intelligence (SECAI)}
   \orgaddress{\orgname{Leipzig University}, \postcode{D-04107}
    \city{Leipzig}, \country{Germany}}}

\affil[5]{\orgname{Max Planck Institute for Mathematics in the Sciences},
  \orgaddress{\street{Inselstra{\ss}e 22}, \postcode{D-04103}
    \city{Leipzig}, \country{Germany}}}

\affil[6]{\orgdiv{Department of Theoretical Chemistry}, \orgname{University
    of Vienna}, \orgaddress{\street{W{\"a}hringerstra{\ss}e 17},
    \postcode{A-1090} \city{Wien}, \country{Austria}}}

\affil[7]{\orgdiv{Facultad de Ciencias}, \orgname{Universidad Nacional de
    Colombia}, \orgaddress{\city{Bogot{\'a}}, \country{Colombia}}}

\affil[8]{\orgdiv{Center for non-coding RNA in Technology and Health},
  \orgname{University of Copenhagen}, \orgaddress{\street{Ridebanevej 9},
    \postcode{DK-1870} \city{Frederiksberg}, \country{Denmark}}}

\affil[9]{\orgname{Santa Fe Institute}, \orgaddress{\street{1399 Hyde Park
      Rd.}, \city{Santa Fe}, \state{NM} \postcode{87501}, \country{USA}}}



\maketitle

\section{Introduction}

Wilhelm Ostwald's reflections in 1890 on whether chemical reactions
could be catalyzed by one of their own products \cite{Ostwald:90} marked
the provenance of autocatalysis. In its simplest form, autocatalysis
appears as an autocatalytic reaction where one participating species drives
its own formation, fueled by a sufficient supply of food molecules. In
subsequent decades, \emph{autocatalytic reactions} appeared, either
explicitly or implicitly, in the literature across a wide range of
reaction-network settings, including chemistry, ecology, population
dynamics, and epidemiology \cite{lotka_contribution_1910,
  lotka_analytical_1920, lotka1925elements, volterra1926variazioni,
  kermack_contributions_1991}. One of the most convincing examples is the
transmission of viral infections from one individual to another
\cite{kermack_contributions_1991}, potentially resulting in an exponential
explosion of numbers of infected individuals, e.g., as observed during the
COVID-19 pandemic \cite{saxena_propagation_2021}.

From the simple idea of an autocatalytic reaction, a growing field of
research has developed over the past decades, resulting in different
concepts of autocatalysis \cite{andersen_defining_2021}.  A first line of
studies focused on entities driving their own reproduction in complex
networks, so-called \emph{replicators}, such as Eigen's quasi-species
\cite{eigen_selforganization_1971}, an ensemble of genotypes that arrange
themselves around a common species from which they differ by a series of
mutations.  Subsequent research applied this concept to metabolic networks
to identify important metabolic \emph{autocatalytic replicators} such as
ATP \cite{kun_computational_2008}. More generally, sequences of reactions
whose combined net reaction is autocatalytic have been termed autocatalytic
cycles, and shown to be prevalent in the central carbon metabolism of
\emph{Escherichia coli} \cite{barenholz_design_2017,
  smith_universality_2004}. Similar ideas have subsequently been studied
\cite{joshi_autocatalytic_2021}, as they closely relate to the hypercycle
concept originally proposed by Eigen and Schuster
\cite{eigen_principle_1977}.

A concise definition of \emph{stoichiometric autocatalysis} was proposed in
\cite{blokhuis_Universal_2020} as a generalization of the idea of an
autocatalytic cycle in accordance with the IUPAC definition of
autocatalytic reaction \cite{gold_iupac_2025}.  This formalization has
received increasing attention since its conception in 2020. It has lead to
the development of algorithms for enumerating minimal autocatalytic
subsystems in fully reversible \cite{kosc_thermodynamic_2025} CRNs and
with reversible and irreversible reactions \cite{gagrani_polyhedral_2024,
  golnik_birne_2025, golnik_enumeration_2026,
  golnik_using_2026}. Originally introduced in CRNs
lacking explicitly catalyzed reactions, the theory on stoichiometric
autocatalysis has recently been extended to CRNs allowing explicit catalysis as well
\cite{golnik_autocatalytic_2026}.

An alternative approach to formalize system-level autocatalysis focused on
systems of chemical reactions in which every reaction is, in general,
explicitly catalyzed \cite{lohn_evolving_1998, steel_emergence_2000}. The
original idea dates back to Kauffman's \cite{kauffman_autocatalytic_1986}
ideas about how life on Earth might have emerged as a metabolism-first 
event rather than a genetics-first one. Kauffman explored
whether sets of polypeptide complexes, generated via reactions that are
exclusively catalyzed by members of the set, could lead to collectively
self-reproducing systems.  Remarkably, under reasonable assumptions and a
sufficient supply of amino acid monomers, their existence becomes almost
guaranteed as long as the polypeptide chains exceed a certain length. These
ideas laid the foundation for later work on \textsc{Reflexively
  Autocatalytic Food-generated sets} (RAF), which were studied extensively
in many variations \cite{steel_emergence_2000, mossel_random_2005,
  steel_minimal_2013, smith_autocatalytic_2014, hordijk_history_2019,
  steel20}.

Related to different approaches to thinking about the problem of how life
has emerged on a primordial Earth \cite{kauffman_autocatalytic_1986,
  peng_hierarchical_2022, unterberger_stoechiometric_2022},
\emph{stoichiometric autocatalysis} and \textsc{RAF sets} differ
substantially in their motivation, setting, and mathematical
formalization. Notably, \emph{stoichiometric autocatalysis} is grounded in
the tradition of chemical reaction network theory, tracing back to
Rutherford Aris \cite{aris_prolegomena_1965}, while \textsc{RAF} theory is
phrased in terms of catalytic reaction systems (CRS) \textit{sensu} Lohn
\cite{lohn_evolving_1998}. In particular, the RAF framework allows
reactions to also occur uncatalyzed, at least initially, which is
reasonable from a chemical perspective and considering the time span life
required to develop. In a more restrictive setting, i.e., in
\textsc{Constructively Autocatalytic F-generated sets} (CAFs), reactions
can only be executed if catalysts have been generated beforehand. A
meaningful connection between stoichiometric autocatalysis and RAF theory
has been investigated only very recently \cite{golnik_bridging_2026}, where
it is shown that any RAF that is itself not entirely a CAF is always
\emph{stoichiometrically autocatalytic}.


Both the CRN and the CRS frameworks have developed parallel concepts
of `minimal’ autocatalytic sub-systems. Although they differ
substantially, they share the notion of minimality by considering
autocatalytic sub-systems that do not themselves contain even smaller
autocatalytic sub-systems. The resulting minimal structures are termed
\emph{autocatalytic cores} in stoichiometric autocatalysis and
\emph{irreducible} RAFs in RAF theory. Computational tools have been
developed to identify minimal structures in both frameworks 
\cite{kosc_thermodynamic_2025, gagrani_polyhedral_2024, 
golnik_using_2026, hordijk_algorithms_2026}.
Developed only recently, \texttt{autogatito} \cite{golnik_using_2026} can
enumerate autocatalytic cores in large chemical reaction networks, for
example, of metabolic type. A na{\"\i}ve and informal application of
\texttt{autogatito} to the chemical reaction networks associated with
irreducible RAFs generated from random instances of the Binary Polymer
Model \cite{kauffman_autocatalytic_1986} produced an unexpected
outcome: each irreducible RAF contains a large number of autocatalytic 
cores -- up to $5000$ for a single irreducible RAF with approximately 50 
reactions (see Fig.~\ref{fig:irrRAFCores}). The need to
explain this apparent discrepancy motivates the present contribution, in
which we investigate the structural properties of irreducible RAFs.

\begin{figure}[htb]
  \centering
  \begin{minipage}[c]{\textwidth}
    \centering
    \begin{minipage}[c]{0.5\textwidth}
      \centering
      \includegraphics[width=\textwidth]{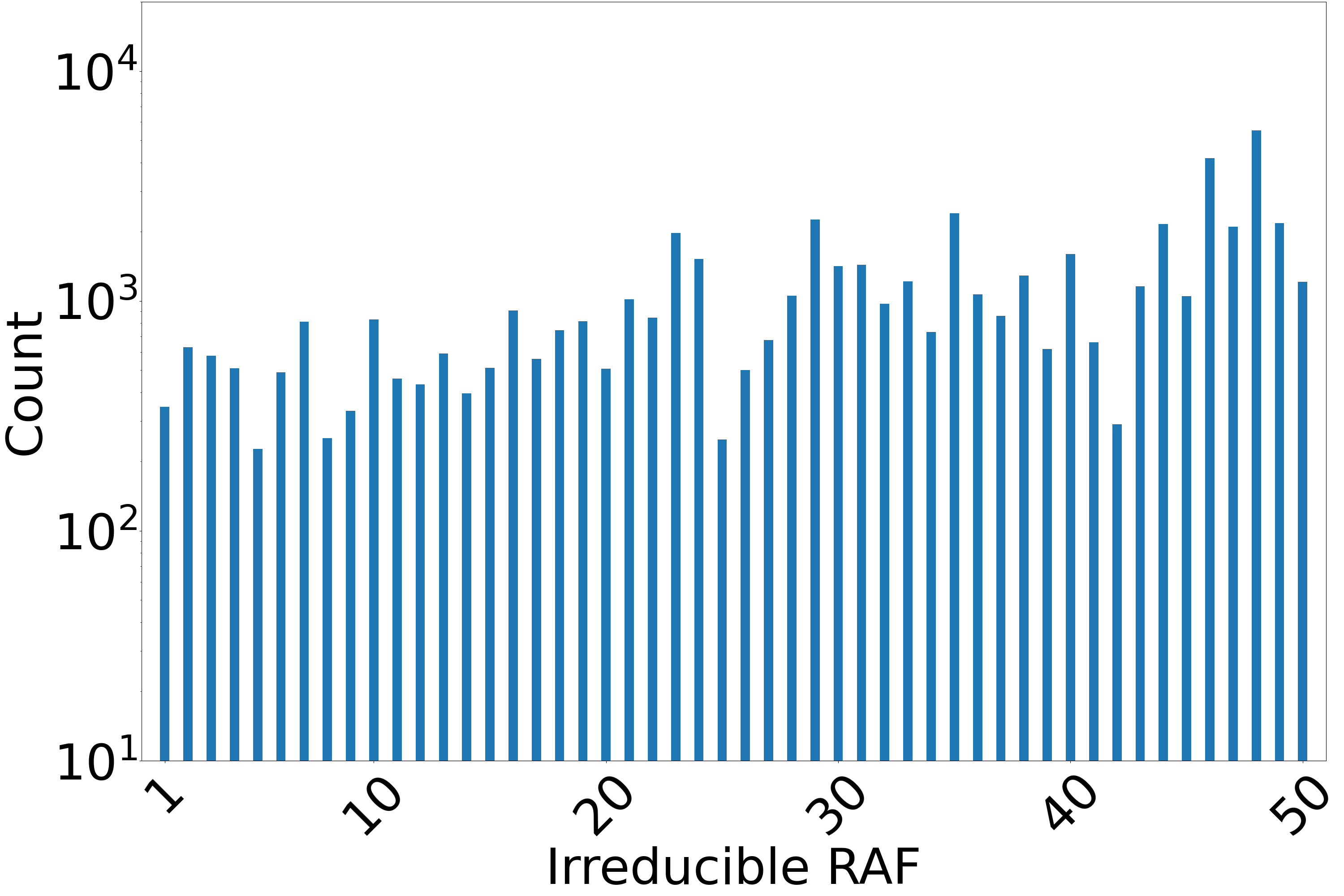}
    \end{minipage}%
    \hfill%
    \begin{minipage}[c]{0.5\textwidth}
      \centering
      \includegraphics[width=\textwidth]{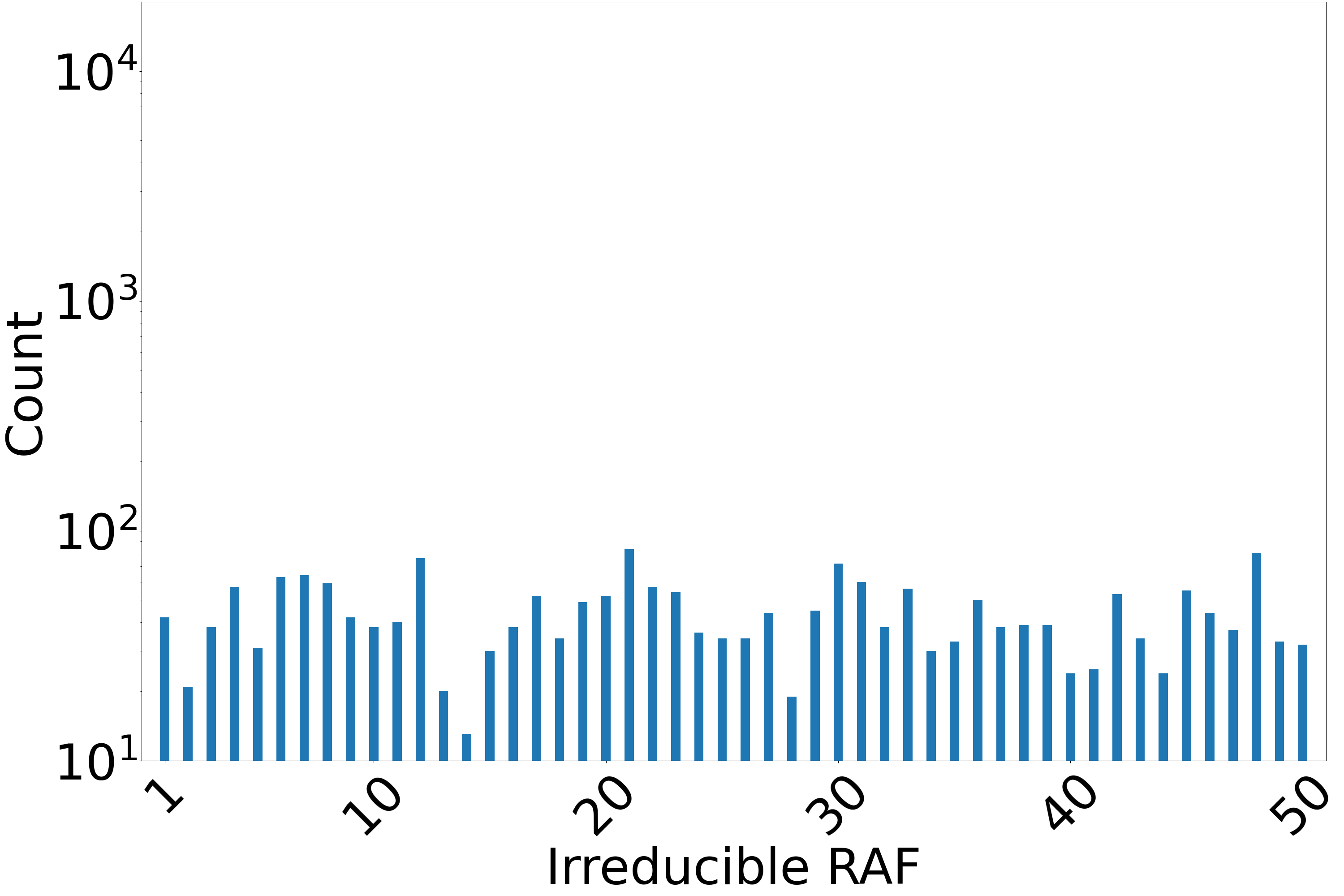}
    \end{minipage}
  \end{minipage}
  \begin{minipage}[c]{\textwidth}
    \centering
    \begin{minipage}[c]{0.5\textwidth}
      \centering
      \includegraphics[width=\textwidth]{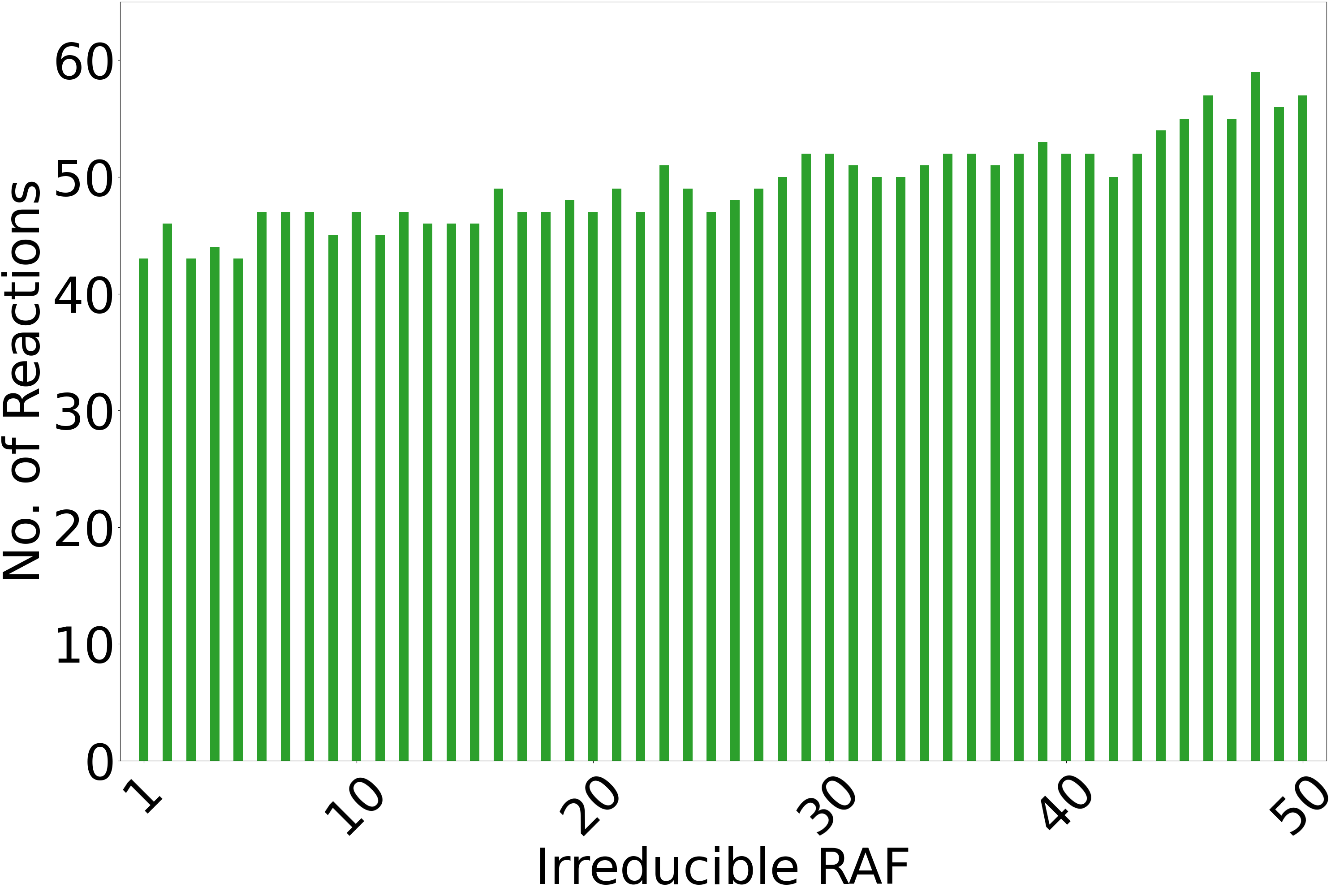}
    \end{minipage}%
    \hfill%
    \begin{minipage}[c]{0.5\textwidth}
      \centering
      \includegraphics[width=\textwidth]{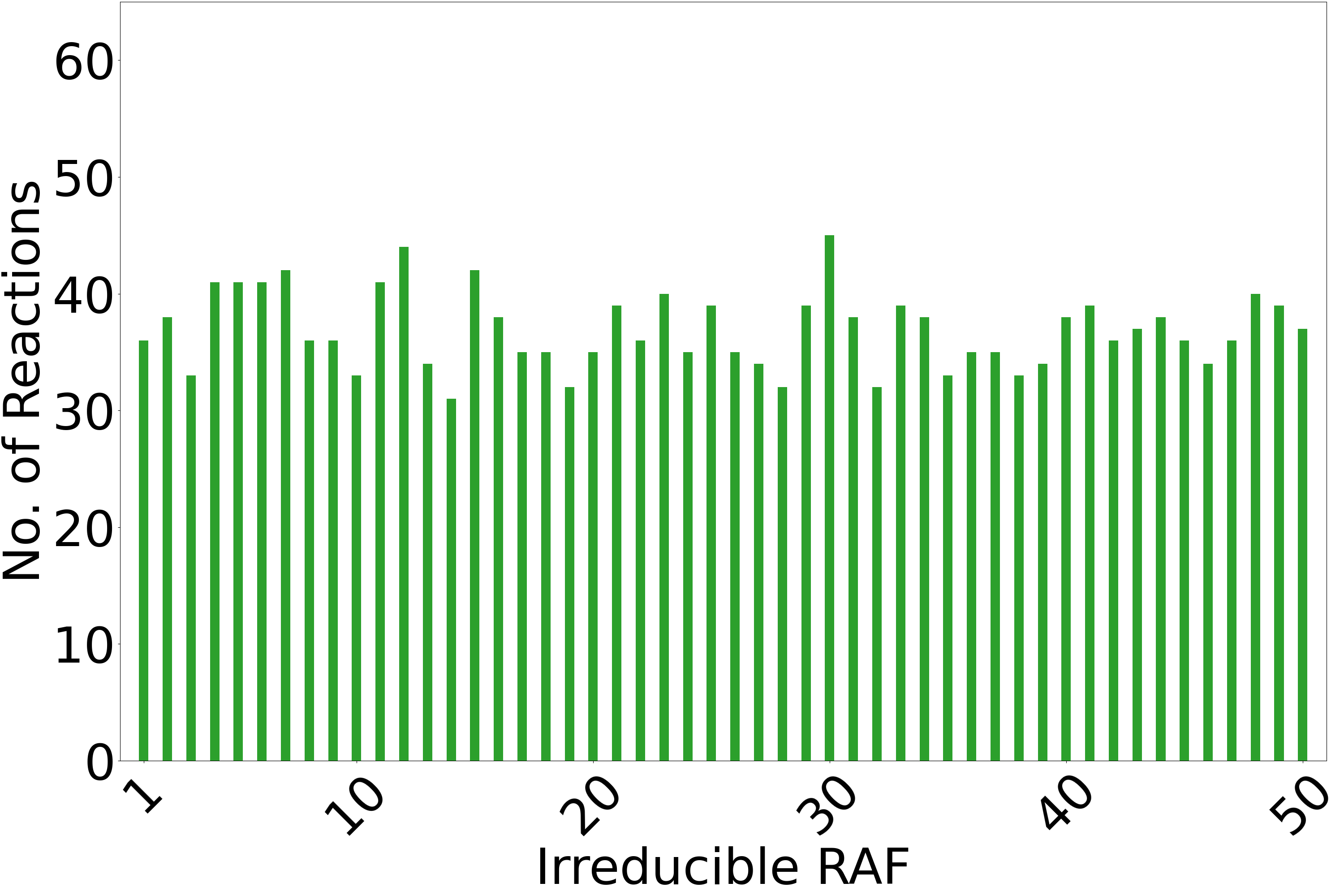}
    \end{minipage}
  \end{minipage}	
  \caption{A single irreducible RAFs may contain many autocatalytic
    cores. The plots show the number of autocatalytic cores (\textbf{top})
    and reactions (\textbf{bottom}) in 50 arbitrarily chosen irreducible
    RAFs from two instances of the Binary Polymer Model (BPM) with polymers
    of up to six monomers. We use the term ‘reaction’ here informally; the
    distinction between reactions in a CRN and `F-reactions' in a CRS is
    clarified in Sect.~\ref{sect:term}. The \textbf{left} panels reflect a
    BPM including both ligation and cleavage reactions with a fixed
    probability for a non-food species to catalyze any reaction of
    $p=0.0025$. The \textbf{right} panels reflect a model containing
    ligation reactions only, with a fixed probability for a non-food
    species to catalyze any reaction of $p=0.006$. See SI
    Sect.~\ref{sec:BinPol} as well as SI
    Table~\ref{tab:SimulationDetails} for simulation details, and SI
    Fig.~\ref{fig:irrRAFCoreComp} for composition of each irreducible
    RAF.}
  \label{fig:irrRAFCores}
\end{figure}

We first clarify in Sec.~\ref{sec:CRNCRS} the formal relationship between
the two underlying frameworks, chemical reaction networks (CRNs) and
catalytic reaction systems (CRSs), so that we may move freely between them:
Intuitively, CRSs can be viewed as quotient structures of associated
CRNs. We then study the structure of irreducible RAFs, particularly from a
graph-theoretic perspective, in Sec.~\ref{sec:RAF}. Through the quotient
relation between CRN and CRS, to each irreducible RAF we will associate an
exponentially large family of certain bipartite König
graphs. Sec.~\ref{sec:AutoCatIrrRAF} introduces stoichiometric
autocatalysis and our main result, Thm.~\ref{Thm:StrBlockCore}, which
states that every maximal \emph{strong block} \cite{grotschel_minimal_1979}
of each such bipartite graph contains an autocatalytic core. 
 This exponential correspondence provides an intuition for the large number of
autocatalytic cores observed in the irreducible RAFs of
Fig.\ref{fig:irrRAFCores}. We comment extensively on this aspect in
Sec.~\ref{sec:examples}. Nevertheless, we also show that this relationship
depends strongly on the type of network: we present an artificial
construction, which can produce examples containing any arbitrary number of
irreducible RAFs, yet invariably only one single autocatalytic core. We
conclude with a discussion of future directions.
 
\FloatBarrier

\section{Two Representations of Chemical Systems}\label{sec:CRNCRS}

Research on systems of chemical interactions has attracted considerable
attention, but formalizations lack consistency throughout the various
frameworks \cite{lohn_evolving_1998, steel_emergence_2000,
  feinberg_foundations_2019}. In particular, different authors have
introduced different notions of reactions \cite{muller_what_2022}. Here, we 
overview two approaches: \emph{Chemical Reaction Networks} and
\emph{Catalytic Reaction Systems}.

\subsection{Chemical Reaction Networks (CRNs)}
\label{sect:term}
A \emph{chemical reaction network} (CRN) $\Gamma \coloneqq (X,\CR)$ is a
tuple of two finite sets, a set of species $X$ and a set of reactions
$\CR$.  Any $r \in \CR$ is represented as an ordered association between a
nonnegative combination of reactant species and a nonnegative combination of
product species,
\begin{equation}\label{eq:reaction}
  \sum_{x\in X}\ s^-_{xr} \cdot x \quad
  \underset{r}{\longrightarrow} \quad
  \sum_{x\in X} \ s^+_{xr}\cdot x\,,
\end{equation}
where $s^{\pm}_{xr}\ge0$ are the reactant and product stoichiometry of the
molecule $x$ in the reaction $r$. The $|X|\times|\CR|$ \emph{stoichiometric
  matrix} $\SM$ encodes all \emph{net-stoichiometries}:
\begin{equation} 
  \SM_{xr} \coloneqq s^+_{xr}-s^-_{xr}.
\end{equation}

A \emph{sub-CRN} $\Gamma'\subseteq \Gamma$ is a tuple $\Gamma'=(X',\CR')$
with $X'\subseteq X$ and $\CR'\subseteq \CR$, where $r\in \CR'$ omits the
reference to species $x\not\in X'$, and the associated stoichiometric
matrix is denoted as $S[X',\mathcal{R}']$.

Clearly, stoichiometry and net-stoichiometry might not coincide, i.e.,
$s^{\pm}_{xr}\neq \pm S^{\pm}_{xr}$ for example due to
catalysis. From a chemical perspective, a \emph{catalyst} for $r$ is a
species $x\in X$ that appears on both sides of the stoichiometric equation
\eqref{eq:reaction}, i.e. with $s^-_{xr}s^+_{xr}\neq 0$, and that
additionally ``increases the rate of a reaction without modifying the
overall standard Gibbs energy change in the reaction''
\cite{gold_iupac_2025}.

However, a \emph{solvent} $x$ in a reaction $r$ may also formally satisfy
$s^-_{xr}s^+_{xr}\neq 0$ without being considered a catalyst
\cite{marx_nature_1999}, e.g. water in Br{\o}nstedt/Lowry's acid-base
theory \cite{eigen_proton_1964}. To be specific on the catalytic role a
species plays, we partition stoichiometries as in
\cite{golnik_bridging_2026} as
\begin{equation}
  s^{\pm}_{xr}=\sigma^{\pm}_{xr} + k_{xr},\qquad \sigma^{\pm}_{xr}, k_{xr}\ge0.
\end{equation}
It is readily verified that
$\SM_{xr} = s^+_{xr}-s^-_{xr} = \sigma^+_{xr} - \sigma^-_{xr}$, thus the
stoichiometric matrix is identically encoded in the coefficients
$\sigma^{\pm}_{xr}$.  Eq.~\ref{eq:reaction} can then be rewritten as:
\begin{equation}\label{eq:Mod.SmEq}
  \sum_{x\in X}\ \sigma^-_{xr} \cdot x + \left(\sum_{x\in X}\ k_{xr}
    \cdot x\right) \quad
  \underset{r}{\longrightarrow} \quad
  \sum_{x\in X} \ \sigma^+_{xr} \cdot x +
  \left(\sum_{x\in X}\ k_{xr} \cdot x\right),
\end{equation}
We refer to species appearing in Eq.~\ref{eq:Mod.SmEq} with
$\sigma^{-}_{xr}>0$, $\sigma^{+}_{xr}>0$, and $k_{xr}>0$ as F-reactants,
F-products, and F-catalysts, respectively. The nomenclature ``F-catalysts''
is to distinguish this notion from the more loose definition of a catalyst
as a species that is both reactant and product, which is often also found
in CRN literature. We collect F-reactants, F-products, and F-catalysts in
the following sets:
\begin{equation}\label{eq:setsdef}
  \begin{split}
    \feducts(r) & \coloneqq \{x\in X \ | \ \sigma^-_{xr} > 0\}; \\
    \fproducts(r) & \coloneqq \{x\in X \ | \ \sigma^+_{xr} > 0\}; \\
    \cat(r)&\coloneqq \{(x, k_{xr})\ | \ x\in X,\; k_{xr}>0\}.
  \end{split}
\end{equation} 
For technical reasons (to be clarified below in Sec.~\ref{sec:CRS}), note
that the set of F-catalysts is defined as a set of pairs $(x,k_{xr})$ to
encode also the respective coefficient. For simplicity, with an abuse of
notation, we will write $x\in \cat(r)$ to refer to species $x$ for which
there exists a pair $(x,k_{xr})\in \cat(r)$. In a CRN, distinct reactions
$r_1\neq r_2$ may differ only in the F-catalysts
$\cat(r_1)\neq \cat(r_2)$ and share identical F-reactants and
F-products. For example, the hydrogenation of ethene to ethane can be
mediated by platinum, palladium, or nickel separately
\cite{crampton_ethylene_2016}. An example for co-catalysis, i.e., where the
orchestrated cooperation of two catalysts is required, is the dual-catalyst
system of palladium and copper in the Hoechst-Wacker process, where ethene
is oxidized to acetaldehyde \cite{jira_acetaldehyde_2009}. The idea that
one `same' chemical process, intended as a net-molecule transformation, can
be catalyzed by different catalysts has led to the different, although
related, concept of Catalytic Reaction Systems, which we overview in the
next section.

\subsection{Catalytic Reaction Systems (CRS)}\label{sec:CRS}

\emph{Catalytic Reaction Systems} (CRS) \textit{sensu} 
Lohn \cite{lohn_evolving_1998} are defined as a triple 
\begin{equation}\mathbf{\Phi}\coloneqq (X, \FR, \cat(\FR)),
\end{equation}
where $X$ is again a set of species, and each $\fr\in\FR$ is defined as a
net-molecule transformation
\begin{equation}\label{eq:FReacEq}
  \sum_{x\in X}\ \sigma^-_{x\fr} \cdot x \quad
  \underset{\fr}\longrightarrow \quad
  \sum_{x\in X}\ \sigma^+_{x\fr} \cdot x,
\end{equation}
where $\sigma^\pm_{x\fr}\ge 0$ are again nonnegative stoichiometric
coefficients. In the original framework of Lohn, general stoichiometric 
coefficients are not specified, as reactions are defined through polymer 
transformations with unit stoichiometry $\sigma^\pm_{x\fr}\in \{0,1\}$. 
Here, to bring the CRS formalism closer to the general CRN setting, 
we allow arbitrary stoichiometry $\sigma^\pm_{x\fr}>0$. 
The sets of F-reactants $\feducts(\fr)$ and F-products
$\fproducts(\fr)$ of an F-reaction $\fr$ are defined identically as in
\eqref{eq:setsdef} for CRNs. The main difference with the CRN setting is
the \emph{catalyzations} $\cat(\FR)=\{ \cat(\fr) \}_{\fr\in\FR}$.  For a
single F-reaction $\fr$, the set of catalyzations $\cat(\fr)$ is defined as
$\cat(\fr)=\{c_{\fr,i}\}$, where each catalyzation $c_{\fr,i}$ is in turn a set of
pairs $(x,k^{c}_{x\mathbf{r}})$:
\begin{equation}\label{eq:CatCFR}
    c_{\fr,i}\coloneqq \{(x_1,k^c_{x_1\mathbf{r}}),\ldots, (x_n,k^c_{x_n\fr})\}.
\end{equation}
where $k^{c}_{x\fr}>0$ is a positive coefficient. Again, for simplicity and
with abuse of notation, we will write $x\in c_\fr$ for a species $x$ such
that there exists $k^c_{x\fr}$ for which $(x,k^c_{x\fr})\in c_{\fr}$.  We
call an $\fr\in \FR$ \emph{F-reaction} to distinguish it from the CRN
reactions \eqref{eq:Mod.SmEq}. A CRS
$\mathbf{\Phi}'\coloneqq (X', \FR', \cat'(\FR'))$ is a \emph{sub-CRS} of
$\mathbf{\Phi}\coloneqq (X, \FR, \cat(\FR))$ if \[X'\subseteq X,\quad
\FR'\subseteq\FR,\quad \cat'(\FR')\subseteq \cat(\FR).\]

Finally, a comparison between a CRN reaction \eqref{eq:Mod.SmEq} with a CRS
F-reaction \eqref{eq:FReacEq} and the definition of catalyzations set
$\cat$ suggests the following formalization of the relation between a CRN
and a CRS.

\paragraph{From a CRN $\Gamma=(X,\mathcal{R})$ to a CRS
  $\mathbf{\Phi}\coloneqq (X, \FR, \cat(\FR))$.}
Consider the equivalence relation $\sim$ on CRN reactions defined as
\begin{equation}\label{eq:EqRel}
    r_1 \sim r_2 \quad \text{if $\sigma_{xr_1}^{\pm}=\sigma_{xr_2}^{\pm}$ for any $x\in X$.}
\end{equation}
To each CRN $\Gamma=(X,\mathcal{R})$ we associate a CRS
$\mathbf{\Phi}(\Gamma)\coloneqq (X, \FR, \cat(\FR))$ where $\FR$ is just
the quotient set of $\mathcal{R}$ over the equivalence relation $\sim$,
i.e.
\begin{equation}
    \FR= \mathcal{R}/\sim.
\end{equation}
In particular, an F-reaction in a CRS is just an equivalence class of CRN
reactions, $\fr=[r]_{\sim}$. Each catalyzation $c$ is then just the set of
F-catalysts $c(\mathbf{r})=\cup_ {r\in \fr}\{\cat(r)\}$.

\paragraph{From a CRS $\mathbf{\Phi}\coloneqq (X, \FR, \cat(\FR))$ to a CRN
  $\Gamma=(X,\mathcal{R})$.} Rolling back the previous
argument, to each CRS $\mathbf{\Phi}\coloneqq (X, \FR, \cat(\FR))$ we
associate a CRN $\Gamma (\mathbf{\Phi})=(X,\mathcal{R})$, where each
reaction $r\in \mathcal{R}$ corresponds to a pair $(\fr,c_\fr)$, naturally
defined from \eqref{eq:FReacEq}, setting $\cat(r)=c_\fr$, and defining $r$
via \eqref{eq:Mod.SmEq} with $\sigma^{\pm}_{xr}=\sigma^{\pm}_{x\fr}$.

By construction, it is clear that 
\begin{equation}
  \Gamma(\mathbf{\Phi}(\Gamma))=\Gamma \quad \text{and}
  \quad \mathbf{\Phi}(\Gamma(\mathbf{\Phi))}=\mathbf{\Phi}.
\end{equation}
Moreover, since $\FR$ is a quotient set from $\mathcal{R}$, we have that
$|\FR| \le |\mathcal{R}|$, with equality achieved if, and only if, each
F-reaction $\fr$, viewed as an equivalence on $\mathcal{R}$, consists only
of one element, i.e.  $\fr=[r]_{\sim}=\{r\}$.

\section{Reflexively Autocatalytic Food-Generated Sets}\label{sec:RAF}

Catalytic Reaction Systems $(X,\FR,\cat(\FR))$ are the framework on which
the notion of \emph{Reflexively Autocatalytic F-generated Sets} (RAFs) has
been introduced \cite{steel_emergence_2000, hordijk_detecting_2004}. Their
definition relies on the concept of \emph{closure}. The closure
$\closure{\RAF}{X'}$ of a species set $X' \subseteq X$ w.r.t.\ to a
F-reaction set $\RAF \subseteq \FR$ is the unique inclusion-minimal subset
$Y\subseteq X$ containing $X'$ such that for any
$\fr\in \RAF: \feducts(\mathbf{r})\subseteq Y \Rightarrow
\fproducts(\mathbf{r})\subseteq Y$. In symbols:
\begin{equation}\label{eq:Closure}
  \closure{\RAF}{X'} \coloneqq \min_{\subseteq}
  \{Y  \ | \ X'\subseteq Y \subseteq X,\;
  \feducts(\mathbf{r}) \subseteq Y \Rightarrow \fproducts(\mathbf{r})
  \subseteq Y \ \forall
  \mathbf{r}\in \RAF\}.
\end{equation}
A Reflexively Autocatalytic F-generated set (RAF) is then defined as
follows \cite{hordijk_required_2011}:

\vspace{6pt}
\begin{definition}[RAF]\label{def:RAF}
  Let $(X,\FR,\cat(\FR))$ be a CRS with a specified \emph{food set}
  $\food\subseteq X$. A set $\RAF\subseteq \FR$ is called \vspace{6pt}
  \begin{itemize} 
  \item[i.]{\emph{Reflexively Autocatalytic} (RA) if, for each
      $\mathbf{r}\in \RAF$,
      there exists at least one catalyzation $c\in \cat(\mathbf{r})$
      such that $c \subseteq \closure{\RAF}{\food}$;}\\[-1.0em]
  \item[ii.]{\emph{\food-generated} (F) if $\feducts(\RAF)\subseteq
      \closure{\RAF}{\food}$;}\\[-1.0em]
  \item[iii.]{\emph{Reflexively Autocatalytic and $\food$-generated}
      (RAF) if both i. and ii. hold.}
  \end{itemize}
\end{definition}
\begin{remark}\label{rmk:RAFCRN}
  Def.~\ref{def:RAF} does not formally require the formalism of CRS, and it
  can be stated identically for CRN so that we may naturally
  speak of sets of food-generated CRN reactions, as well. This route was
  undertaken e.g. in \cite{golnik_bridging_2026}.
\end{remark}
We also remark that Property~$ii.$ together with the definition of the
closure, Eq.~\eqref{eq:Closure}, directly implies that all F-products of
any F-reaction $\fr\in \RAF$ are also contained in $\closure{\RAF}{\food}$,
thus $\fproducts(\fr)\subseteq \closure{\RAF}{\food}$. Finally, it is also
worthwhile noting that Property~$i.$ and $ii.$ are independent of each
other. In particular, Property~$ii.$ depends only on $\food$ and $\RAF$,
and not on the catalyzations. This reveals an inherent freedom in the
definition of catalyzations of a RAF, as the following Lemma underlines.

\vspace{6pt}
\begin{lemma}\label{lem:ArbCat}
  Let $X$ be a set of species and $\FR$ be a set of $F$-reactions on
  $X$. Then, for any $\food$-generated set $\RAF$ there exist infinitely
  many sets of catalyzations $\{\cat(\FR)_i\}$ which make $\RAF$ a RAF in
  the CRS $(X,\FR,\cat (\FR)_i))$, for any $i$.
\end{lemma}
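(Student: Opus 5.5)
The strategy is a direct construction. Property~$ii.$ of Def.~\ref{def:RAF} depends only on $\food$ and $\RAF$, not on the catalyzations. So the closure $C\coloneqq\closure{\RAF}{\food}$ is fixed once $\RAF$ is given, and we only have to choose catalyzations that satisfy Property~$i.$ with respect to this fixed $C$. First note that $C$ is nonempty whenever it matters. If $\food=\emptyset$, then Property~$ii.$ forces every $\fr\in\RAF$ to have $\feducts(\fr)\subseteq C$. In the degenerate case $\RAF=\emptyset$, Property~$i.$ holds vacuously for any catalyzations at all, so infinitely many choices trivially work. Otherwise, take any $\fr\in\RAF$. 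Since $\feducts(\fr)\subseteq C$ and the closure is closed under $\fr$, we get $\fproducts(\fr)\subseteq C$. Hence $C\neq\emptyset$ unless every reaction in $\RAF$ has empty F-reactants and F-products. I would treat that edge case by assuming, as is standard, that $\food\neq\emptyset$.

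For the construction, fix a species $x_0\in C$ (for instance $x_0\in\food$) and a positive real parameter $\lambda>0$. Define the catalyzation family $\cat(\FR)_\lambda$ by
\[
\cat(\fr)_\lambda\coloneqq\bigl\{\{(x_0,\lambda)\}\bigr\}\ \text{for }\fr\in\RAF,
\]
and assign arbitrary catalyzations to $\fr\in\FR\setminus\RAF$, for example the same ones. Then each $\fr\in\RAF$ has the catalyzation $c=\{(x_0,\lambda)\}$, which satisfies $c\subseteq C$ in the abuse-of-notation sense of the paper, because $x_0\in C$. So Property~$i.$ holds. Property~$ii.$ is unchanged, so $\RAF$ is a RAF in $(X,\FR,\cat(\FR)_\lambda)$.

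It remains to show these CRSs are pairwise distinct. Distinct values of $\lambda$ give distinct pairs $(x_0,\lambda)$, hence distinct catalyzation sets. Since $\lambda$ ranges over $\mathbb{R}_{>0}$, there are infinitely many of them. This is exactly why the paper encodes catalysts as pairs carrying coefficients. If one instead wanted integer coefficients, taking $\lambda\in\mathbb{N}_{>0}$ still gives infinitely many choices.

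The only real obstacle is the degenerate edge case where $C$ could be empty, namely $\food=\emptyset$ with no generated species. I would handle it by noting that Property~$i.$ is vacuous when $\RAF=\emptyset$. When $\RAF\neq\emptyset$, any reaction whose products are nonempty places species in $C$. Reactions with both empty F-reactants and empty F-products are excluded as not being meaningful F-reactions.
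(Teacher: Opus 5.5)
Your proposal is correct and follows the paper's argument. F-generation does not depend on the catalyzations, so it suffices to give each $\fr\in\RAF$ at least one catalyzation whose species lie in $\closure{\RAF}{\food}$. Your explicit family $\{\{(x_0,\lambda)\}\}_{\lambda>0}$ also makes the infinitude claim rigorous, which the paper only asserts: with finitely many species, catalyzations taken as bare sets of species would allow only finitely many choices, so varying the coefficient is exactly what is needed.
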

\begin{proof}
  To define such valid set of catalyzations $\cat(\FR)$, it is sufficient
  to arbitrarily assign to each F-reaction $\mathbf{r}$ any set
  $\{c_j(\fr)\}$ of sets of species $c_j(\fr)\subseteq X$. For
  $\mathbf{r}\in \RAF$, we shall further require that at least one set
  $c_{j^*}(\fr)$ consists only of species that are in the closure,
  i.e. ${c_{j^*}}(\fr)\subseteq \closure{\RAF}{\food}$. Any such choice of
  catalyzations renders $\RAF$ Reflexively Autocatalytic. It follows that
  such arbitrary assignments are infinite.
\end{proof}

\subsection{A preorder on RAFs}

The F-reactions of a RAF can be partially ordered such that for each
$x\in\feducts(\mathbf{r})$ we have $x\in\food$ or there is a reaction
$\mathbf{r}'\prec \mathbf{r}$ such that $x\in\fproducts(\mathbf{r}')$.  We
construct one such order that is reminiscent of a breadth-first search.  To
this end, we define sequences of nested species subsets $(X'_0, ..., X'_n)$
where $X'_0=\food$ and
\begin{equation}\label{eq:speciesSubSet}
  X'_i\coloneqq X'_{i-1}\cup \bigcup_{\feducts(\mathbf r)\subseteq X'_{i-1}}
  \fproducts(\mathbf r).
\end{equation}
This iterative process builds $\closure{\RAF}{\food}$ in a finite number of
steps as long as $\vert \RAF\vert
<\infty$.

These ordered sequences define a \emph{generation reaction-index}
$\gamma(\mathbf{r})$ where an F-reaction $\mathbf{r}$ happens for the first
time, i.e.
\begin{equation}\label{def:genreactionindex}
  \gamma(\mathbf{r})\coloneqq \min\{i \; | \; \feducts(\mathbf{r})\subseteq X'_i\} + 1,
\end{equation} 
where the index is shifted by one for normalization, so that the
\emph{first} generation corresponds to index $\gamma(\mathbf{r})=1$.
Similarly, the sequences define a \emph{generation species-index} $g(x)$
where a non-food species $x\not\in \food$ is produced for the first time,
i.e. \begin{equation} g(x)= \min \{i \; | \; x\in
X'_i\}.\end{equation} We further set $g(x)=0$ for $x\in\food$. Clearly, the 
total order $\le$ on the natural numbers induces a preorder $\preccurlyeq$ 
on $X\setminus \food$ and a preorder $\preceq$ on $\mathbf{R}$
\begin{equation}
\begin{cases}
  x_1\preccurlyeq x_2 \quad&\text{if}\quad \sgen(x_1) \leq \sgen(x_2);\\
  \fr_1\preceq \fr_2 \quad &\text{if} \quad \rgen(\fr_1) \leq \rgen(\fr_2).
\end{cases},
\end{equation}
which we refer to as \emph{generation orders} on species and F-reactions,
respectively. By Remark~\ref{rmk:RAFCRN}, we will use the index $\gamma$
indistinctly also for reactions \eqref{eq:Mod.SmEq} in a CRN setting. In
particular, from the definition of $\gamma$ it follows consistently that
$\gamma(r_1)=\gamma(r_2)$ whenever $r_1\sim r_2$ and thus
$\gamma(r_1)=\gamma(\fr)$ for $\fr=[r_1]_\sim$.

We conclude this section with the formal definition of CAFs sets.
\vspace{6pt}
\begin{definition}\label{def:CAF}
  A \emph{constructively autocatalytic and food-generated} set (CAF) is a
  RAF such that for each F-reaction $\fr$ all $x\in \feducts(\fr)$ satisfy
  $\sgen(x)<\rgen(\fr)$ and there is at least one catalyzation
  $c\in \cat(\mathbf{r})$ such that all F-catalysts $x\in c$ satisfy
  $\sgen(x)< \rgen(\fr)$.
\end{definition}

\subsection{Irreducible RAFs}\label{sec:IrrRAF}

For a CRS $(X,\FR,\cat(\FR))$ with food-set $\food\subseteq X$, 
the minimality of the RAF structure $\RAF$ with respect to its F-reactions 
has been investigated in \cite{steel_minimal_2013} via the 
notion of \emph{irreducible RAFs}.

\vspace{3mm}
\begin{definition}[Irreducible RAF]\label{def:irrerRAF}
  Let $\RAF$ be a RAF of the CRS $(X,\FR,\cat(\FR))$. $\RAF$ is
  \emph{irreducible} if no subset $\RAF'\subset \RAF$ determines a RAF.
\end{definition}

\emph{Example.} Consider a species set $X=\{x_0,...,x_{n+1}\}$ with food 
set $\food=\{x_0\}$ and a sequence of $n+1$ F-reactions
\begin{equation}\label{eq:ExReacSeq}
x_i + (x_{i+1})\quad\underset{\fr_i}\longrightarrow\quad x_{i+1}+(x_{i+1}),
\end{equation}
such that the F-product $x_{i+1}$ is also the F-catalyst (indicated in
brackets).  Clearly, any set $\RAF_k$ of the form
$\RAF_k\coloneqq \{\fr_0,\ldots, \fr_k\}$, with $k\leq n$ satisfies
Def.~\ref{def:RAF} and it is thus a RAF. In particular, $\RAF_0=\{\fr_0\}$
is an irreducible RAF.

The minimality condition imposed on the set of F-reactions via
Def.~\ref{def:irrerRAF} directly yields the following two structural lemmas.
\vspace{6pt}
\begin{lemma}\label{lem:OneFReacPerRAF}
  Let $(X,\FR,\cat(\FR))$ be a CRS with food-set $\food\subseteq X$. An
  irreducible RAF $\RAF$ never contains two distinct F-reactions $\fr_1$
  and $\fr_2$ such that either of the following conditions hold:
  \begin{enumerate}
  \item their F-reactants and F-products coincide: 
  \begin{equation}\label{eq:netreaction}
    \feducts(\fr_1)=\feducts(\fr_2) \qquad\text{and}\qquad
    \fproducts(\fr_1)=\fproducts(\fr_2) 
  \end{equation}
\item the F-reactants of $\fr_1$ coincides with the F-products of $\fr_2$,
  and vice versa:
  \begin{equation}\label{eq:inverse}
    \feducts(\fr_1)=\fproducts(\fr_2) \qquad\text{and}\qquad
    \feducts(\fr_2)=\fproducts(\fr_1) 
  \end{equation}
\end{enumerate}
\end{lemma}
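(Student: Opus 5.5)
Both cases follow the same pattern. Assume an irreducible RAF $\RAF$ contains such a pair $\fr_1\neq\fr_2$. We exhibit a proper subset $\RAF'=\RAF\setminus\{\fr_2\}$ (or $\RAF\setminus\{\fr_1\}$), show that it is still a RAF, and so contradict Def.~\ref{def:irrerRAF}. In each case the central step is to prove that removing one reaction does not shrink the closure:
\[\closure{\RAF'}{\food}=\closure{\RAF}{\food}.\]
Once this is established, both properties of Def.~\ref{def:RAF} carry over to $\RAF'$ at once. For Property~ii., we have $\feducts(\RAF')\subseteq\feducts(\RAF)\subseteq\closure{\RAF}{\food}=\closure{\RAF'}{\food}$. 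For Property~i., each remaining $\fr\in\RAF'$ keeps its catalyzation $c\subseteq\closure{\RAF}{\food}=\closure{\RAF'}{\food}$.

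\emph{Case 1.} Here $\feducts(\fr_1)=\feducts(\fr_2)$ and $\fproducts(\fr_1)=\fproducts(\fr_2)$, so the implication $\feducts(\fr)\subseteq Y\Rightarrow\fproducts(\fr)\subseteq Y$ in Eq.~\eqref{eq:Closure} is the same condition for $\fr=\fr_1$ and $\fr=\fr_2$. Hence the families of admissible sets $Y$ for $\RAF$ and for $\RAF'=\RAF\setminus\{\fr_2\}$ coincide, and so do their inclusion-minimal elements.

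\emph{Case 2.} Here $\feducts(\fr_1)=\fproducts(\fr_2)$ and $\feducts(\fr_2)=\fproducts(\fr_1)$. Removing a reaction can only enlarge the family of admissible $Y$, so $\closure{\RAF'}{\food}\subseteq\closure{\RAF}{\food}$. For the converse, it suffices to show that $Y=\closure{\RAF'}{\food}$ is admissible for $\RAF$, i.e.\ that the implication also holds for the removed reaction. We choose which reaction to remove using the generation order $\preceq$, and remove the later one, say $\rgen(\fr_1)\le\rgen(\fr_2)$, so $\RAF'=\RAF\setminus\{\fr_2\}$.

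The delicate point is that $\feducts(\fr_2)$ may not lie in $Y$ in general; the implication for $\fr_2$ holds whenever $\fproducts(\fr_2)\subseteq Y$. To show this, we use that $\RAF$ is F-generated, so $\feducts(\fr_1)=\fproducts(\fr_2)$ lies in $\closure{\RAF}{\food}$. We then argue by induction along the generation sequence $X'_i$ of Eq.~\eqref{eq:speciesSubSet}, built with $\RAF$, that every $X'_i$ is contained in $Y$. The only problematic step is one where $\fr_2$ contributes $\fproducts(\fr_2)=\feducts(\fr_1)$. At that step $\feducts(\fr_2)=\fproducts(\fr_1)$ is already available. Since $\rgen(\fr_1)\le\rgen(\fr_2)$, we have $\feducts(\fr_1)\subseteq X'_{\rgen(\fr_1)-1}$, and this set is in $Y$ by the induction hypothesis because $\rgen(\fr_1)-1<\rgen(\fr_2)$. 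Hence $\fproducts(\fr_2)=\feducts(\fr_1)$ was produced earlier without using $\fr_2$. So $\fr_2$ never adds anything new, each $X'_i$ lies in $Y$, and the closures agree.

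I expect this ordering argument to be the main obstacle. The exact bookkeeping of indices, including the case $\rgen(\fr_1)=\rgen(\fr_2)$, must be checked against the definition of $\gamma$ in Eq.~\eqref{def:genreactionindex}. If the inductive bookkeeping becomes awkward, the fallback is the simpler observation that $\fproducts(\fr_2)=\feducts(\fr_1)\subseteq\closure{\RAF}{\food}$ is reachable through a derivation in which $\fr_2$ is not needed first.
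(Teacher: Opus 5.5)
Your proposal is correct and follows the paper's proof: assume $\fr_1\preceq\fr_2$, delete $\fr_2$, and show $\closure{\RAF\setminus\{\fr_2\}}{\food}=\closure{\RAF}{\food}$. In Case 1 this is immediate. In Case 2 it holds because $\fproducts(\fr_2)=\feducts(\fr_1)$ is already generated before $\fr_2$ ever fires, so the smaller set is still a RAF. Your induction along the $X'_i$ only spells out what the paper states in one line, and it already covers the tie $\rgen(\fr_1)=\rgen(\fr_2)$, since it uses only $\rgen(\fr_1)-1\le\rgen(\fr_2)-1$; the fallback is therefore unnecessary.
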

\begin{proof}
  Indirectly assume there exist two F-reactions $\fr_1,\fr_2\in \RAF$ such
  that either \eqref{eq:netreaction} or \eqref{eq:inverse} holds and
  assume, without loss of generality, that $\fr_1\preceq \fr_2$. Consider
  the set $\RAF'\coloneqq \RAF \setminus \fr_2$. Since
  \eqref{eq:netreaction} (resp. \eqref{eq:inverse} holds, it follows that
  $\closure{\RAF}{\food} = \closure{\RAF'}{\food}$, since $\fr_1$ and
  $\fr_2$ have the same F-reactants and F-products (resp. since the
  F-products of $\fr_2$ are F-reactants of $\fr_1$). Therefore, $\RAF'$ is
  (i) Reflexively Autocatalytic, since $\RAF$ is, and (ii) is
  $\food$-generated; in particular, $\RAF'$ is a RAF which contradicts
  $\RAF$ being irreducible.
\end{proof}
We remark that \eqref{eq:inverse} in particular formally excludes the presence 
in any irreducible RAF of two F-reactions that are each other's reverse at the 
level of F-reactants and F-products. More specifically, for an F-reaction $\fr$, 
we write $\bar{\fr}$ for its reverse reaction satisfying 
$\feducts(\bar{\fr})=\fproducts(\fr)$ and $\fproducts(\bar{\fr})=\feducts(\fr)$. Similarly, for 
a set of F-reactions $
  \RAF' = \{\fr_1,\fr_2,\ldots,\fr_k\}$, we write $\bar{\FR}'' = \{\bar{\fr}_1,\bar{\fr}_2,\ldots,\bar{\fr}_k\}$ 
  for the set of the reverse F-reactions of $\RAF'$. We then state the following result.
\vspace{6pt}
\begin{lemma}\label{lem:noReverseIrrRAF}
Let $(X,\FR,\cat(\FR))$ be a CRS with food-set $\food\subseteq X$, and $\RAF \subset \FR$ 
an irreducible RAF with $|\RAF|>1$. For any subset $\RAF' \subseteq \RAF$, the set 
$\widetilde{\RAF}:=(\RAF \setminus \RAF') \cup \bar{\FR}''$ is not an irreducible RAF.
\end{lemma}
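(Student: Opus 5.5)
We argue by contradiction. Assume that for some subset $\RAF'\subseteq\RAF$ the set $\widetilde{\RAF}=(\RAF\setminus\RAF')\cup\bar{\FR}''$ is an irreducible RAF. First we dispose of the degenerate choices. For $\RAF'=\emptyset$ we get $\widetilde{\RAF}=\RAF$, so the statement can only be meant for nonempty $\RAF'$; we read it that way. The case $\RAF'=\RAF$, where every F-reaction is reversed, is treated separately at the end. For $\emptyset\neq\RAF'\subsetneq\RAF$, the aim is to show that the common part $\RAF\setminus\RAF'$ is already a RAF. This is a proper subset of $\RAF$, which contradicts the irreducibility of $\RAF$ (Def.~\ref{def:irrerRAF}).

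The key step compares closures through the generation sequences \eqref{eq:speciesSubSet}.
\begin{itemize}
\item Build the sequences $(X'_i)$ for $\RAF$ and $(\widetilde X'_i)$ for $\widetilde{\RAF}$, both starting from $\food$.
\item Pick $\fr\in\RAF'$ with minimal $\rgen(\fr)$ in $\RAF$, and $\bar{\fr}_*\in\bar{\FR}''$ with minimal $\rgen(\bar{\fr}_*)$ in $\widetilde{\RAF}$.
\item Below these thresholds both sequences are generated by F-reactions of $\RAF\setminus\RAF'$ alone, so they coincide.
\item For $\bar{\fr}_*$ to fire in $\widetilde{\RAF}$, its F-reactants $\fproducts(\fr_*)$ must already lie in $\closure{\RAF\setminus\RAF'}{\food}$.
\end{itemize}
Inducting over the generation index of $\widetilde{\RAF}$, we then show that every F-reactant of every reversed F-reaction lies in $\closure{\RAF\setminus\RAF'}{\food}$. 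The symmetric argument gives the same for every F-reactant of every $\fr\in\RAF'$. Once both sides of each $\fr\in\RAF'$ lie in $\closure{\RAF\setminus\RAF'}{\food}$, we obtain
\[
\closure{\RAF}{\food}=\closure{\RAF\setminus\RAF'}{\food}=\closure{\widetilde{\RAF}}{\food}.
\]
This is the same mechanism as in the proof of Lemma~\ref{lem:OneFReacPerRAF}, case \eqref{eq:inverse}. From this equality of closures, $\RAF\setminus\RAF'$ is $\food$-generated. It is also reflexively autocatalytic, because the catalyzations that worked for its F-reactions inside $\closure{\RAF}{\food}$ still lie in the same closure.

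For $\RAF'=\RAF$ there is no common part, so we argue inside $\widetilde{\RAF}=\bar{\FR}''$ instead. Take $\bar{\fr}\in\widetilde{\RAF}$ of first generation; its F-reactants $\fproducts(\fr)$ lie in $\food$. Then in $\RAF$ the F-reaction $\fr$ produces only food species. Such an $\fr$ contributes nothing to $\closure{\RAF}{\food}$, so $\RAF\setminus\{\fr\}$ is a RAF. Since $|\RAF|>1$, this set is nonempty, which contradicts irreducibility.

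The main obstacle is the induction in the key step. The difficulty is showing that the two sequences, seeded by different F-reactions, generate each other's F-reactants using only $\RAF\setminus\RAF'$. My first attempt will take the minimal-generation element on each side, as above, and compare the two generation indices. If this comparison fails for general mixed subsets $\RAF'$, I expect to need the irreducibility of $\widetilde{\RAF}$ as well. I would then try to use it to show that any F-reaction which is not needed to generate a catalyst or F-reactant is removable from $\widetilde{\RAF}$.
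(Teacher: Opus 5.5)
Your observation that $\RAF'=\emptyset$ must be excluded is correct: there $\widetilde{\RAF}=\RAF$. The paper's $\operatorname{argmin}$ over the reversed F-reactions tacitly assumes that set is nonempty.

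The gap is the induction you announce after your key-step bullets. Those bullets are correct. The induction is meant to show that every F-reactant of every reversed F-reaction, and symmetrically of every $\fr\in\RAF'$, lies in $\closure{\RAF\setminus\RAF'}{\food}$. But later reversed F-reactions can consume products of earlier reversed ones, i.e. F-reactants of members of $\RAF'$. So your two ``symmetric'' inductions depend on each other circularly. Moreover, neither invokes the irreducibility of $\RAF$, and without it the claim is false.

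Take $\food=\{a\}$ and $\fr_1\colon a\to b$, $\fr_2\colon b\to c$, $\fr_3\colon a\to c$, all catalyzed by $a$. Let $\RAF=\{\fr_1,\fr_2,\fr_3\}$, which is a RAF but not irreducible, and $\RAF'=\{\fr_1,\fr_2\}$. Then $\closure{\RAF\setminus\RAF'}{\food}=\{a,c\}$, and $\widetilde{\RAF}=\{\fr_3,\bar{\fr}_1,\bar{\fr}_2\}$ is a RAF. Yet the F-reactant $b$ of $\bar{\fr}_1$ is produced only by $\bar{\fr}_2$ and is not in $\{a,c\}$.

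When $\RAF$ is irreducible, the hypotheses are already contradictory after your first step. Your stronger target, that all of $\RAF\setminus\RAF'$ is a RAF, therefore holds only vacuously and cannot be reached by the induction you outline. Your fallback of using irreducibility of $\widetilde{\RAF}$ points to the wrong hypothesis.

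The missing idea is that you never need to discard all of $\RAF'$; discarding one F-reaction already contradicts irreducibility. Your own bullet gives
\[
\fproducts(\fr_*)=\feducts(\bar{\fr}_*)\subseteq\closure{\RAF\setminus\RAF'}{\food}\subseteq\closure{\RAF\setminus\{\fr_*\}}{\food}.
\]
So the last closure is also closed under $\fr_*$ and hence equals $\closure{\RAF}{\food}$. Consequently $\RAF\setminus\{\fr_*\}$ is $\food$-generated and reflexively autocatalytic, and it is nonempty because $|\RAF|>1$. This contradicts the irreducibility of $\RAF$.

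This is exactly your argument for the case $\RAF'=\RAF$, and it is the paper's proof. It treats every nonempty $\RAF'$ uniformly. It uses only that $\widetilde{\RAF}$ is $\food$-generated, so that some reversed F-reaction fires, and never the irreducibility of $\widetilde{\RAF}$. In the example above, the first reversed F-reaction to fire is $\bar{\fr}_2$, and indeed $\RAF\setminus\{\fr_2\}$ is a RAF.
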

\begin{proof}
Indirectly assume that $\widetilde{\RAF}$ is an irreducible RAF and pick an F-reaction $\bar{\fr}_i$ as
\[\bar{\fr}_i\in
\operatorname{argmin}_{\bar{\fr}\in\bar{\FR}''}\gamma(\bar{\fr}),\]
where $\gamma$ is the generation order defined in \eqref{def:genreactionindex}.
By construction, $\rho_F(\bar{\fr}_i)\subseteq \closure{\RAF \setminus \RAF'}{\food}$, 
and thus $\pi_F(\fr_i)\subseteq \closure{\RAF \setminus \RAF'}{\food}$. Consequently, 
$\closure{\RAF}{\food}=\closure{\RAF \setminus \{\fr_i\}}{\food}$. Since $|\RAF|>1$, 
$\RAF\setminus\{\fr_i\}$ is nonempty and thus a RAF, contradicting the irreducibility of $\RAF$.
\end{proof}

In words, given an irreducible RAF $\RAF$, there is no subset of F-reactions $\RAF' \subseteq \RAF$ 
such that when all F-reactions in the subset $\RAF'$ are reversed, the resulting set is also an irreducible 
RAF. The only admissible exception is an irreducible RAF consisting of a single reaction where all 
F-reactants, F-products, and at least one catalyst are all in the food set $\food$.

Since a RAF is defined in Def.~\ref{def:RAF} as a subset of F-reactions,
Def.~\ref{def:irrerRAF} imposes a minimality condition only on the set of
F-reactions. However, motivated by the fact that in the CRN-setting
different catalyzations of the same F-reaction correspond to different
reactions, we also consider irreducible RAFs $\RAF$ with a minimal number
of catalyzations while conserving irreducibility.

For any $\food$-generated set $\RAF$, Lemma~\ref{lem:ArbCat} shows indeed
that infinitely many catalyzations exist that render $\RAF$ a
RAF. Complementarily, we address which catalyzation subsets
$\cat'(\FR)\subset \cat(\FR)$ preserve the fact that an $\food$-generated
set $\RAF$, which is an irreducible RAF in the CRS $(X,\FR, \cat(\FR))$, is
still an irreducible RAF in the CRS $(X,\FR, \cat'(\FR))$.  The following
Lemma shows that $\cat'(\FR)$ can always be chosen such that
$|\cat'(\fr)|=1$ for all F-reactions $\fr\in \RAF$.

\vspace{6pt}
\begin{lemma}\label{lem:mono}
  Let $\mathbf{\Phi}\coloneqq (X,\FR,\cat(\FR))$ be a CRS with specified
  \emph{food set} $\food\subseteq X$ and let $\RAF$ be an irreducible
  RAF. Then $\RAF$ is also an irreducible RAF in any CRS
  $(X, \FR,\cat'(\FR))$ where the catalyzations
  $\cat'(\FR)\subseteq\cat(\FR)$ satisfy:
  \begin{equation}\label{eq:casesmono}\cat'(\fr)=\begin{cases}
    \{c\}\text{ with $c\subseteq \closure{\RAF}{\food}$}\quad
    &\text{if $\fr\in \RAF$,}\\ 
    \cat(\fr)&\text{otherwise.}
  \end{cases}, 
\end{equation}
\end{lemma}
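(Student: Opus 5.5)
The argument has two parts: first check that $\RAF$ is still a RAF under the restricted catalyzations $\cat'(\FR)$, then check that it is still irreducible. Throughout, I use that neither the closure $\closure{\RAF}{\food}$ in \eqref{eq:Closure} nor property ii.\ of Def.~\ref{def:RAF} depends on catalyzations. They depend only on $\food$ and on the F-reactants and F-products of the F-reactions. So passing from $\cat(\FR)$ to $\cat'(\FR)$ leaves $\closure{\RAF'}{\food}$ unchanged for every $\RAF'\subseteq\FR$, and leaves unchanged which subsets are $\food$-generated.

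\emph{Step 1 ($\RAF$ is a RAF in the new CRS).} The set $\RAF$ is $\food$-generated in $(X,\FR,\cat(\FR))$, so it is $\food$-generated in $(X,\FR,\cat'(\FR))$. For reflexive autocatalysis, take $\fr\in\RAF$. By \eqref{eq:casesmono}, $\cat'(\fr)=\{c\}$ with $c\subseteq\closure{\RAF}{\food}$, so property i.\ holds directly.

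A remark on well-definedness: such a $c$ with $c\in\cat(\fr)$ exists because $\RAF$ is RA in the original CRS. Hence $\cat'(\FR)\subseteq\cat(\FR)$ is an admissible choice.

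\emph{Step 2 (irreducibility).} Suppose some nonempty proper subset $\RAF'\subset\RAF$ were a RAF in $(X,\FR,\cat'(\FR))$. I will show $\RAF'$ is then a RAF in the original CRS, contradicting Def.~\ref{def:irrerRAF}.
\begin{itemize}
\item Property ii.\ transfers unchanged, by the catalyzation-independence noted above.
\item For property i., each $\fr\in\RAF'$ has some $c\in\cat'(\fr)$ with $c\subseteq\closure{\RAF'}{\food}$. Since $\cat'(\fr)\subseteq\cat(\fr)$, this same $c$ witnesses property i.\ in the original CRS.
\end{itemize}
The underlying principle is monotonicity: shrinking the catalyzation sets can only destroy RAF-ness of subsets, never create it.

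There is no real obstacle here. The only point needing care is the observation, stated before Lemma~\ref{lem:ArbCat}, that the closure is independent of $\cat$. One should also note that F-reactions outside $\RAF$ keep their catalyzations, which is irrelevant since only subsets of $\RAF$ are tested.
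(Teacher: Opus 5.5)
Your proposal is correct and follows the paper's proof: reflexive autocatalysis of $\RAF$ comes directly from the first case of \eqref{eq:casesmono}, $\food$-generation does not depend on catalyzations, and since $\cat'(\fr)\subseteq\cat(\fr)$ any proper subset of $\RAF$ that is a RAF under $\cat'$ is already a RAF under $\cat$, contradicting irreducibility. Your remark that a suitable $c\in\cat(\fr)$ with $c\subseteq\closure{\RAF}{\food}$ exists because $\RAF$ is RA in the original CRS is a small point the paper leaves implicit.
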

\begin{proof}
  The first case in \eqref{eq:casesmono} guarantees that $\RAF$ is
  Reflexively Autocatalytic in $(X, \FR,\cat'(\FR))$. Since
  $\cat'(\FR)\subseteq\cat(\FR)$, any subset of $\RAF$ that is a RAF in
  $(X,\FR,\cat'(\FR))$ is also a RAF in $(X,\FR,\cat(\FR))$, while the
  $\food$-generated property is independent of the catalyzations. Therefore, $\RAF$ is an irreducible RAF in
  $(X, \FR,\cat'(\FR))$.
\end{proof}

Lemma~\ref{lem:mono} suggests introducing the following notion.
\vspace{6pt}
\begin{definition}
  Let $(X,\FR,\cat(\FR))$ be a CRS with \emph{food set}
  $\food\subseteq X$. We call $\RAF$ a \emph{monocatalyzed} RAF if it is a
  RAF and it holds $|\cat(\fr)|=1$ for all F-reactions $\fr \in \RAF$.
\end{definition}

A monocatalyzed RAF may include, nevertheless, F-reactions
$\fr$, whose unique catalyzation $c$ is a set of multiple pairs
$(x_i, k^c_{x_i\fr})$, i.e.
$c=\{((x_1, k^c_{x_1\fr}),..., (x_n, k^c_{x_n\fr})\}$. Lemma \ref{lem:mono}
implies that, for a CRS $\mathbf{\Phi}=(X,\FR,\cat(\FR))$ with irreducible
RAF $\RAF\subseteq\FR$, there exists potentially exponentially many sub-CRS $\Phi'_i$
for which $\RAF$ is a monocatalyzed irreducible RAF. This is formalized in
the following lemma.

\vspace{6pt}
\begin{lemma}\label{lem:ExpoSubCRS}
  Let $\mathbf{\Phi}=(X,\FR,\cat(\FR))$ be a CRS with specified \emph{food
    set} $\food\subseteq X$ and $\RAF$ be an irreducible RAF. Then there
  exist at least $N$ distinct sub-CRS
  $\mathbf{\Phi}'_1,\ldots, \mathbf{\Phi}'_N$ with
  \begin{equation} \label{eq:ExpoNum} N := \prod_{\fr\in \RAF}\ \vert \{ c
    \in\cat(\fr) \; \mid\; c \subseteq \closure{\RAF}{\food} \vert,
  \end{equation}
  such that for $i=1,\ldots,N$ it holds that $\RAF$ is a monocatalyzed
  irreducible RAF in $\mathbf{\Phi}_i$.
\end{lemma}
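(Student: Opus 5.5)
The plan is to build the sub-CRS explicitly, one for each way of choosing a single admissible catalyzation per F-reaction of $\RAF$, and to let Lemma~\ref{lem:mono} do all the work. For each $\fr\in\RAF$ set
\[
A(\fr)\coloneqq\{c\in\cat(\fr)\mid c\subseteq\closure{\RAF}{\food}\}.
\]
Since $\RAF$ is a RAF, property~i.\ of Def.~\ref{def:RAF} gives $A(\fr)\neq\emptyset$ for every $\fr\in\RAF$. Hence the Cartesian product $\prod_{\fr\in\RAF}A(\fr)$ is nonempty and has exactly $N$ elements, which is the number in \eqref{eq:ExpoNum}.

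For each choice function $\phi\in\prod_{\fr\in\RAF}A(\fr)$, define catalyzations $\cat^\phi(\FR)$ by
\[
\cat^\phi(\fr)=\{\phi(\fr)\}\ \text{for } \fr\in\RAF,\qquad \cat^\phi(\fr)=\cat(\fr)\ \text{otherwise}.
\]
Set $\mathbf{\Phi}'_\phi\coloneqq(X,\FR,\cat^\phi(\FR))$.
\begin{itemize}
\item Each $\mathbf{\Phi}'_\phi$ is a sub-CRS of $\mathbf{\Phi}$, because $\cat^\phi(\FR)\subseteq\cat(\FR)$.
\item The choice $\cat^\phi$ is exactly of the form \eqref{eq:casesmono}. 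Lemma~\ref{lem:mono} therefore gives that $\RAF$ is an irreducible RAF in $\mathbf{\Phi}'_\phi$.
\item By construction $|\cat^\phi(\fr)|=1$ for every $\fr\in\RAF$, so $\RAF$ is monocatalyzed in $\mathbf{\Phi}'_\phi$.
\end{itemize}

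It remains to show that distinct $\phi\neq\psi$ give distinct sub-CRS. If $\phi(\fr)\neq\psi(\fr)$ for some $\fr\in\RAF$, then $\cat^\phi(\fr)=\{\phi(\fr)\}\neq\{\psi(\fr)\}=\cat^\psi(\fr)$. So the families of catalyzations differ, and hence so do the sub-CRS. This yields at least $N$ distinct sub-CRS with the required property. The phrase ``at least'' is justified because further sub-CRS can be obtained by also shrinking $X$ or $\FR\setminus\RAF$, although that is not needed.

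The only step needing care is the distinctness bookkeeping. Since $\cat(\FR)=\{\cat(\fr)\}_{\fr\in\FR}$ is indexed by F-reactions, two sub-CRS should be compared F-reaction by F-reaction rather than as unindexed unions of catalyzations. I would make this indexing explicit when writing the proof.
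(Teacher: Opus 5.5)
Your proof is correct and follows the paper's own argument: you fix one admissible catalyzation $c\subseteq\closure{\RAF}{\food}$ per F-reaction of $\RAF$, invoke Lemma~\ref{lem:mono}, and count the $N$ choice functions. Your extension $\cat^\phi(\fr)=\cat(\fr)$ off $\RAF$ matches that lemma's hypothesis more literally than the paper's ``arbitrary'' extension, and your injectivity check makes explicit what the paper leaves implicit. The one inaccuracy is your aside on ``at least'': further sub-CRS need not exist in general (e.g.\ if $\FR=\RAF$ and no species of $X$ can be dropped), but a lower bound needs no such justification anyway.
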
 
\begin{proof}
  Lemma \ref{lem:mono} implies that there are at least
  $N$ distinct combinations of
  possible monocatalyzations for all F-reactions in $\RAF$. Any such
  monocatalyzation $\cat'(\RAF)$ provides a sub-CRS
  $\mathbf{\Phi}_i'=(X,\FR,\cat'(\FR))$, where
  $\cat'(\RAF)\subseteq\cat(\FR)$ is arbitrarily extended to $\FR$, for
  which $\RAF$ is a monocatalyzed irreducible RAF.
\end{proof}

Finally, we underline that irreducibility is a property that is of interest
only for RAFs that are not CAFs: \vspace{6pt}
\begin{proposition}\label{prop:Caftrivial}
  Any irreducible RAF $\RAF$ that is itself a CAF is a single F-reaction
  $\RAF=\{\fr\}$.
\end{proposition}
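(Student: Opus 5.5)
Assume $\RAF$ is an irreducible RAF that is also a CAF, and suppose $|\RAF|\ge 2$; the plan is to exhibit a proper nonempty RAF inside $\RAF$. I would take the F-reactions of minimal generation, $\RAF_1\coloneqq\{\fr\in\RAF \mid \rgen(\fr)=1\}$. This set is nonempty, since the closure construction \eqref{eq:speciesSubSet} starts from $X'_0=\food$ and $\RAF$ is $\food$-generated.

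The first step is to show that any single $\fr\in\RAF_1$ already forms a RAF $\{\fr\}$. Property ii.\ holds because $\rgen(\fr)=1$ means $\feducts(\fr)\subseteq X'_0=\food\subseteq\closure{\{\fr\}}{\food}$. For Property i., the CAF condition of Def.~\ref{def:CAF} gives a catalyzation $c\in\cat(\fr)$ all of whose F-catalysts $x$ satisfy $\sgen(x)<\rgen(\fr)=1$, i.e.\ $\sgen(x)=0$. By the convention $\sgen(x)=0$ exactly for food species (non-food species have index $\ge 1$), this gives $c\subseteq\food\subseteq\closure{\{\fr\}}{\food}$. Hence $\{\fr\}$ is Reflexively Autocatalytic and $\food$-generated, so it is a RAF.

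The second step is the conclusion. If $|\RAF|\ge2$, then $\{\fr\}\subset\RAF$ is a proper sub-RAF, which contradicts Def.~\ref{def:irrerRAF}. Therefore $\RAF=\{\fr\}$.

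The main point needing care is the generation indices. One must check that the $\sgen$ and $\rgen$ used in the CAF definition are those computed with respect to $\RAF$ itself, and that non-food species really have $\sgen\ge1$. The latter follows since $X'_0=\food$ and $\sgen(x)=\min\{i\mid x\in X'_i\}$. With these conventions the strict inequality $\sgen(x)<1$ forces the catalysts into $\food$. Since food species lie in every closure, no dependence on the surrounding set $\RAF$ remains.
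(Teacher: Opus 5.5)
Your proof is correct and follows essentially the same route as the paper: pick an F-reaction $\fr\in\RAF$ with $\rgen(\fr)=1$, use the CAF condition to place its F-reactants and one catalyzation in $\food$, conclude that $\{\fr\}$ is itself a RAF, and invoke irreducibility. You additionally justify two points the paper leaves implicit: that such a generation-one F-reaction exists, and that $\sgen(x)<1$ forces $x\in\food$.
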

\begin{proof}
  Let $\RAF$ be an irreducible RAF that is a CAF and consider any
  F-reaction $\fr\in \RAF$, which satisfies $\gamma(\fr)=1$. By
  Def.~\ref{def:CAF} and of generation reaction-index $\gamma$
  (Eq.~\eqref{def:genreactionindex}), it follows that all F-reactants
  $\rho(\fr)$ and at least one catalyzation $c\in \cat(\fr)$ are in the
  food set $\food$. Thus $\fr$ is a CAF (and RAF) itself, and
  $\RAF=\{\fr\}$ by irreducibility of $\RAF$.
\end{proof}

In turn, Example \eqref{eq:ExReacSeq} shows that we may have
single-reaction irreducible RAFs, which are not CAF. Due to
Prop.~\ref{prop:Caftrivial}, throughout the paper we exclude the trivial
CAF case whenever making statements about irreducible RAFs.  To keep the
exposition concise, we assume this exclusion throughout and do not repeat
it explicitly in each statement.

\subsection{The K\H{o}nig Graph of Monocatalyzed, Irreducible
  RAFs}\label{subsec:King}

Let $\mathbf{\Phi}=(X,\FR,\cat(\FR))$ be a CRS with food set
$\food\subseteq X$. We consider now the sub-CRS
$\mathbf{\Phi}'\subseteq \mathbf{\Phi}$ induced by a monocatalyzed
irreducible RAF $\RAF$, i.e.
\begin{equation}\label{eq:RAF-CRS}
  \mathbf{\Phi}' \coloneqq (\closure{\RAF}{\food}, \RAF, \cat'(\RAF)),
\end{equation}
where $|\cat'(\fr)|=1$ for each F-reaction $\fr\in \RAF$. In the following,
without loss of generality, we assume that each food species $x\in\food$
participates as an F-reactant, F-product, or F-catalyst in at least one
F-reaction $\fr\in\RAF$. This excludes isolated food species for
simplicity, as they play no role in the subsequent constructions and
statements.

Consider now the associated CRN $\Gamma'\coloneqq
\Gamma(\mathbf{\Phi}')$. One central advantage of considering such
monocatalyzed CRSs $\mathbf{\Phi}'$ is that the quotient map from
$\Gamma(\mathbf{\Phi}')$ to $\mathbf{\Phi}'$ is a bijection. Indeed, the
quotient construction becomes trivial, since every equivalence class,
i.e. every F-reaction $\fr$, is a singleton $\fr=\{r\}$. Thus, the quotient
CRS can be canonically identified with the underlying CRN and we can
therefore naturally address properties of $\mathbf{\Phi}'$ by studying them
in $\Gamma'$. To do so, we use the notation $\mathcal{R}'$ to identify the
CRN reactions of $\Gamma'$,
i.e. $\Gamma'=(\closure{\RAF}{\food},\mathcal{R}')$. We repeat that each of
such reactions $r\in \mathcal{R}'$, written as \eqref{eq:Mod.SmEq}, is in
canonical bijection with an F-reaction $\fr\in \RAF$, written as
\eqref{eq:FReacEq}, which is catalyzed by the singleton $\cat(\fr)=\{c\}$
with $c=\cat(r)$.

We may consider now the K{\H{o}}nig graph $K(\RAF)=K(\Gamma')$ of the
monocatalyzed RAF $\RAF$ simply as the one of its associated CRN
$\Gamma'=(\closure{\RAF}{\food},\mathcal{R}')$. This is the bipartite graph
$\king(\Gamma') \coloneqq (\closure{\RAF}{\food} \cup \CR', E)$, whose edge
set $E\coloneqq E_1\cup E_2$, is partitioned into two distinct types of
directed edges given as
\begin{equation}
  \begin{cases} (x,r) \ | \ x\in \closure{\RAF}{\food},
    r\in \CR': s^-_{xr}>0 \\
    (r, x) \ | \ x\in \closure{\RAF}{\food},
    r\in \CR': s^+_{xr}>0.
  \end{cases}
  \label{eq:Koenig}
\end{equation}
The K\H{o}nig graph emerged under various names in CRN theory, as the
\emph{species-reaction} (SR) graph \cite{craciun_multiple_2006} and
\emph{Petri nets} \cite{angeli_petri_2007}. We note that a K{\H{o}}nig
  graph can be constructed from any reaction network
  $\Gamma=(X,\mathcal{R})$ in the same manner, using $x\in X$ and
  $r\in\mathcal{R}$ in Eq.~\eqref{eq:Koenig}.

A \emph{directed path} $\mathcal{P}$ in $K(\RAF)$ is an ordered sequence of
vertices $\mathcal{P}=(v_1,\ldots, v_n)$, starting at $v_1$ and terminating
at $v_n$, such that $(v_i,v_{i+1})\in E(\king(\RAF)), 1\leq i\leq n-1$, and
no vertex is repeated. We say that such $\mathcal{P}$ is a path from $v_1$
to $v_n$.  $\king(\RAF)$ is \emph{strongly connected} if, for every pair of
vertices $v,w\in (X\cup\mathcal{R})$ there is a directed path from $v$ to
$w$. If $\king(\RAF)$ is not strongly connected, the \emph{strongly
  connected component} containing a vertex $v_1$ is the maximum set of
vertices $V_1$ with $v_1\in V_1$ such that for any pair of vertices $v$ and
$w$ in $V_1$ there is a directed path from $v$ to $w$.

We note that, for a species $x_f \in \food$ in the food set, there may be
no directed edge $(r,x_f)$ at all.  Complementarily, we introduce the set
$\mathbf{W}_{\RAF}$ of \emph{waste species} (with respect to $\RAF$) as the
species $x_w$ for which there is no directed edges $(x_w,r)\in
\king(\RAF)$. Equivalently, in symbols:
\begin{equation}\label{eq:Waste}
  \mathbf{W}_{\RAF}\coloneqq \{x\in \closure{\RAF}{\food} \ | \
  \nexists r \in \CR' \text{ s.t. } s^-_{xr}>0\}.
\end{equation}
We now have the following statement:
\vspace{6pt}
\begin{lemma}[Strong connectivity]\label{lem:PSC}
  Let $\RAF$ be an irreducible, monocatalyzed RAF, and let $v_1,v_2$ be two
  vertices of $\king(\RAF)$, which are not in the food and waste sets, i.e.
  \begin{equation}
    v_1,v_2\in (\closure{\RAF}{\food} \cup \mathcal{R}') \setminus
    (\food \cup\mathbf{W}_{\RAF}).
  \end{equation} 
  Then there is a directed path $\mathcal{P}$ from $v_1$ to $v_2$ such that
  $\mathcal{P}\cap \food =\emptyset$.
\end{lemma}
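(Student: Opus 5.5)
The plan is a reachability argument combined with irreducibility: we show that the reactions \emph{not} reachable from $v_1$ would form a smaller RAF. Throughout, I use that in $\king(\RAF)$ an F-catalyst $x$ of $r$ yields the edges $(x,r)$ and $(r,x)$, since $s^\pm_{xr}=\sigma^\pm_{xr}+k_{xr}>0$. Thus every F-reactant and every F-catalyst of $r$ is an in-neighbour of $r$, and every F-product is an out-neighbour.

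Fix $v_1$ as in the statement. Let $D(v_1)$ be the set of vertices reachable from $v_1$ by directed walks whose vertices all lie outside $\food$. A shortest such walk is a path, so reachability by walks suffices. Put $\mathcal{R}_{v_1}\coloneqq D(v_1)\cap\mathcal{R}'$ and $T\coloneqq \mathcal{R}'\setminus \mathcal{R}_{v_1}$. First I check that $\mathcal{R}_{v_1}\neq\emptyset$. If $v_1$ is a reaction this is immediate. If $v_1$ is a species, then $v_1\notin\mathbf{W}_{\RAF}$ gives an edge $(v_1,r)$, so $r\in D(v_1)$. The core claim is $T=\emptyset$. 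Suppose not; then $T$ is a nonempty proper subset of $\mathcal{R}'$, identified with a subset of $\RAF$ via the canonical bijection, and I show that $T$ is a RAF, contradicting Def.~\ref{def:irrerRAF}.

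The key step is to prove that every non-food species $x\in\closure{\RAF}{\food}\setminus D(v_1)$ lies in $\closure{T}{\food}$. I argue by induction on the generation index $\sgen(x)$. Such an $x$ with $\sgen(x)=i\ge1$ is an F-product of some reaction $r'$ whose F-reactants lie in $X'_{i-1}$.
\begin{itemize}
\item We have $r'\notin D(v_1)$, since otherwise the edge $(r',x)$ would put $x$ in $D(v_1)$; hence $r'\in T$.
\item Each F-reactant $y$ of $r'$ is either food, or a non-food species with $\sgen(y)<i$ and $y\notin D(v_1)$, since otherwise the edge $(y,r')$ would put $r'$ in $D(v_1)$.
\end{itemize}
By the induction hypothesis all F-reactants of $r'$ lie in $\closure{T}{\food}$, hence so does $x$.

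With this in hand, take any $r\in T$ and any F-reactant or F-catalyst $y$ of $r$. Then $y$ is food or $y\notin D(v_1)$, because otherwise the edge $(y,r)$ would give $r\in D(v_1)$. So $y\in\closure{T}{\food}$. This shows both property ii.\ ($\food$-generated) and property i.\ (the unique catalyzation lies in $\closure{T}{\food}$) of Def.~\ref{def:RAF}. Hence $T$ is a RAF, the desired contradiction, and therefore $\mathcal{R}'\subseteq D(v_1)$.

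It remains to reach $v_2$. If $v_2$ is a reaction, we are done. If $v_2$ is a species, then $v_2\in\closure{\RAF}{\food}\setminus\food$, so it is an F-product of some reaction $r$. Since $r\in D(v_1)$ and $v_2\notin\food$, appending the edge $(r,v_2)$ gives $v_2\in D(v_1)$. Passing to a shortest walk yields the required directed path, which avoids $\food$ because all vertices of $D(v_1)$ are non-food, and $v_1,v_2\notin\food$.

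I expect the main obstacle to be the generation-index induction. In particular one must ensure that catalyst edges are counted as in-edges of reactions in the König graph, which is exactly what makes $T$ closed under the catalysts it needs, and that the case $v_1=v_2$ (trivial path) is handled.
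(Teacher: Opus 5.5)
Your proof is correct and follows the paper's strategy: the reactions not reachable from $v_1$ along food-avoiding paths would form a nonempty, strictly smaller RAF, contradicting irreducibility. Relative to the paper, you add the explicit generation-index induction that the paper only asserts, and you build food-avoidance into $D(v_1)$ from the start while proving that \emph{all} reactions are reachable, which removes the reduction to a reaction vertex and makes the nonemptiness of the candidate sub-RAF explicit.
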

\begin{proof}
  Indirectly assume that there is no path $\mathcal{P}$ from $v_1$ to
  $v_2$. If $v_1$ is a species-vertex, $v_1=x_1$, then
  $v_1\not\in\mathbf{W}_{\RAF}$, and let $r_1$ be any reaction vertex such
  that $s^-_{x_1r_1}>0$. By indirect assumption, there is no path from
  $r_1$ to $v_2$. Thus, we may always reduce to the case of $v_1=r_1$ being
  a reaction vertex. The reachable set of \emph{descendant vertices} from a
  vertex $u$ is denoted as
  \begin{equation}
    \mathbf{D}(v_1) \coloneqq \{v \in V \ \vert \
    \text{there exists path } \mathcal{P}=(u, \ldots, v)\}.
  \end{equation}
  Consider now $\RAF_1$ as the set of F-reactions in canonical bijection
  with the reactions
  $\mathcal{R}'_1\coloneqq \mathcal{R}' \setminus \mathbf{D}(r_1)$. Our
  indirect assumption implies that $v_2\not\in\mathbf{D}(r_1)$, and thus it
  follows that, for any F-reaction $\fr\in \RAF_1$,
  $\cat'(\fr)\cup \rho_F(\fr)\subseteq \closure{\RAF_1}{\food}$. In
  particular, $\RAF_1$ satisfies the RAF Def.~\ref{def:RAF}, which
  contradicts the irreducibility of $\RAF$: $\RAF_1\subset \RAF$ is a
  strict subset of $\RAF$ since it does not contain
  $\mathbf{r}_1$. Finally,
  $\kappa'_F(\fr)\cup \rho_F(\fr)\subseteq \closure{\RAF_1}{\food}$ holds
  for any $\fr\in \RAF_1$ also whenever all paths $\mathcal{P}$ from $v_1$
  to $v_2$ are such that $\mathcal{P}\cap \food \neq \emptyset$, thus the
  full statement follows again by contradiction.
\end{proof}

Lemma \ref{lem:PSC} is essentially a strong-connectivity result on
$K(\RAF)$ without food and waste species.  However, strongly connected
graphs may still have \emph{cut-vertices}, namely, vertices whose removal
disconnects the underlying undirected graph (see the glossary on graph
theory SI Sec.~\ref{sec:Glossary}). The example in
Fig.~\ref{ex:CutVertex2} captures this situation, as each reaction of the
associated CRN of an irreducible, monocatalyzed RAF is a cut-vertex in the
bipartite K\H{o}nig graph $\king(\RAF)$. The next result shows that,
indeed, for $\king(\RAF)$ cut-vertices can only be reaction-vertices
$r\in \mathcal{R}'$.

\vspace{6pt}
\begin{lemma}[Cut vertices]\label{lem:cutvert}
  Let $\RAF$ be an irreducible, monocatalyzed RAF. Every cut-vertex of
  $\king(\RAF)$ is a reaction $r\in \mathcal{R}'$.
\end{lemma}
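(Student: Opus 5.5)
We argue by contradiction, showing directly that no species vertex can be a cut-vertex; this works uniformly for food, waste and internal species, so no case split on the type of $x$ is needed. Suppose a species $x\in\closure{\RAF}{\food}$ is a cut-vertex of $\king(\RAF)$. Then the underlying undirected graph of $\king(\RAF)-x$ has at least two connected components. We group them into two nonempty vertex sets $C_1$ and $C_2$ with no edges between them. Let $\CR'_i\coloneqq C_i\cap\CR'$, and let $\RAF_i$ be the corresponding F-reactions under the canonical bijection.

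First we check that $\CR'_1$ and $\CR'_2$ are both nonempty. Since $\king(\RAF)$ is bipartite, a species vertex $y\neq x$ is adjacent only to reaction vertices. Moreover, $y$ is not isolated in $\king(\RAF)$:
\begin{itemize}
\item a food species participates in some reaction by our standing assumption, and catalysts appear with $s^-_{yr}>0$, so they also give edges;
\item a non-food species of $\closure{\RAF}{\food}$ is produced by some reaction.
\end{itemize}
Hence every component of $\king(\RAF)-x$ contains a reaction vertex. The key locality observation is this: for $r\in\CR'_i$, all F-reactants and F-catalysts of $r$ are neighbours of $r$ and so lie in $C_i\cup\{x\}$. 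Likewise, every reaction producing a species $y\in C_i$ lies in $\CR'_i$.

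Next we choose a side. If $x\notin\food$, pick a reaction $r^*$ of minimal generation index $\gamma$ among those with $x\in\fproducts(r^*)$, so that $\gamma(r^*)\le \sgen(x)$. Since $r^*$ is adjacent to $x$, it lies in some $\CR'_i$; relabel so that $r^*\in\CR'_1$. If $x\in\food$, take either side.

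We then prove by induction on the generation species-index $\sgen$ that every species in $(C_1\cup\{x\})\cap\closure{\RAF}{\food}$ lies in $\closure{\RAF_1}{\food}$.
\begin{itemize}
\item Food species lie in the closure trivially.
\item For a non-food $y\in C_1$, take a reaction $r$ with $y\in\fproducts(r)$ and $\gamma(r)\le\sgen(y)$. By locality $r\in\CR'_1$, and its F-reactants lie in $C_1\cup\{x\}$ and have generation strictly smaller than $\sgen(y)$. So they belong to $\closure{\RAF_1}{\food}$ by induction, and therefore so does $y$.
\item For $y=x$ non-food, the same argument applies using $r^*$. This is precisely why the side containing the earliest producer of $x$ was chosen.
\end{itemize}
Consequently every $\fr\in\RAF_1$ has its F-reactants and its unique catalyzation inside $C_1\cup\{x\}\subseteq\closure{\RAF_1}{\food}$. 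Thus $\RAF_1$ is a RAF by Def.~\ref{def:RAF}. It is a proper subset of $\RAF$ because $\CR'_2\neq\emptyset$, which contradicts Def.~\ref{def:irrerRAF}. Hence every cut-vertex is a reaction vertex.

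The main obstacle is the possibly non-food species $x$ itself: it may be needed by reactions on both sides, and it may be produced on both sides. The induction closes only if $x$ is generated on the side we keep. The generation-order choice of $r^*$ resolves this, but it requires the inequality $\gamma(r^*)\le \sgen(x)$, which follows from the definition of the sets $X'_i$ in Eq.~\eqref{eq:speciesSubSet}. A minor technical point is ensuring both sides contain reactions, which relies on the no-isolated-food convention.
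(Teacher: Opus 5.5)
Your proof is correct and follows the same route as the paper: remove the species $x$, observe that no species in one component is an F-reactant or F-catalyst of a reaction in another, extract a RAF from one side, and contradict irreducibility because the other side contains a reaction vertex. The paper only asserts that some component yields a RAF; your choice of the side containing the minimal-generation producer of $x$, together with the induction on $\sgen$, supplies the justification for that step.
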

\begin{proof}
  Indirectly, assume that there exists a cut vertex $x^*$ of $\king(\RAF)$
  that is a species. Consider now the connected components
  $\tilde{K}_1,...,\tilde{K}_n$ of the bipartite subgraph $\tilde{K}$
  obtained from $\king(\RAF)$ by removing $x^*$ and all adjacent edges
  $(x^*,r)$ or $(r,x^*)$. Note that - by construction - no species-vertex
  in any component $\tilde{\king}_i$ is an F-reactant or an F-catalyst for any
  reaction-vertex in any component $\tilde{\king}_j$, with $i\neq j$. In
  particular, from $\RAF$ being a RAF, it follows that there is at least
  one such connected component $\tilde{\king}_1$ whose associated F-reaction
  set $\RAF_1$ is a RAF as well. Since any other
  $\tilde{\king}_i \neq \tilde{\king}_1$ contains, by construction, a
  reaction-vertex, then the RAF $\RAF_1$ is a strict subset of $\RAF$,
  which leads to a contradiction.
\end{proof}

Building on the perspectives of Lemmata \ref{lem:PSC} and
\ref{lem:cutvert}, the third and main result of this section considers
\emph{maximal strong blocks} $\block$ of $\king(\RAF)$, i.e. maximal
strongly-connected subgraphs that possess no cut-vertices
\cite{grotschel_minimal_1979}.  An example of an irreducible, monocatalyzed
RAF containing multiple strong blocks is captured in
Fig.~\ref{ex:CutVertex2}.

\vspace{6pt}
\begin{lemma}[Strong blocks]\label{lem:StrongBlocks}
  Consider an irreducible, monocatalyzed RAF $\RAF$. Let $\block$ be a
  maximal strong block of $\king(\RAF)$ such that $\block$ contains a
  species-vertex $x^*$, which is not in the food and waste sets,
  i.e. $x^*\notin \food \cup\mathbf{W}_{\RAF}$, and let
  $r^*\in \mathcal{R}(\block)$ be a reaction satisfying $r^* \preceq r$,
  for any $r\in \mathcal{R}(\block)$. Then,
  \begin{itemize}
  \item[i.] All F-reactants of $r^*$ in $\block$ belong to the food
    set:{$$\feducts(r^*)\cap X(\block) \quad \subseteq \quad \food;$$}
  \item[ii.] There is at least a non-food F-catalyst of $r^*$ in $\block$:
    $$(\cat(r^*) \setminus \food) \cap X(\block)\quad \neq \quad \emptyset;$$
  \item[iii.] There is at least a non-food F-product of $r^*$ in
    $\block$:$$(\fproducts(r^*) \setminus \food) \cap X(\block)\quad \neq
    \quad \emptyset.$$
  \end{itemize} 
\end{lemma}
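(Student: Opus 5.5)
The plan is to exploit the minimality of $r^*$ in the generation order inside $\block$, together with strong connectivity of $\block$ and the absence of species cut-vertices (Lemma~\ref{lem:cutvert}). The starting observation is structural. Since $\block$ is strongly connected and contains the non-food, non-waste species $x^*$, the block $\block$ contains directed cycles through $x^*$. In particular $x^*$ has an in-edge from some reaction of $\block$, so there is at least one reaction of $\block$ that produces a non-food species of $\block$. I will also use that every species-vertex of $\block$ has both in- and out-edges inside $\block$. This holds because a strong block with more than one edge is strongly connected, so every vertex lies on a directed cycle within $\block$. It follows that every species of $\block$ is both produced and consumed (or catalytically used) by reactions of $\block$.

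For item \textbf{i.}, suppose some F-reactant $x\in\feducts(r^*)\cap X(\block)$ is not food. Then $\sgen(x)\ge 1$, and $x$ was first produced by some reaction $r'$ with $\rgen(r')\le\sgen(x)<\rgen(r^*)$. The difficulty is to show that $r'$ lies in $\block$, which would contradict the minimality of $r^*$. My first attempt: $x$ lies on a directed cycle of $\block$, so there is a reaction $r''\in\mathcal{R}(\block)$ with edge $(r'',x)$. I need the \emph{first} producer of $x$ to be in $\block$, and this is the step I expect to be the main obstacle. I would argue via blocks. If every producer of $x$ inside $\block$ had generation at least $\rgen(r^*)$, then I would consider removing from $\RAF$ the reactions of $\block$ together with their descendants, in the same spirit as the proof of Lemma~\ref{lem:PSC}, and show that the remainder is still a RAF, contradicting irreducibility. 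Alternatively, I would use the fact that $x$ is not a cut-vertex (Lemma~\ref{lem:cutvert}): all of $x$'s incident edges that lie on cycles through $\block$ must belong to $\block$ by maximality of the strong block. Lemma~\ref{lem:PSC} then gives a cycle through $r'$ and $x$ avoiding food, which places $(r',x)$ in $\block$. Maximality of the strong block is what forces such cycle edges into $\block$.

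For item \textbf{iii.}, I would argue that $r^*$ must have an out-edge inside $\block$, since it lies on a directed cycle in $\block$. That out-edge goes to a species of $\block$ which is an F-product in the $s^+$ sense. If it is a non-food F-product, we are done. Otherwise the edge points to a food species or to a catalyst; the food case is excluded because food species of $\block$ need in-edges within the block, which conflicts with the food being generation $0$ in the cycle argument. Hence $\block$ contains an edge from $r^*$ either to a non-food F-product or to an F-catalyst. To handle the catalyst case together with item \textbf{ii.}, I would examine the in-edge of $r^*$ in $\block$. By \textbf{i.}, any species on that in-edge that is non-food must be a catalyst (since $s^-=\sigma^-+k$). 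If it were food, the cycle through $r^*$ in $\block$ would have to pass through food, contradicting Lemma~\ref{lem:PSC} together with maximality.

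Concretely, for item \textbf{ii.} I take a cycle in $\block$ through $x^*$ and $r^*$, which exists since $\block$ is strongly connected. Its predecessor of $r^*$ is a species $y\in X(\block)$ with $s^-_{yr^*}>0$. If $y$ were a non-food F-reactant this would contradict \textbf{i.} Using the food-avoiding path of Lemma~\ref{lem:PSC}, chosen inside $\block$ by maximality, I can take $y\notin\food$, so $y\in\cat(r^*)\setminus\food$.

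For item \textbf{iii.}, the successor $z$ of $r^*$ on the same cycle has $s^+_{zr^*}>0$ and is not food. If $z$ were only a catalyst of $r^*$, with $\sigma^+_{zr^*}=0$, then $z$ would have been generated before $r^*$ by a reaction of $\block$, contradicting minimality in the same way as in \textbf{i.} Hence $z\in\fproducts(r^*)\setminus\food$.
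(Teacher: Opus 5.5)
Your treatment of (ii) is the paper's argument: a food-avoiding path from $x^*$ to $r^*$ from Lemma~\ref{lem:PSC}, kept inside $\block$ by maximality, whose last species cannot be a non-food F-reactant by (i). Your second route for (i) is also the paper's: Lemma~\ref{lem:cutvert} places the producers of the non-food F-reactant $x$ in $\block$, and since $\rgen$ is computed from F-reactants, the first producer has strictly smaller generation than $r^*$.

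The proposal has a genuine gap in (iii). Your decisive claim is that a non-food $z\in X(\block)$ which is only an F-catalyst of $r^*$ ``would have been generated before $r^*$ by a reaction of $\block$''. This is false. The generation index $\rgen$ ignores catalysts, and the RAF condition only requires catalysts to lie in $\closure{\RAF}{\food}$, not to be produced before the reaction they catalyze; this is exactly what separates a RAF from a CAF. For example, take a monocatalyzed realization of the necklace RAF (Sec.~\ref{sec:NecklaceRAF}). The entry reaction $r'_1$ has $\rgen(r'_1)=1$. Its catalyst $x'_1$ lies in the block and is not an F-product of $r'_1$, yet $x'_1$ is first produced by $r'_n$ with $\rgen(r'_n)=n$. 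So minimality of $r^*$ in the generation order gives no contradiction here. Indeed, if your inference were valid, it would rule out the very catalyst found in (ii) whenever that catalyst is not also an F-product. Separately, your reason for the successor of $r^*$ not being food is wrong, since food catalysts do have in-edges (e.g.\ $f_4$ in Fig.~\ref{ex:CutVertex2}). That point is easily repaired by taking the successor on the food-avoiding path from $r^*$ to $x^*$ given by Lemma~\ref{lem:PSC}, but this alone does not yield (iii).

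What is missing is a second use of irreducibility. The paper argues as follows:
\begin{enumerate}
\item Take the non-food F-catalyst $x_\kappa\in X(\block)$ from (ii) and a reaction $\tilde r$ having $x_\kappa$ as F-product. If (iii) fails, then $\tilde r\neq r^*$.
\item Build the upstream set $\RAF'$ of $\tilde{\fr}$ as in \eqref{RAFbuild}: repeatedly add every F-reaction of $\RAF$ that produces a non-food F-reactant or F-catalyst of a reaction already collected. This $\RAF'$ is a nonempty RAF.
\item Suppose all non-food F-products of $r^*$ lay outside $\block$. Then $\fr^*\in\RAF'$ would yield a directed path $r^*\to x_\pi\to\cdots\to\tilde r\to x_\kappa$ between two vertices of $\block$ passing through some $x_\pi\notin X(\block)$. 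Maximality of $\block$ forbids this.
\item Hence $\RAF'\subseteq\RAF\setminus\{\fr^*\}$ is a proper sub-RAF, contradicting irreducibility.
\end{enumerate}
Some argument of this kind, based on irreducibility rather than the generation order, is needed to complete (iii).
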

\begin{proof}
  (i) Indirectly assume there is a species vertex $x\not\in \food$ in
  $\block$ such that $x$ is an F-reactant of $r^*$. Lemma \ref{lem:cutvert}
  excludes that $x$ is a cut-vertex for $\king(\RAF)$, and therefore all
  reactions $\tilde{r}\in \mathcal{R}'$ for which $x$ is an F-product must
  be vertices in $\block$, contradicting the assumed $\preceq$-minimality
  of $r^*$ in $\block$.
  \\
  (ii) Since $x^*$ is a species vertex of the maximal strong block
  $\block$, it follows that any directed path from $r^*$ to $x^*$ and from
  $x^*$ to $r^*$ touches only vertices in $\block$ (ear-decomposition of
  strong blocks). Via Lemma \ref{lem:PSC}, there exists at least one
  directed path $\mathcal{P}$ from $x^*$ to $r^*$ such that
  $\mathcal{P}\cap \food =\emptyset$, i.e. there exists species-vertex
  $x\not\in\food$ in $\block$ which is either an F-reactant or an
  F-catalyst of $r^*$. Statement (i), proved above, excludes that such $x$
  is an F-reactant, so (ii) follows.
  \\
  (iii) preliminarily observe that $r^*$ possesses an F-product
  $x_\pi \not\in \food$, since otherwise $\RAF \setminus \{r^*\}$ would still
  be a RAF, contradicting irreducibility of $\RAF$. Let then $x_\pi$
  indicate an F-product of $r^*$ and $x_\kappa$ and F-catalyst of $r^*$ in
  $\block$, whose existence we proved above in (ii). Assume now indirectly
  that (iii) does not hold, then $x_\pi\neq x_\kappa$ and there is no
  directed path $\mathcal{P}=(r^*, x_\pi,...,x_\kappa)$ since otherwise
  $x_\pi$ would be in $\block$ as well, against the indirect
  assumption. Consider any reaction $\tilde{r}$ which has $x_\kappa$ as
  F-product and construct a family of sets iteratively
  $\tilde{\mathbf{R}}_n$ of F-reactions from
  $\tilde{\mathbf{R}}_0=\{\tilde{\fr}\}$ as follows:
  \begin{equation}\label{RAFbuild}
    \tilde{\mathbf{R}}_{n}=\tilde{\mathbf{R}}_{n-1}\cup
    \left\{\fr\in \RAF \; \middle| \; \bigcup_{\hat{\fr}\in \tilde{\mathbf{R}}_{n-1}}
    \bigg(\feducts(\hat{\fr}) \cup \cat(\hat{\fr})\bigg)
    \bigcap \left( \fproducts(\fr) \setminus \food \right) \neq \emptyset \right\}.
  \end{equation}
  Since $\RAF$ is a finite set, the iteration converges at a finite step
  $\bar{n}$ to a set $\RAF'\coloneqq \tilde{\mathbf{R}}_{\bar{n}}$.  By
  construction and the fact that $\RAF$ is a RAF, $\RAF'$ is a RAF
  itself. However, no direct path
  $\mathcal{P}=(r^*, x_\pi,..., \tilde{r},x_\kappa)$ exists, and thus
  $\fr^*\not\in \RAF'$, in contradiction with the assumption of
  irreducibility of $\RAF$.
 \end{proof}

\begin{figure}[bth]
  \begin{minipage}[c]{0.45\textwidth}
  \vspace{-72pt}
    \begin{equation*}
      \begin{split}
	f_1 + (x_1) \quad&\underset{r_1}\longrightarrow \quad x_2 + (x_1) \\
	2x_3 + (x_2 + f_4) \quad&\underset{r_2}\longrightarrow\quad x_4 + x_5 + (x_2 + f_4) \\	
	f_2 + (x_4)\quad&\underset{r_3}\longrightarrow \quad x_1 + x_6 + (x_4)\\
	f_3 + (f_4 + x_5)\quad &\underset{r_4}\longrightarrow \quad x_3 + (f_4 + x_5)\\	
      \end{split}
    \end{equation*}
  \end{minipage}%
  \hfill%
  \begin{minipage}[c]{0.55\textwidth}
  \centering
    \begin{tikzpicture}
      \node[draw, circle, blue] (f1) at (-1.5,0) {$f_1$};      
      \node[draw, circle, red] (f2) at (3,-1.5) {$f_2$};
      \node[draw, circle, green] (f3) at (-3,-4.5) {$f_3$};
      \node[draw, teal, circle] (f4) at (-3,-3) {$f_4$};
            
      \node[draw, orange, circle] (x1) at (1.5,0) {$x_1$};			
      \node[draw, orange, circle] (x2) at (0,-1.5) {$x_2$};      
      \node[draw, teal, circle] (x3) at (-1.5,-3) {$x_3$};
      \node[draw, orange, circle] (x4) at (1.5,-3) {$x_4$};
      \node[draw, teal, circle] (x5) at (0,-4.5) {$x_5$};
      \node[draw, circle, violet] (x6) at (3,0) {$x_6$};
      
      \node[draw, orange, rectangle] (r1) at (0,0) {$r_1$};
      \node[draw, rectangle] (r2) at (0,-3) {$r_2$};
      \node[draw, orange, rectangle] (r3) at (1.5,-1.5) {$r_3$};
       \node[draw, teal, rectangle] (r4) at (-1.5,-4.5) {$r_4$};
      
      \draw[->] (f1) to (r1);
      \draw[->, orange, dashed] (x1) to [bend right = 30] (r1);
      \draw[->, orange, dashed] (r1) to [bend right = 30] (x1);
      \draw[->, orange]  (r1) to (x2);
      \draw[->, orange, dashed] (x2) to [bend right = 30] (r2);	
      \draw[->, orange, dashed] (r2) to [bend right = 30] (x2);
      \draw[->, teal] (x3) to node[midway, above] {$2$} (r2);
      \draw[->, orange] (r2) to (x4);
      \draw[->, teal ] (r2) to (x5);
      \draw[->, orange, dashed] (x4) to [bend right = 30] (r3);
      \draw[->, orange, dashed] (r3) to [bend right = 30] (x4);
      \draw[->] (f2) to (r3);
      \draw[->, orange] (r3) to (x1);
      \draw[->] (r3) to (x6);
      \draw[->, teal, dashed] (x5) to [bend right = 30] (r4);
      \draw[->, teal, dashed] (r4) to [bend right = 30] (x5);      
      \draw[->] (f3) to (r4);
      \draw[->, teal, dashed] (f4) to [bend right = 30] (r4);
      \draw[->, teal, dashed] (r4) to [bend right = 30] (f4);
      
      \draw[->, teal, dashed] (f4) to [bend left = 60] (r2);
      \draw[->, teal, dashed] (r2) to [bend right = 45] (f4);      
      \draw[->, teal] (r4) to (x3);      
    \end{tikzpicture}
  \end{minipage}
  \caption{\textit{(Left):} Example of an irreducible, monocatalyzed RAF
    written as CRN reactions with a cut-vertex that has a single
    catalyst. Catalyzations are indicated in brackets. \textit{(Right):}
    Depiction of the corresponding König graph with catalyzations indicated
    as dashed lines. Different strong blocks are depicted in different
    colours, while the cut-vertices $r_2$, belonging to two strong blocks
    (teal and orange), is coloured black. For the teal strong block, $r_4$
    is the `$r^*$' reaction expected via Lemma \ref{lem:StrongBlocks} and
    for the orange strong block it is $r_1, r_2$, and $r_3$.}
  \label{ex:CutVertex2}
\end{figure}

We refer to a reaction $r^*$ as in the statement of
Lemma~\ref{lem:StrongBlocks} as an \emph{entry reaction} for $\block$. By
an inspection of the example in Fig.~\ref{ex:CutVertex2}, we observe
that such entry reactions appear in intimate relation with cut-vertices of
the bipartite graph $K(\mathcal{R}')$. We investigate this intuition in the 
following lemma, which generalizes part (iii) of Lemma~\ref{lem:StrongBlocks} 
to reactions $\tilde{r}$ that are cut vertices, but not necessarily minimal in the 
generation order.
\vspace{6pt}

\begin{lemma}\label{cor:strongblock1}
 Let $\RAF$ be an irreducible monocatalyzed RAF. Let $\block_1, \block_2$ be two maximal
 strong blocks of $K(\RAF)$ containing at least one species-vertex outside
 the food and waste set, which intersect in a reaction $\tilde{r},$ i.e. $\block_1\cap\block_2=\{\tilde{r}\}$. 
 Then $\tilde{r}$ has at least one F-product in $\block_1$ and one F-product in $\block_2$, i.e.
 \[\fproducts(\tilde{r})\cap X(\block_i)\neq\emptyset,\qquad i=1,2.\]
 In particular, $\tilde{r}$ has at least two distinct non-food F-products.
\end{lemma}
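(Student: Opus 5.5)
We argue by contradiction, mimicking part (iii) of Lemma~\ref{lem:StrongBlocks}, with a more careful use of block structure replacing the $\preceq$-minimality of $r^*$. Fix $i\in\{1,2\}$ and assume that $\fproducts(\tilde r)\cap X(\block_i)=\emptyset$. The plan is to show that $\tilde r$ is not needed to sustain anything in $\block_i$, and then to build a proper sub-RAF.

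\emph{Step 1: a non-food input of $\tilde r$ lies in $\block_i$.} Since $\block_i$ contains a species $x^*\notin\food\cup\mathbf{W}_{\RAF}$, Lemma~\ref{lem:PSC} gives a food-free directed path from $\tilde r$ to $x^*$ and back. As in (ii) of Lemma~\ref{lem:StrongBlocks}, this cycle stays inside the maximal strong block $\block_i$. Hence $\tilde r$ has an out-neighbour in $\block_i$ and an in-neighbour $x_\kappa\in X(\block_i)\setminus\food$, which is an F-reactant or F-catalyst of $\tilde r$. The out-neighbour of $\tilde r$ in $\block_i$ is a species with $s^+_{x\tilde r}>0$, so it is either an F-product or an F-catalyst of $\tilde r$. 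Under our assumption it must be an F-catalyst, so in $\block_i$ the reaction $\tilde r$ is incident only to catalyst and reactant species.

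\emph{Step 2: removing a reaction.} We repeat the backward closure \eqref{RAFbuild}. Start from a reaction producing $x_\kappa$ (if $x_\kappa$ is produced only by $\tilde r$ as catalyst, then it is produced elsewhere, since it is non-food and lies in the closure). Iteratively add all reactions producing a non-food reactant or catalyst of something already collected. The resulting set $\RAF'$ is a RAF by construction. The goal is to show $\tilde r\notin\RAF'$, or more generally that $\RAF'\subsetneq\RAF$. The case $\tilde r\in\RAF'$ would require a chain of producer edges from $\tilde r$ through an F-product $x_\pi$ of $\tilde r$ back to $x_\kappa$. Together with the edge $x_\kappa\to\tilde r$, this chain closes a directed cycle through $\tilde r$ and $x_\pi$. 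Since $\block_i$ is a maximal strong block and $\tilde r$ is a cut vertex, I would argue that this cycle shares with $\block_i$ the two edges at $\tilde r$ and hence lies in $\block_i$, which forces $x_\pi\in X(\block_i)$, a contradiction.

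\emph{Main obstacle.} The delicate point is exactly this containment of a cycle through the cut vertex $\tilde r$ in $\block_i$. A cycle through a cut vertex can pass through it only within one block. I would invoke the standard fact that every directed cycle lies entirely within a single block, so a cycle containing an edge of $\block_i$ lies in $\block_i$.

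\emph{Step 3: second claim.} Applying Steps 1--2 for $i=1$ and $i=2$ yields F-products $x_1\in X(\block_1)$ and $x_2\in X(\block_2)$. They are distinct because $\block_1\cap\block_2=\{\tilde r\}$. To see that they may be chosen non-food, redo the argument with the path chain required to avoid $\food$, as Lemma~\ref{lem:PSC} permits. This shows that the F-product in $\block_i$ lying on the food-free cycle is non-food, giving at least two distinct non-food F-products.
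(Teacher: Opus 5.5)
Your proposal is correct and takes essentially the paper's route. You assume $\tilde r$ has no F-product in $\block_i$ and run the backward closure \eqref{RAFbuild} from a producer of a non-food neighbour of $\tilde r$ in $\block_i$ (the paper uses the out-neighbour $\tilde{\tilde x}_i$, which under the hypothesis is an F-catalyst and so also an in-neighbour). You then exclude $\tilde r\in\RAF'$, since any chain reaching $\tilde r$ closes a directed cycle through an F-product of $\tilde r$ and the edge $(x_\kappa,\tilde r)$, which maximality of $\block_i$ forces into $\block_i$; you make explicit the cycle-in-block step that the paper leaves to ``arguing exactly as in Lemma~\ref{lem:StrongBlocks}''.

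For the non-food part you do not need Lemma~\ref{lem:PSC}. Because \eqref{RAFbuild} only links through $\fproducts(\fr)\setminus\food$, the product $x_\pi$ on your cycle is automatically non-food. So you can take ``no non-food F-product of $\tilde r$ in $\block_i$'' as the contradiction hypothesis from the outset, as the paper does; Step~2 then goes through verbatim.
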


\begin{proof}
For $i\in{1,2}$, pick a species $\tilde{x}_i\in X(\block_i)\setminus(\food\cup\mathbf{W}_{\RAF})$. 
Via Lemma~\ref{lem:PSC}, there exist directed paths $\mathcal{P}_i$ in $K(\RAF)$ from $\bar r$ 
$\tilde{r}$ to $\tilde{x}_i$ such that $\mathcal{P}_i\cap\food=\emptyset$. 
Since $\block_i$ are maximal strong blocks, it follows that each of the paths is contained in one of the strong blocks, 
specifically:  $\mathcal{P}_i\subseteq\block_i$. Let now $\tilde{\tilde{x}}_{i}$ be the first 
species-vertex after $\tilde r$ in $\mathcal{P}_i$. If $\{\tilde{r}, \tilde{\tilde{x}}_i\}$ 
are the only two vertices in $\block_i$, then $\tilde{\tilde{x}}_i$ is an F-product 
of $\tilde{r}$. Otherwise, there exists at least another reaction $\tilde{\tilde{r}}_i\in \mathcal{R}(\block_i)$, 
$\tilde{\tilde{r}}_i\neq \tilde{r}$, which has $\tilde{\tilde{x}}_i$ as an F-product. Indirectly assume there exists no 
non-food F-product $x_{\pi_i}$ of $\tilde{r}$ in $\block_i$. Arguing exactly as in the proof of Lemma \ref{lem:StrongBlocks}, starting from
$\tilde{\mathbf{R}}_0=\{\tilde{\tilde{\fr}}\}$ and building a RAF via \eqref{RAFbuild}, we construct a nonempty 
$\RAF'\subseteq \RAF\setminus \{\tilde{\fr}\}\subset \RAF$, leading to contradiction.   
\end{proof}

Lemma~\ref{cor:strongblock1} directly yields the following statement,
which gives an easy-to-check structural necessary condition for the
presence of multiple strong blocks involving species that are not in the
food or in the waste sets.  \vspace{6pt}
\begin{corollary}\label{cor:strongblock2}
  Let $\RAF$ be an irreducible monocatalyzed RAF. If every F-reaction
  $\fr\in\RAF$ satisfies
  \begin{equation}\label{eq:condmanysb}
    |\fproducts(\fr)|<2,
  \end{equation}
  then $K(\RAF)$ has a unique maximal strong block containing at least one
  species-vertex outside the food and waste sets.
\end{corollary}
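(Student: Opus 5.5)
We argue by contraposition using Lemma~\ref{cor:strongblock1}. Assume that $K(\RAF)$ has at least two distinct maximal strong blocks $\block_1\neq\block_2$, each containing a species-vertex outside $\food\cup\mathbf{W}_{\RAF}$. We will produce a reaction $\tilde r$ with $|\fproducts(\tilde r)|\ge 2$, which contradicts \eqref{eq:condmanysb}. Existence of at least one such block is not at issue. We take it from the standing exclusion of the trivial CAF case and from Lemma~\ref{lem:StrongBlocks}(iii): the entry reaction has a non-food F-product, and that species lies in a strong block with the reaction.

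The first step is to show that two such blocks meet at a reaction. Maximal strong blocks of a connected graph are arranged along the block--cut-vertex tree of the underlying undirected graph, so any two of them are joined by a chain of blocks in which consecutive blocks share exactly one cut-vertex. By Lemma~\ref{lem:cutvert}, every such cut-vertex is a reaction.

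The delicate point is that the chain may pass through intermediate blocks that contain only food or waste species. These blocks fail the hypothesis of Lemma~\ref{cor:strongblock1}. We handle this by choosing $\block_1,\block_2$ among the qualifying blocks so that the path between them in the block tree is shortest. We then argue that an intermediate block with only food and waste species cannot be strongly connected with more than one reaction. Indeed, a food species without incoming edges, or a waste species without outgoing edges, cannot lie on a directed cycle. Hence such a ``block'' is at most a trivial one (a single reaction, possibly with a catalytic food species forming a 2-cycle), and it does not separate two qualifying blocks through distinct cut-vertices. Alternatively, we can use Lemma~\ref{lem:PSC} directly. It gives a food-free directed path from a non-food non-waste species in $\block_1$ to one in $\block_2$, and every internal species on this path is non-food and non-waste. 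Since a directed path cannot leave a block except through a cut-vertex, the path passes through successive blocks sharing reaction cut-vertices, and each block it traverses contains a non-food, non-waste species. Taking two consecutive qualifying blocks along this path, we obtain $\block_1\cap\block_2=\{\tilde r\}$.

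With the shared reaction $\tilde r$ in hand, Lemma~\ref{cor:strongblock1} gives $\fproducts(\tilde r)\cap X(\block_i)\neq\emptyset$ for $i=1,2$. The two F-products lie in different maximal blocks, which share only $\tilde r$, so they are distinct. Hence $|\fproducts(\tilde r)|\ge 2$, contradicting \eqref{eq:condmanysb}. We expect the main obstacle to be the block-adjacency step: guaranteeing that two qualifying blocks share a reaction, rather than being separated by non-qualifying blocks. We would resolve it via the food-free path of Lemma~\ref{lem:PSC}.
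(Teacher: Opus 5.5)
Your final argument is the paper's proof: the food-free path from Lemma~\ref{lem:PSC} must pass from one qualifying block to the next through a shared reaction vertex (Lemma~\ref{lem:cutvert}), and Lemma~\ref{cor:strongblock1} then forces two distinct F-products. Your observation that every species on that path is non-food and non-waste, so every block it traverses qualifies, spells out a step the paper leaves implicit. Two side remarks should be fixed, though. First, drop the ``block-tree'' variant: food species can have incoming edges (as F-products, or as F-catalysts), so blocks containing only food species need not be trivial. Second, your existence remark via Lemma~\ref{lem:StrongBlocks}(iii) is circular, because that lemma already presupposes a qualifying block; the paper's proof does not address existence at all. Existence follows instead because, outside the trivial CAF case, some reaction $r$ has a non-food F-reactant or F-catalyst $x$, and Lemma~\ref{lem:PSC} then places $x$ on a directed cycle through $r$.
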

\begin{proof}
Indirectly assume that there exist two distinct maximal strong blocks 
$\block_1$ and $\block_2$, each containing a species-vertex outside the food 
and waste sets. Via Lemma~\ref{lem:PSC}, there exists a directed path between 
such species in $\block_1$ and $\block_2$ which does not intersect the food set. 
Such a path must cross at least one cut-vertex via Lemma \ref{lem:cutvert} 
of two maximal strong blocks each containing a species-vertex 
outside the food and waste set, leading via Lemma \ref{cor:strongblock1} to a 
contradiction with the assumption.
\end{proof}
For example, if all F-reactions $\fr\in\RAF$ are of the form
\begin{equation*} 
  x_\rho + (x_\kappa)\quad\underset{\fr}{\longrightarrow}\quad
  x_\pi + (x_\kappa)
\end{equation*}
then \eqref{eq:condmanysb} is satisfied and thus $K(\RAF)$ is essentially
one single strong block. Here, `essentially' means by not
considering strong blocks involving only food and waste species.

Finally, we stress that entry reactions always identify a column in the
associated stoichiometric matrix
$S[X(\block)\setminus \food, \mathcal{R}(\block)]$ of
$\Gamma'=(\closure{\RAF}{\food},\mathcal{R}')$, restricted to the non-food
species in $\block$, which is nonnegative (property (i) in
Lemma~\ref{lem:StrongBlocks}) and it has at least one positive entry
(property (ii) in Lemma~\ref{lem:StrongBlocks}). Such an observation will
be pivotal to prove stoichiometric autocatalysis in each of such strong
blocks $\block$, as per our main result below. We proceed first by
introducing formally stoichiometric autocatalysis.

\section{Stoichiometric Autocatalysis}\label{sec:AutoCatIrrRAF}

In chemical reaction networks (CRN), a matrix-based definition of
\emph{stoichiometric autocatalysis} was introduced by Blokhuis et
al. \cite{blokhuis_Universal_2020} for networks without catalysis, and was
recently extended to the catalytic setting considered in the present paper
\cite{golnik_autocatalytic_2026}, where a reactant may also be a product of
the same reaction. The definition is as follows.

\vspace{6pt}
\begin{definition}\label{def:autSubRN}
  A CRN $\Gamma\coloneqq (X,\CR)$ is \emph{stoichiometric autocatalytic} if
  it contains a sub-CRN $\Gamma'=(X',\CR')$ such that
  \begin{enumerate}
  \item[i.] \emph{Well-formedness:} For any $r\in \mathcal{R}'$, there
    exists $x_1, x_2 \in X'$, not necessarily distinct, such that
    $s^-_{x_1r}s^+_{x_2r}\neq 0$;
  \item[ii.] \emph{Semi-positivity:} For the associated stoichiometric
    matrix $\SM'\coloneqq \SM[X',\CR']$ there exists
    $v\in \mathbb{R}_{>0}^{|\CR'|}$ such that $\SM'v>0$.
  \end{enumerate}
\end{definition}
\vspace{6pt}

A minimality concept for stoichiometrically autocatalytic CRN is expressed
in terms of subnetworks. A sub-CRN $\Gamma'$ that satisfies \textit{(i.)}
and \textit{(ii.)}, while no proper sub-CRN
$\Gamma''\subset\Gamma'$ 
does, is called an \emph{autocatalytic core}.  Any autocatalytic sub-CRN
thereby contains an autocatalytic core. The stoichiometric submatrix
associated to an autocatalytic core admits a reordering of the columns such
that all diagonal entries are nonpositive while all off-diagonal entries
are nonnegative \cite{blokhuis_Universal_2020, golnik_autocatalytic_2026, 
vassena_unstable_2024}. Matrices with
nonnegative off-diagonal entries are called \emph{Metzler matrices} in the
literature. An autocatalytic core of size $1$, i.e.,
$|X'|=|\mathcal{R}'|=1$, is thus simply an \emph{autocatalytic reaction}
$r$ of the type:
\begin{equation}
    s^-_{x_1r}x_1+\ldots \quad\underset{r}\longrightarrow \quad s^+_{x_1r}x_2+\ldots\quad\quad\quad\text{with $s^+_{x_1r}>s^-_{x_1r}$.}
\end{equation}
By this simple example, it appears evident the quantitative role of the
stoichiometric coefficients in the definition of stoichiometric
autocatalysis (via the semipositivity condition). In RAF, in turn, the
quantitative value of the stoichiometry is irrelevant. However,
semipositivity of the stoichiometric matrix restricted to the non-food
molecules follows naturally for any food-generated set of reactions, as the
next lemma shows. We recall Remark \ref{rmk:RAFCRN} on the validity 
of the RAF definition for CRN.

\vspace{6pt}
\begin{lemma}~\label{lem:NonFoodSemipositiv2}
  Let $\mathcal{R}'$ be a food-generated set of reactions. Let $
  n:=\max\{\gamma(r)\mid r\in\CR'\}$, and for $k=1,\ldots,n$, define
  \begin{equation*}
    \mathcal{R}^k
    \coloneqq
    \{r\in\mathcal{R}'\mid \gamma(r)=k\},
    \qquad
    X^k
    \coloneqq
    \{x\in \closure{\mathcal{R'}}{\food}\setminus\food\mid \sgen(x)=k\}.
  \end{equation*}
  Pick any
  $\lambda=(\lambda_n,\ldots,\lambda_1)$ iteratively by choosing
  $\lambda_n>0$ and, for $k=n-1,\ldots,1$,
  \begin{equation*}
    \lambda_k
    > \max \left\{
      \max_{x\in X^k}
      \left[
        -\frac{
          \displaystyle
          \sum_{\ell=k+1}^{n}
          \lambda_\ell
          \sum_{r\in\mathcal{R}^\ell}\SM_{xr}
        }{
          \displaystyle
          \sum_{r\in\mathcal{R}^k}\SM_{xr}
        }
      \right],0 \right\},
  \end{equation*}
  and define $v\in \mathbb{R}^{|\mathcal{R}'|}_{>0}$ by $
  v_r\coloneqq \lambda_{\gamma(r)}.$
  Then
  \begin{equation*}
    \SM[\closure{\mathcal{R'}}{\food}\setminus\food,\mathcal{R}']v>0.
  \end{equation*}
  In particular,
  $\SM[\closure{\mathcal{R'}}{\food}\setminus\food,\mathcal{R}']$ is
  semi-positive.
\end{lemma}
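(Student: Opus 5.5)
The plan is a direct row-by-row sign analysis of $\SM[\closure{\mathcal{R}'}{\food}\setminus\food,\mathcal{R}']$ with respect to the generation layers. Fix a non-food species $x\in\closure{\mathcal{R}'}{\food}\setminus\food$ and let $k\coloneqq\sgen(x)$, so that $x\in X^k$. The goal is to split
\[
(\SM v)_x=\lambda_k\sum_{r\in\mathcal{R}^k}\SM_{xr}+\sum_{\ell=k+1}^{n}\lambda_\ell\sum_{r\in\mathcal{R}^\ell}\SM_{xr}+\sum_{\ell<k}\lambda_\ell\sum_{r\in\mathcal{R}^\ell}\SM_{xr},
\]
then show that the third sum vanishes and that the first coefficient is strictly positive. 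Once this is done, the recursive choice of $\lambda_k$ gives $(\SM v)_x>0$ directly.

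The combinatorial core has three parts, all read off from the nested sets $X'_i$ of \eqref{eq:speciesSubSet} and the definitions of $\gamma$ and $\sgen$.

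First, if $x\in\feducts(r)$, then $\feducts(r)\subseteq X'_i$ forces $i\ge k$, hence $\gamma(r)\ge k+1$. So no reaction with $\gamma(r)\le k$ consumes $x$, and for such reactions $\SM_{xr}=\sigma^+_{xr}\ge 0$.

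Second, if $x\in\fproducts(r)$ with $\gamma(r)=j$, then $\feducts(r)\subseteq X'_{j-1}$, hence $x\in X'_j$ and $k\le j$. So reactions with $\gamma(r)<k$ do not involve $x$ at all, and the third sum vanishes.

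Third, since $x\in X'_k\setminus X'_{k-1}$, some reaction $r$ with $\feducts(r)\subseteq X'_{k-1}$ produces $x$. By the second point, $\gamma(r)\ge k$, and by definition $\gamma(r)\le k$, so $\gamma(r)=k$. Hence $\sum_{r\in\mathcal{R}^k}\SM_{xr}>0$.

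I expect the main obstacle to be getting these index shifts exactly right: the $+1$ in \eqref{def:genreactionindex} and the convention $\sgen=0$ on $\food$. The main thing to watch is that the strict positivity $\sum_{r\in\mathcal{R}^k}\SM_{xr}>0$ really holds, since it is precisely the denominator in the definition of $\lambda_k$. This also shows that the recursion is well defined: each $\lambda_k$ depends only on $\lambda_{k+1},\dots,\lambda_n$, which are fixed earlier, and the maximum is over a finite set. If $X^k=\emptyset$, I would read the inner maximum as vacuous, so that $\lambda_k>0$ is the only constraint. In addition, $\sgen(x)\le n$ for every non-food $x$ in the closure, since its first producer has $\gamma=\sgen(x)$, so every $x$ falls into some layer $k\le n$.

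With these facts, $\lambda_k>-\bigl(\sum_{\ell>k}\lambda_\ell\sum_{r\in\mathcal{R}^\ell}\SM_{xr}\bigr)/\sum_{r\in\mathcal{R}^k}\SM_{xr}$, and multiplying by the positive denominator yields $(\SM v)_x>0$. For $k=n$, the tail sum over $\ell>n$ is empty and $\lambda_n>0$ suffices. Since every $\lambda_k>0$, we have $v\in\mathbb{R}^{|\mathcal{R}'|}_{>0}$, which establishes both the inequality and the semi-positivity.
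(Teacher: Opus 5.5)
Your proposal is correct and is essentially the paper's proof. For a non-food $x$ with $k=\sgen(x)$, the columns of generation $\ell<k$ contribute $\SM_{xr}=0$, and the generation-$k$ sum $\sum_{r\in\mathcal{R}^k}\SM_{xr}$ is strictly positive; the recursive lower bound on $\lambda_k$ (or simply $\lambda_n>0$ when $k=n$) then gives $(\SM v)_x>0$, with your three index-shift observations correctly spelling out what the paper asserts in one line ``by the definition of species generation.''
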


\begin{proof}
  Arbitrarily take \(x\in \closure{\mathcal{R}'}{\food} \setminus\food\),
  and let \(k=\sgen(x)\). By the definition of species generation,
  $ \SM_{xr}=0$ for every $r\in\mathcal R^\ell,$ with $\ell<k,$ while
  $ \sum_{r\in\mathcal R^k}\SM_{xr}>0.$ Since \(v_r=\lambda_{\gamma(r)}\),
  it follows that
  \begin{equation*}
    \bigl(\SM[X'\setminus\food,\mathcal R']v\bigr)_x
    =
    \sum_{r\in\mathcal R'}\SM_{xr}v_r=
    \sum_{\ell=k}^{n}
    \lambda_\ell
    \sum_{r\in\mathcal R^\ell}\SM_{xr}
    =
    \lambda_k
    \sum_{r\in\mathcal R^k}\SM_{xr}
    +
    \sum_{\ell=k+1}^{n}
    \lambda_\ell
    \sum_{r\in\mathcal R^\ell}\SM_{xr}.
  \end{equation*}
  The defining condition on \(\lambda_k\) in the assumption of the Lemma
  directly yields
  \begin{equation*}
    \bigl(\SM[X'\setminus\food,\mathcal R']v\bigr)_x>0.
  \end{equation*}
  Since \(x\in X'\setminus\food\) is arbitrary,
  $\SM[X'\setminus\food,\mathcal R']v>0$. Since $v>0$ by construction,
  \(\SM[X'\setminus\food,\mathcal R']\) is semi-positive.
\end{proof}

Lemma~\ref{lem:NonFoodSemipositiv2} shows that semipositivity is a natural
property also for RAFs. Moreover, the structural result
Lemma~\ref{lem:StrongBlocks} yields the following theorem: each sub-CRN
$(X(\block)\setminus \food, \CR(\block))$, corresponding to a maximal,
non-trivial strong block $\block$ of $\king(\RAF)$ of an irreducible,
monocatalyzed RAF $\RAF$, is \emph{stoichiometrically autocatalytic}.

\vspace{6pt}
\begin{theorem}\label{Thm:StrBlockCore}
  Let $\RAF$ be an irreducible, monocatalyzed RAF, and let $\block$ be a
  maximal strong block in $\king(\RAF)$ containing at least one
  species-vertex $x^*$ that is not in the food and waste sets. Then
  $\block$ contains an autocatalytic core.
\end{theorem}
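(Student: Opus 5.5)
We exhibit a stoichiometrically autocatalytic sub-CRN supported on $\block$, that is, one satisfying well-formedness and semi-positivity in Def.~\ref{def:autSubRN}. Since every autocatalytic sub-CRN contains an autocatalytic core, this proves the claim. The candidate is
\[
\Gamma_\block \coloneqq \bigl(X(\block)\setminus\food,\ \CR(\block)\bigr),
\]
or, if needed, a suitable sub-CRN of it, with the reaction ordering inherited from the generation order $\preceq$.

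\emph{Well-formedness.} Take $r\in\CR(\block)$. Because $\block$ is strongly connected and contains the non-food species $x^*$, Lemma~\ref{lem:PSC} gives a directed path from $r$ to $x^*$ and a path from $x^*$ to $r$ that avoid $\food$. By the ear-decomposition argument used in Lemma~\ref{lem:StrongBlocks}(ii), both paths stay inside $\block$. Hence $r$ has an in-neighbour $x_1\in X(\block)\setminus\food$ (an F-reactant or F-catalyst) and an out-neighbour $x_2\in X(\block)\setminus\food$ (an F-product or F-catalyst). This gives $s^-_{x_1r}s^+_{x_2r}\neq0$, as required.

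\emph{Semi-positivity.} The plan is to adapt the layered weighting of Lemma~\ref{lem:NonFoodSemipositiv2} to the block. Let $r^*$ be an entry reaction of $\block$. By Lemma~\ref{lem:StrongBlocks}(i),(iii), the column of $r^*$ in $S[X(\block)\setminus\food,\CR(\block)]$ is nonnegative and has a positive entry. We then order the reactions of $\block$ by $\gamma$, and order each non-food species $x\in X(\block)$ by the generation of the first reaction \emph{inside} $\block$ that produces it. Two facts are needed:
\begin{itemize}
\item[(a)] every non-food species of $\block$ is an F-product of some reaction in $\block$;
\item[(b)] for the earliest such reaction $r_x$, every reaction in $\block$ consuming $x$ net has generation at least $\gamma(r_x)$.
\end{itemize}
For (a), Lemma~\ref{lem:cutvert} says species are never cut-vertices. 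So if $x\notin\food$ lies in $\block$, every reaction producing $x$ lies in $\block$, as in the proof of Lemma~\ref{lem:StrongBlocks}(i). Some reaction produces $x$ because $x\in\closure{\RAF}{\food}\setminus\food$. The same observation gives (b): the reactions of $\block$ producing $x$ are exactly all producers of $x$ in $\RAF$, and in a food-generated set any consumer of $x$ occurs no earlier than its first producer. With (a) and (b), the inductive choice $\lambda_n>0$, $\lambda_{n-1}\gg\lambda_n,\dots$ from Lemma~\ref{lem:NonFoodSemipositiv2} goes through verbatim for the restricted matrix, yielding $v>0$ with $S[X(\block)\setminus\food,\CR(\block)]v>0$.

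\emph{Waste species.} Waste species of $\block$ are harmless: their rows have no negative entries, and each has a positive entry because its producer lies in $\block$. They can therefore stay in the sub-CRN without breaking semi-positivity.

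\emph{Main obstacle.} The hard step is (b) together with the generation bookkeeping, i.e.\ checking that restricting to $\block$ does not remove the producer that makes a species' row positive at its generation. This relies on the cut-vertex structure: reactions of $\block$ may have products outside $\block$, but by Lemma~\ref{lem:cutvert} a species of $\block$ cannot have producers outside $\block$. If the exact generation argument fails, for example because a producer in $\block$ has generation larger than the global $\sgen(x)$, the fallback is to recompute generations within the sub-CRN $(\closure{\RAF}{\food},\CR(\block))$ itself. That requires showing $\CR(\block)$ is food-generated relative to $\food\cup(\text{species outside }\block)$, treating outside species as food. Lemma~\ref{lem:NonFoodSemipositiv2} then applies directly to this enlarged food set.
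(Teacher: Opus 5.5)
Your proposal is correct, and the fallback in your last paragraph is exactly the paper's proof. The paper enlarges the food set to $\tilde{\food}=\food\cup\{x\in\closure{\RAF}{\food}\mid x\notin X(\block)\}$ and declares $\CR(\block)$ to be $\tilde{\food}$-generated, on the grounds that the entry reaction $r^*$ of Lemma~\ref{lem:StrongBlocks} has F-reactants only in $\tilde{\food}$. It then invokes Lemma~\ref{lem:NonFoodSemipositiv2} as a black box, with generations taken relative to $\tilde{\food}$; well-formedness is handled as you do.

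Your primary route instead keeps the global indices $\sgen,\rgen$ and re-runs the layered weighting directly on $\SM[X(\block)\setminus\food,\CR(\block)]$. It needs only your fact (a), which is the observation already used in the proof of Lemma~\ref{lem:StrongBlocks}(i). The paper's one-line $\tilde{\food}$-generation claim also tacitly relies on (a), since knowing that $r^*$ can fire does not by itself show that every non-food species of $\block$ is regenerated inside $\block$. The entry reaction itself plays no role in your argument.

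With (a) in hand, the worry in your final paragraph disappears. The global first producer $r_x$ of each $x\in X(\block)\setminus\food$ lies in $\block$ and satisfies $\rgen(r_x)=\sgen(x)$, so no fallback is needed.

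One point to tighten: (b) must be strict. Otherwise a net consumer of $x$ at level $k=\sgen(x)$ could make $\sum_{r\in\CR(\block),\ \rgen(r)=k}\SM_{xr}\le 0$, and the choice of $\lambda_k$ would break down. Strictness does hold: $\SM_{xr}<0$ forces $x\in\feducts(r)\subseteq X'_{\rgen(r)-1}$, whence $\rgen(r)\ge\sgen(x)+1$.

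Finally, your paragraph on waste species is vacuous. A strong block with more than one vertex contains no waste species, because each of its vertices has an outgoing arc inside the block.
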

\vspace{6pt}

\begin{proof}
  We show that the subnetwork identified by $\block$ is stoichiometrically
  autocatalytic.
  Well-formedness follows naturally from the fact that any reaction-vertex
  $r$ in a strong-block $\block$ has at least one ingoing edge $(x_1,r)$
  and one outgoing edge $(r,x_2)$, with $x_1$ not necessarily distinct from
  $x_2$ but both $x_1,x_2\not\in \food \cup\mathbf{W}_{\RAF}$ via Lemma
  \ref{lem:StrongBlocks}.
  To show semi-positivity, consider an enlarged food
  set \begin{equation}\tilde{\food}:=\food\cup\{x\in
    \closure{\RAF}{\food}\;|\; x\not\in X(\block)\}.
  \end{equation}
  Lemma \ref{lem:StrongBlocks} guarantees that the set
  $\mathcal{R}(\block)$ is $\tilde{\food}$-generated as any entry reaction
  (with respect to the food set $\tilde{\food}$) $r^*$ has F-reactants only
  in $\tilde{\food}$. Since
  $\closure{\mathbf{R}'(\block}{\tilde{\food}})=X(\block)\setminus
  \tilde{\food}$, Lemma~\ref{lem:NonFoodSemipositiv2} yields semipositivity
  of $S[X(\block)\setminus \tilde{\food},\mathcal{R}(\block)]$ .
\end{proof}

\section{Examples}
\label{sec:examples}

\subsection{Irreducible RAFs from the Binary Polymer Model}

We started this contribution with the observation of a large number of
autocatalytic cores in irreducible RAFs from two instances of the Binary
Polymer Model. The first instance was composed only of cleavage and
ligation, while the second contained only ligation F-reactions; see
Fig.~\ref{fig:irrRAFCores}. To clarify this relation, we investigated the
structure of irreducible RAFs. We comment here on three aspects.
\begin{itemize}
\item \textbf{F-products in the food set.} The RAF Def.~\ref{def:RAF} and
  the consequent structural analysis in Sec.~\ref{subsec:King} do not
  distinguish F-products that belong to the food set, as our structural
  results disregard paths passing through food species. In contrast,
  Def.~\ref{def:autSubRN} of stoichiometric autocatalysis is sensitive to
  the presence of food species in an autocatalytic sub-CRN. Filtering out
  food species revealed that food-containing cores indeed contribute to the
  total number of autocatalytic cores, at least in irreducible RAFs
  containing both ligation and cleavage reactions
  (Fig.~\ref{fig:FoodVsNonFood}). That irreducible RAFs composed only of
  ligation reactions exhibit the same number of autocatalytic cores after
  removal of food species is, in turn, not surprising, since irreducibility
  excludes ligation F-reactions whose F-product lies in the food
  set. Nevertheless, many autocatalytic cores remain after food species are
  removed.
\item \textbf{The number of distinct catalyzations.}
  Lemma~\ref{lem:ExpoSubCRS} shows that a single `polycatalyzed'
  irreducible RAF may be associated with an exponentially large number of
  monocatalyzed RAFs. Since each such realization is itself an irreducible
  RAF, our results guarantee the presence of autocatalytic cores in each of
  the corresponding CRNs. This, however, does not imply per se an
  exponentially large number of \emph{distinct} autocatalytic cores: the
  correspondence clearly does not need to be injective at the level
  of autocatalytic cores. Multiple catalyzations should therefore be
  regarded only as a possible source of the observed
  multiplicity. Moreover, F-reactions with multiple catalyzations are not
  prevalent in the irreducible RAFs considered here, where only a handful
  of F-reactions admit more than one catalyzation.
\item \textbf{Strong blocks.} The results in Sec.~\ref{subsec:King} show
  that, for a monocatalyzed irreducible RAF, the subgraph of the associated
  K{\H{o}}nig graph obtained after excluding food and waste species is
  strongly connected. Moreover, the occurrence of multiple maximal strong
  blocks is much constrained: only reaction vertices may be cut-vertices,
  and their structure is restricted by Lemma~\ref{lem:StrongBlocks} and
  Lemma~\ref{cor:strongblock1}. For ligation-only instances of the Binary
  Polymer Model, for example, the presence of a single relevant maximal
  strong block follows directly from Cor.~\ref{cor:strongblock2}, since
  every F-reaction has a single F-product. Moreover, although multiple
  blocks are in theory possible (see SI
  Fig.~\ref{fig:BPMMultipleStrongBlocks}), even for BPM instances with
  maximum polymer lengths $4$, $5$, $6$, $7$, and $8$ containing both
  cleavage and ligation reactions, none of the collected irreducible RAFs
  exhibited more than one such block (SI
  Fig.~\ref{fig:StrongBlocks}). Thm.~\ref{Thm:StrBlockCore} shows that
  every relevant maximal strong block is stoichiometrically autocatalytic
  and therefore contains an autocatalytic core. The decomposition into
  multiple strong blocks could, however, not explain the large number of
  cores observed here, since the number of strong blocks is necessarily
  constrained by the number of vertices of the K\H{o}nig graph, i.e., by the
  number of species and reactions, while a single strong block may contain
  many distinct autocatalytic cores. In view also of our results in
  \cite{golnik_enumeration_2026,golnik_bridging_2026}, this analysis rather
  suggests that it is the internal connectivity within a single strong
  block that provides the structural setting in which many distinct
  autocatalytic cores can occur.
\end{itemize}
Finally, and more informally, the Binary Polymer Model naturally allows a
large number of distinct species to act as catalysts. Such diversity seems
to provide the connectivity richness that gives rise to a large number of
autocatalytic cores within a single irreducible RAF.

In the next two abstract examples, we build upon the above intuitions. In the
first, both catalytic diversity and multiple catalyzations are combined to
produce an exponential number of autocatalytic cores within a single
irreducible RAF. In the second, we consider the opposite extreme, where all
F-reactions are monocatalyzed and share the same catalyst, and construct
arbitrarily many irreducible RAFs that possess only a single shared
autocatalytic core.

\begin{figure}
	\centering
	\begin{minipage}[c]{0.5\textwidth}
		\centering
		\includegraphics[width=\textwidth]{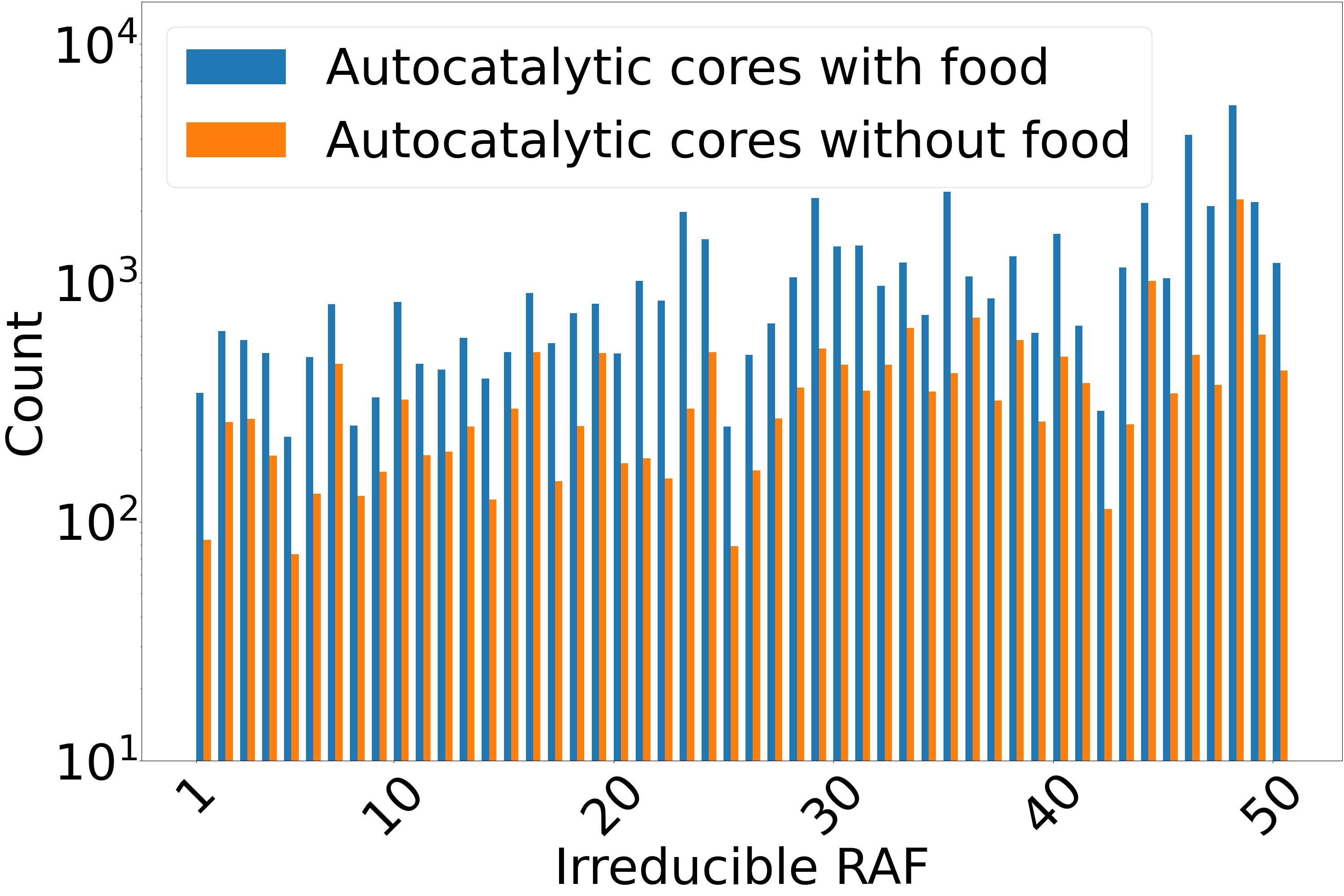}
	\end{minipage}%
	\hfill%
	\begin{minipage}[c]{0.5\textwidth}
		\centering
		\includegraphics[width=\textwidth]{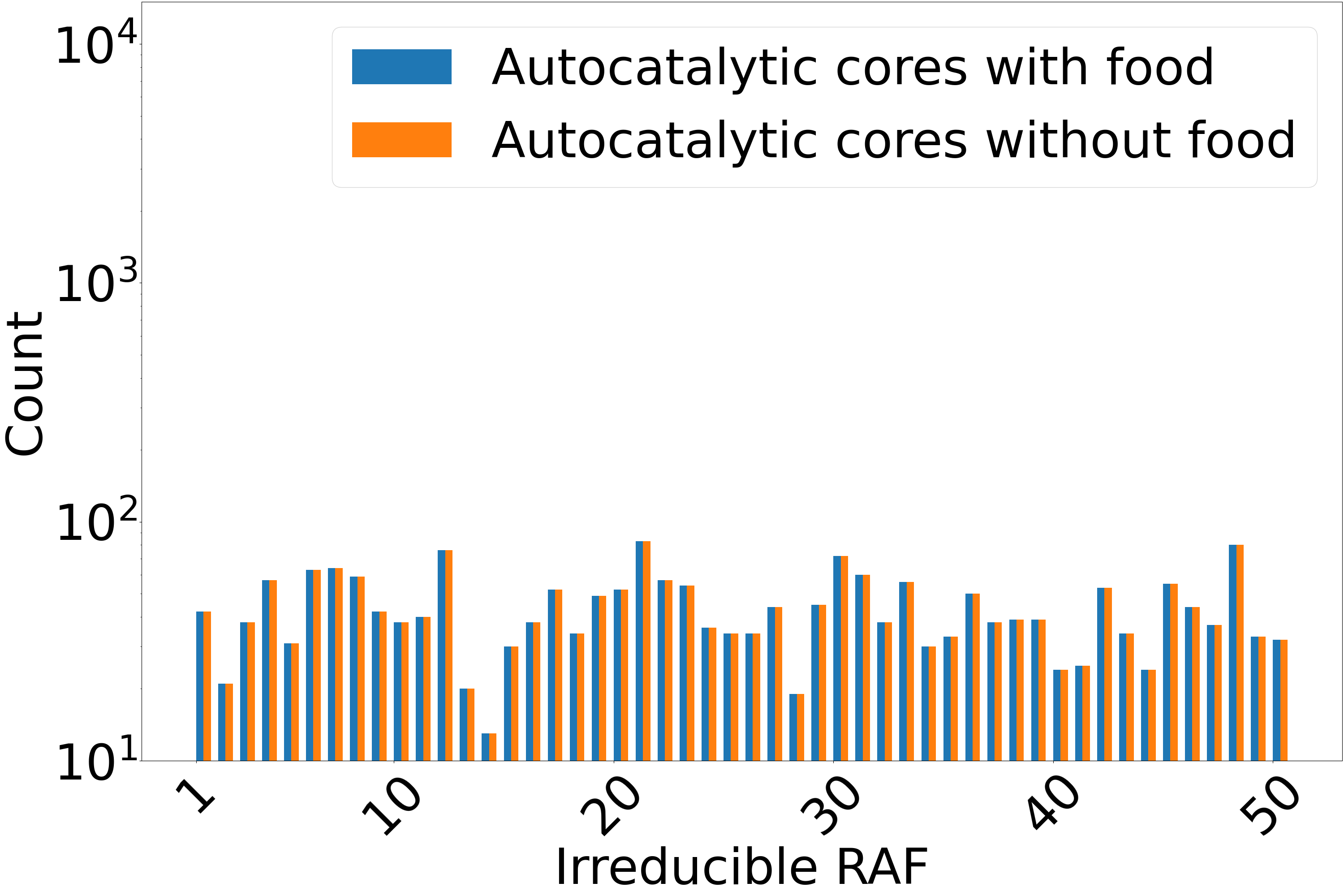}
	\end{minipage}
	\caption{Autocatalytic cores in 50 different arbitrarily chosen irreducible RAFs from two independent 
	instances of the BPM depicted and described in Fig.~\ref{fig:irrRAFCores} and 
	Sect.~\ref{sec:BinPol}) with and without food molecules.}
	\label{fig:FoodVsNonFood} 
\end{figure}

\subsection{The necklace RAF}\label{sec:NecklaceRAF}

\begin{figure}
		\centering
		\begin{tikzpicture}
			\node[draw, circle] (f) at (1,-1.5) {$f$};
			\node[draw, circle] (x'1) at (0,1) {$x'_1$};
			\node[draw, circle] (x''1) at (2,1) {$x''_1$};
			\node[draw, circle] (x2) at (1,-5) {$x_2$};
			\node[draw, circle] (x'2) at (6,-5) {$x'_2$};
			\node[draw, circle] (x''2) at (6,-3) {$x''_2$};
			\node[draw, circle] (x3) at (8,-5) {$x_3$};
			\node[draw, circle] (x'3) at (5,1) {$x'_3$};
			\node[draw, circle] (x''3) at (5,-1) {$x''_3$};
			
			\node[draw, rectangle] (r'1) at (0,-2) {$r'_1$};
			\node[draw, rectangle] (r''1) at (2,-2) {$r''_1$};
			\node[draw, rectangle] (r'2) at (4.5,-6) {$r'_2$};
			\node[draw, rectangle] (r''2) at (4.5,-4) {$r''_2$};			
			\node[draw, rectangle] (r'3) at (7,2) {$r'_3$};
			\node[draw, rectangle] (r''3) at (7,0) {$r''_3$};

			
			
			\draw[->] (f) to (r'1);
			\draw[->] (f) to (r''1);
			\draw[->, dashed] (x'1) to [bend right = 30](r'1);
			\draw[->, dashed] (r'1) to [bend right = 30] (x'1);
			\draw[->, dashed] (x''1) to [bend right = 30](r''1);
			\draw[->, dashed] (r''1) to [bend right = 30] (x''1);
			
			\draw[->] (r'1) to [bend right = 30] (x2);
			\draw[->] (r''1) to [bend left = 30]  (x2);
			\draw[->] (x2) to [bend right = 15] (r'2);
			\draw[->] (x2) to [bend left = 15] (r''2);			
			\draw[->, dashed] (x'2) to [bend right = 15] (r'2);
			\draw[->, dashed] (r'2) to [bend right = 15] (x'2);
			\draw[->, dashed] (x''2) to [bend right = 15] (r''2);
			\draw[->, dashed] (r''2) to [bend right = 15] (x''2);			
			\draw[->] (r'2) to [bend right = 15] (x3);
			\draw[->] (r''2) to [bend left = 15] (x3);
			
			\draw[->] (x3) to [bend right=30] (r'3);
			\draw[->] (x3) to [bend left=15] (r''3);
			\draw[->] (r'3) to [bend right=15] (x'1);
			\draw[->] (r'3) to [bend right=15] (x''1);
			\draw[->] (r''3) to [bend left=15] (x''1);
			\draw[->] (r''3) to [bend left=15] (x'1);			
			
			\draw[->, dashed] (x'3) to [bend left=15] (r'3);
			\draw[->, dashed] (r'3) to [bend left=15] node[midway, below] {$2$} (x'3);
			\draw[->, dashed] (x''3) to [bend left=15] (r''3);
			\draw[->, dashed] (r''3) to [bend left=15] node[midway, below] {$2$} (x''3);
						
												
			
%
%
%
											
		\end{tikzpicture}		
	\caption{The K\H{o}nig graph of the 2-necklace RAF with $n=3$. 
	The edges $(r'_3, x'_2), (r'_3, x''_2), (r''_3, x'_2)$, and $(r'_3, x''_2)$ were omitted for increased visibility. 
	Edges connecting catalysts with reaction vertices are depicted with dashed lines.}
	\label{fig:NecklaceRAF}
\end{figure}

We present an example of an irreducible RAF with $3n$ species and $n$
F-reactions, each of which possesses two distinct catalyzations and no
species catalyzes more than one F-reaction. The RAF contains an exponential number (w.r.t. the number of F-reactions) of autocatalytic cores. We use the CRN notation for simplicity, and we
highlight the two reactions in the same F-reaction $\fr_i$, $i=1,\dots,n$,
as $r'_i$ and $r''_i$. We assume throughout that $n\geq 3$.
\begin{equation*}
  \begin{cases}
    f + (x'_1) &
    \underset{r'_1}{\longrightarrow} x_2 + (x'_1)  \\
    f + (x''_1) &
    \underset{r''_1}{\longrightarrow} x_2 + (x''_1)  \\
    x_2 + (x'_2) &
    \underset{r'_2}{\longrightarrow} x_3 + (x'_2)  \\
    x_2 + (x''_2) &
    \underset{r''_2}{\longrightarrow} x_3 + (x''_2)  \\
    &\vdots\\
    x_k + (x'_k) &
    \underset{r'_k}{\longrightarrow}
    x_{k+1} + x'_{k-1} + x''_{k-1} + (x'_k) \qquad\qquad k=3,\ldots,n-1. \\
    x_k + (x''_k) &
    \underset{r''_k}{\longrightarrow}
    x_{k+1} + x'_{k-1} + x''_{k-1} + (x''_k)  \\
    &\vdots\\
    x_n + (x'_n) &
    \underset{r'_n}{\longrightarrow}
    x'_1 + x''_1 + x'_{n-1} + x''_{n-1}
    + x'_n + x''_n + (x'_n)  \\
    x_n + (x''_n) &
    \underset{r''_n}{\longrightarrow}
    x'_1 + x''_1 + x'_{n-1} + x''_{n-1}
    + x'_n + x''_n + (x''_n)
  \end{cases}
\end{equation*}
In view of its geometry and architecture, we call the example the
$2$-\emph{necklace RAF} of size $n$. See
Fig.~\ref{fig:NecklaceRAF} for a depiction of the case $n=3$.

The two reactions $r'_i$ and $r''_i$ have identical F-reactants and
F-products and differ only in their catalyzation, and thus correspond to
the same F-reaction $\fr_i$. The food set is the singleton $\food=\{f\}$ and it is easy to see that $\RAF=\{\fr_1,\ldots,\fr_n\}$
is an irreducible RAF, as both the F-catalysts of the first $\fr_1$ are produced only via the last F-reaction $\fr_n$ of the chain. 
The same observation shows that the K\H{o}nig graph associated to the entire network possesses only one maximal strong 
block containing non-food species. Each $2$-\emph{necklace RAF} has exactly $\mathbf{2}^n+\mathbf{2}^2(n-2)+\mathbf{2}^1$ 
autocatalytic cores identified in three groups, corresponding to the
three summands:
\begin{itemize}
\item The CRS $\RAF$ contains exactly $2^n$ distinct monocatalyzed
sub-CRSs, each of which identifies an autocatalytic core of size $n$,
containing all the $n$ reactions denoted generically as $(r_1,r_2,...,r_n)$ and $n$ species $(\cat(r_1),x_2,...,x_n)$. The associated stoichiometric matrix has the identical form
\[S_n=
  \begin{pmatrix}
    0 & 0 & 0 & \cdots & 0 & 1\\
    1 & -1 & 0 & \cdots & 0 & 0\\
    0 & 1 & -1 & \cdots & 0 & 0\\
    \vdots & \vdots & \vdots & \ddots & \vdots & \vdots\\
    0 & 0 & \cdots & 1 & -1 & 0\\
    0 & 0 & \cdots & 0 & 1 & -1
  \end{pmatrix}.
\]
The subnetwork is naturally well-formed, and (via the construction in Lemma \ref{lem:NonFoodSemipositiv2}) choosing $v=(n,n-1,\ldots,2,1)^T$ shows semipositivity since $
  S_nv=(1,\ldots,1)^\top>0$. The minimality also follows directly from the cycle structure of the autocatalytic core, as removing any set of columns destroys semipositivity and 
  removing any set of rows destroys well-formedness. As different monocatalyzation choices identify different CRN reactions, the autocatalytic cores are distinct and of precise 
  number $2^n$.
\item Each pair $(\fr_k,\fr_{k+1})$ of consecutive F-reactions, with $k=2,\ldots,n-1$, identifies each $2^2=4$ distinct
autocatalytic cores of size $2$. The associated species are chosen $(\cat(r_k),x_{k+1})$, and the associated stoichiometric matrix reads
\[ \begin{pmatrix}
    0 & 1\\
    1 & -1
  \end{pmatrix}.
\]
There are $n-2$ such consecutive pairs and $2\cdot2=4$ choices of catalyzations 
for each pair, yielding $4(n-2)$ distinct autocatalytic cores of size $2$. 
\item Different choices of catalyzation for the F-reaction $\fr_n$,
i.e. reactions $r'_n$ and $r''_n$, identify two additional
one-dimensional autocatalytic cores. Indeed,
\[
  S[\{x'_n\},\{r'_n\}]=(1),
  \qquad
  S[\{x''_n\},\{r''_n\}]=(1),
\]
since $r'_n$ and $r''_n$ produce one additional copy of their
respective F-catalyst. Hence each reaction alone identifies an
autocatalytic core.

\end{itemize}
It can be verified that these three families exhaust all autocatalytic cores of the
$2$-\emph{necklace RAF}. We omit a detailed argument since a lower exponential bound already serves our point. Via the arguments above and $
  1+(n-2)+1=n$, we also remark that each monocatalyzed sub-CRS of $\RAF$ contains $n$ 
autocatalytic cores: one of size $n$, $n-2$ of size $2$, and one of
size $1$. This shows indeed that a high number of cores is achievable also for monocatalyzed networks. Note as well that different monocatalyzed sub-CRSs share many of the
size-$2$ and size-$1$ cores.

The construction and calculation can be generalized to an
$m$-\emph{necklace RAF} of size $n$, where each F-reaction has $m$
distinct catalyzations. The same argument gives
  $m^n+m^2(n-2)+m$
autocatalytic cores. More generally, suppose that the F-reaction $\fr_i$ has $m_i$ distinct
catalyzations, with the same necklace architecture as above. The number
of distinct autocatalytic cores is then
\begin{equation}
  \prod_{i=1}^{n}m_i
  +
  \sum_{k=2}^{n-1}m_km_{k+1}
  +
  m_n.
\end{equation}
We stress that such construction heavily relies indeed on the fact that each
species is a catalyst of at most one reaction. In contrast, when this is
not the case, we emphasize that catalyzations might indeed affect
the minimality property of autocatalytic sub-CRNs. 
The following example (see also Fig.~\ref{fig:Minimality}) illustrates a situation where two
autocatalytic sub-CRNs differ only by catalyzation;
\begin{equation}\label{eq:MinCat}
  \begin{cases}
    f + (x_1) &
    \underset{r_1}{\rightarrow} x_2 + x_4 + (x_1) \\[0.5em]
    x_2 + (x_1) &
    \underset{r'_2}{\rightarrow} x_2 + x_3 + (x_1) \\[0.5em]
    x_2 + (x_4) &
    \underset{r''_2}{\rightarrow} x_2 + x_3 + (x_4) \\[0.5em]
    x_3 + (x_3) &
    \underset{r_3}{\rightarrow} x_1 + (x_3)
  \end{cases}
\end{equation}
in particular, $
  (\{x_1,x_2,x_3\},\{r_1,r''_2,r_3\})$
is a core, while 
$
  (\{x_1,x_2,x_3\},\{r_1,r'_2,r_3\})
$
is not. In the latter, the catalyzation of $r'_2$ by
$x_1$ creates the proper autocatalytic sub-CRN
\[
  (\{x_1,x_3\},\{r'_2,r_3\}),
\]
so that the three-reaction sub-CRN is not minimal. In contrast, replacing
$r'_2$ by $r''_2$ removes this smaller feedback structure because its
F-catalyst $x_4$ is not contained in the considered species set, and the
three-reaction sub-CRN is an autocatalytic core. In the next example we 
present a construction at the opposite extreme: a RAF where all reactions 
are catalyzed by one single species and indeed it admits only one autocatalytic 
core but multiple irreducible RAFs.

\begin{figure}[htb]
	\centering
		\begin{tikzpicture}
			\node[draw, circle] (f) at (1.5,1.5) {$f$};		
			\node[draw, circle] (x1) at (0,0) {$x_1$};
			\node[draw, circle] (x2) at (3,0) {$x_2$};
			\node[draw, circle] (x3) at (3,-2) {$x_3$};
			\node[draw, circle] (x4) at (6,-1) {$x_4$};	
			
			\node[draw, rectangle] (r1) at (1.5,0) {$r_1$};
			\node[draw, rectangle] (r2') at (1.5,-1) {$r'_2$};
			\node[draw, rectangle] (r2'') at (4.5,-1) {$r''_2$};
			\node[draw, rectangle] (r3) at (0,-2) {$r_3$};
			
			\draw[->] (f) to (r1);
			\draw[->, dashed] (x1) to [bend right = 15] (r1);
			\draw[->, dashed] (r1) to [bend right = 15] (x1);			
			\draw[->] (r1) to (x2);
			\draw[->] (r1) to [bend left = 60, looseness = 1.5] (x4);
			
			\draw[->] (x2) to (r2');
			\draw[->] (x2) to (r2'');
			
			\draw[->, dashed] (x1) to [bend right = 10] (r2');
			\draw[->, dashed] (r2') to [bend right = 10] (x1);			
			\draw[->, dashed] (x4) to [bend right = 30] (r2'');
			\draw[->, dashed] (r2'') to [bend right = 30] (x4);
			
			\draw[->] (r2') to (x3);
			\draw[->] (r2'') to (x3);
			
			\draw[->, dashed] (x3) to [bend right = 15] (r3);
			\draw[->, dashed] (r3) to [bend right = 15] (x3);
			
			\draw[->] (r3) to (x1);

		\end{tikzpicture}
	\caption{The bipartite K\"{o}nig graph of the irreducible RAF of Eq.~\eqref{eq:MinCat}, 
	in which two catalyzations for one F-reaction influence the minimality of an autocatalytic 
	subsystem and give rise to two distinct autocatalytic cores of different size.}
	\label{fig:Minimality}
\end{figure}

\subsection{Many irreducible RAFs and just one autocatalytic core}\label{sec:1CMiRAF}

Up to now, the presented examples suggest how the number of autocatalytic
cores can be very large in a single irreducible RAF.  However, this
observation really depends on the network architecture. In fact, we next
present an example of a family of networks that may contain any number $n$
of irreducible RAFs and invariably only one autocatalytic core. Since such
a family is monocatalyzed, we use the CRN notation for simplicity.

Consider a simple autocatalytic reaction $r_A$:
\[x_S+(x_K)\quad\underset{r_A}{\longrightarrow}\quad x_K + (x_K), \]
which is stoichiometrically equivalent to the standard autocatalytic reaction $x_S+x_K\rightarrow 2x_K$. Assume now $x_S$ is not 
in the food set, but it can be produced from $n$ distinct food molecules via distinct reactions, all catalyzed by $x_K$, i.e.
\begin{equation}\label{eq:OneCoreMulIrrRAF}
    \begin{cases}
        f_1+(x_K)\quad&\underset{r_1}{\longrightarrow}\quad x_S + (x_K)\\
         f_2+(x_K)\quad&\underset{r_2}{\longrightarrow}\quad x_S + (x_K)\\
         &\vdots \\
         f_n+(x_K)\quad&\underset{r_n}{\longrightarrow}\quad x_S + (x_K)
    \end{cases}
\end{equation}
Consider now the network consisting of the $n+2$ species $X=(f_1,...,f_n,x_S,x_K)$ of which the food set is defined as $\food=\{f_1,...,f_n)$, 
together with the $n+1$ reactions $\mathcal{R}=(r_1,..., r_N, r_A)$. The network (see Fig.~\ref{fig:OneCoreMulIrrRAF}) contains $n$ irreducible 
RAFs of the form $\mathcal{R}'_i=\{r_i,r_A\}$ for $i=1,\ldots,n$. Each of those contains the same autocatalytic core, identified by $r_A$ alone, which is 
stoichiometrically-one dimensional and corresponds to 
\[S[\{x_K\},\{r_A\}]=(1).\]
A similar architecture was already presented in \cite{golnik_bridging_2026}, Example 5.5. The structural ingredients can be summarized as:
\begin{enumerate}
    \item \emph{Few species play the role of F-catalysts}: all reactions share the same F-catalyst;
    \item \emph{No autocatalytic cores involving food molecules}: food molecules only appear as F-reactants but not as F-products.
\end{enumerate}

\begin{figure}[htb]
	\centering
	\begin{tikzpicture}
		\node[draw, circle] (f1) at (0,0) {$f_1$};
		\node[draw, circle] (f2) at (2,0) {$f_2$};
		\node[] 		    (d) at (4,0.5) {$\cdots$};
		\node[draw, circle] (fn) at (6,0) {$f_n$};
		\node[draw, circle] (xs) at (3, 2) {$x_S$};
		\node[draw, circle] (xk) at (3, 4) {$x_K$};
		
		\node[draw, rectangle] (r1) at (0, 1) {$r_1$};
		\node[draw, rectangle] (r2) at (2, 1) {$r_2$};
		\node[draw, rectangle] (rn) at (6, 1) {$r_n$};
		\node[draw, rectangle] (ra) at (3, 3) {$r_a$};
		
		\draw[->] (f1) to (r1); 
		\draw[->] (f2) to (r2);
		\draw[->] (fn) to (rn);
		
		\draw[->] (r1) to (xs);
		\draw[->] (r2) to (xs); 
		\draw[->] (rn) to (xs);
		
		\draw[->] (xs) to (ra);
		\draw[->, dashed] (ra) to [bend right = 20] node[midway, right] {$2$} (xk);
		\draw[->, dashed] (xk) to [bend right = 20] (ra);
		
		\draw[->, dashed] (r1) to [bend left = 60] (xk);
		\draw[->, dashed] (xk) to [bend right = 45] (r1);
		\draw[->, dashed] (r2) to [bend left = 30] (xk);
		\draw[->, dashed] (xk) to [bend right = 20] (r2);
		\draw[->, dashed] (rn) to [bend right = 60] (xk);
		\draw[->, dashed] (xk) to [bend left = 45] (rn);
	\end{tikzpicture}
	\caption{Depiction of the bipartite K\H{o}nig graph of the irreducible specified in Eq.~\eqref{eq:OneCoreMulIrrRAF}.}
	\label{fig:OneCoreMulIrrRAF}
\end{figure}

\section{Discussion}

\textsc{Reflexively autocatalytic sets (RAFs)} and \emph{stoichiometric
  autocatalysis} are grounded in different formalizations of chemical
reaction systems that lead to inequivalent concepts of reactions,
distinguished by the manner in which catalysis is handled. While in CRNs,
underlying the theory of \emph{stoichiometric autocatalysis}, catalysts are
not formally distinguished from other reactants and products, the CRS 
framework distinguishes F-reactions, intended as molecular transformations, 
from their catalyzations, which are specified separately. As a
consequence, F-reactions in CRS are equivalence classes of CRN reactions with
respect to the equivalence relation $\sim$ (Eq.~\eqref{eq:EqRel}),
characterizing reactions that share the same non-catalyst reactants
(F-reactants) and the same non-catalyst products (F-products), with
identical stoichiometries, see Sec.~\ref{sec:CRNCRS}.

From a set of CRN reactions $\CR$, the set of corresponding F-reactions
$\FR$ can be obtained as the quotient set of $\CR$, i.e.
\begin{equation}
  \FR=\CR/\mathord{\sim}.
\end{equation} 

This implies that the CRS $\mathbf{\Phi}(\Gamma)= (X, \CR/\mathord{\sim},
\kappa_F(\CR/\mathord{\sim}))$, with $\kappa_F(\CR/\mathord{\sim})$
determined as described in Eq.~\eqref{eq:CatCFR}, reflects a quotient
structure of the CRN $\Gamma=(X, \CR)$. In turn, the reactions of the CRN
$\Gamma(\mathbf{\Phi})$ associated with a given CRS 
$\mathbf{\Phi}\coloneqq (X, \FR, \kappa_F(\FR))$ can be obtained from
$\kappa_F(\FR)$, since each catalyzation of an F-reaction specifies a
unique CRN reaction.

It was recently demonstrated that every \textsc{RAF} that is itself not a \textsc{CAF} 
is \emph{stoichiometrically autocatalytic} \cite{golnik_bridging_2026}. Here,
we focused on the interplay between the minimal structures arising in both frameworks,
i.e., \textsc{irreducible RAFs} and \emph{autocatalytic cores}. Analyzing 
irreducible RAFs obtained from random instances of the Binary Polymer Model
\cite{kauffman_autocatalytic_1986} with \texttt{autogatito} \cite{golnik_using_2026}, 
we observed a large number of autocatalytic cores in each 
irreducible RAF (Fig.~\ref{fig:irrRAFCores}). This observation could not 
be explained by the theoretical results stated in \cite{golnik_bridging_2026}.

Irreducible RAFs are sets of F-reactions minimal w.r.t. the RAF properties 
of Def.~\ref{def:RAF}. Consequently, among two reactions sharing the same set of 
F-reactants and F-products, at most one can be an element of an irreducible RAF 
(Lemma~\ref{lem:OneFReacPerRAF}). In the same vein, two F-reactions $\fr_1$ 
and $\fr_2$ that satisfy $\feducts(\fr_1)=\fproducts(\fr_2)$ and vice versa cannot  
both be part of the same irreducible RAF. In particular, an F-reaction and its reverse 
cannot both belong to the same irreducible RAF and reversing F-reactions within 
an irreducible RAF cannot produce another irreducible RAF (Lemma~\ref{lem:noReverseIrrRAF}).

The translation from F- to CRN-reactions suggests introducing 
an additional layer of minimality. In particular, Lemma~\ref{lem:mono} 
guarantees that an irreducible RAF $\RAF$ retains the minimality property in 
any CRS, where each F-reaction of $\RAF$ is associated with a single 
catalyzation. In this \emph{monocatalyzed} case, CRN and F-reactions 
are in one-to-one correspondence; each equivalence class consists of 
a single element, and thus CRN and CRS can be canonically identified 
with each other. For this reason, we focused on monocatalyzed, irreducible 
RAFs $\RAF$. This allowed us to investigate their properties by examining 
the structure of the bipartite K\H{o}nig graph $\king(\RAF)$ of the corresponding 
CRN representation.

RAF-irreducibility endows $\king(\RAF)$ with three fundamental properties for
any irreducible, monocatalyzed RAF $\RAF$
(Lemmata~\ref{lem:PSC}--\ref{lem:StrongBlocks}).
\begin{itemize}
	\item[I1.] The subgraph of $\king(\RAF)$ obtained by deleting food and waste species vertices is strongly connected.
	\item[I2.] Every cut vertex of $\king(\RAF)$ is a reaction vertex.
	\item[I3.] Every maximal strong block $\block$ of $\king(\RAF)$ that contains 
		      at least one non-food, non-waste species vertex contains a reaction 
		      $r^*$, also referred to as \emph{entry reaction}, that exerts the function of 
		      an inflow reaction for $\block$ -- that is, all non-food F-reactants of $r^*$ are located 
		      outside $\block$, at least one F-product and one F-catalyst that is not a food-species 
		      is located inside $\block$.
\end{itemize}
Together, these properties imply that every maximal strong block containing at least one non-food, non-waste species vertex 
is \emph{stoichiometrically autocatalytic} and therefore contains in particular an autocatalytic core. Indeed, by definition each 
strong block is strongly connected, which guarantees well-formedness, while the entry reaction identified in I3 allowed us to 
establish semipositivity of the corresponding stoichiometric submatrix (Lemma~\ref{lem:NonFoodSemipositiv2}).

In particular, our results show that the number of autocatalytic cores in an irreducible RAF is at least as large as the number 
of relevant maximal strong blocks in any of its monocatalyzed realizations. However, different monocatalyzed realizations may 
share autocatalytic cores. More importantly, the analysis of the BPM instances suggests that the internal connectivity of a single 
strong block, together with the availability of many distinct catalyst species and multiple catalyzations of the same F-reaction, 
may play a larger role in the abundance of autocatalytic cores. Indeed, our structural results, in particular Lemma~\ref{cor:strongblock1}, 
much constrain the occurrence of multiple strong blocks; for instance, they are excluded in the ligation-only case (Cor.~\ref{cor:strongblock2}). 
All irreducible RAFs from the BPM analyzed here indeed contained only one highly connected strong block (after excluding food and waste species). 
Together with the occurrence of a few F-reactions possessing more than one catalyzation, this suggests that the observed 
multiplicity of cores arises mainly from the internal architecture of a single strongly connected block. To reinforce this intuition, 
Sec.~\ref{sec:NecklaceRAF} provided the architecture of a \emph{necklace RAF}, in which a single strong block contains an 
exponentially large number of autocatalytic cores. To stress that this relationship depends strongly on the network architecture, 
Sec.~\ref{sec:1CMiRAF} also provided an opposite construction with arbitrarily many irreducible RAFs but only one autocatalytic core, shared by all of them.

The results obtained here demonstrate that there are meaningful connections 
between the CRS and CRN formalisms that translate into similarities 
between two different notions of autocatalysis, in particular between \textsc{irreducible RAFs} 
and \emph{autocatalytic cores}. Although our results yield insights as well as 
useful implications, they do not establish 1-to-1 correspondences between 
concepts in the two alternative formal frameworks. As the present results in 
essence characterize minimal RAFs in terms of CRNs, the converse, i.e., a 
characterization of autocatalytic cores in the language of CRS, 
remains an open research project.

\section{Declarations}

\subsection{Availability of Data and Materials}

The implementation, models, and all necessary data are available at https://github.com/hollyritch/minRAFs.

\subsection{Competing interests}

The authors declare that they have no competing interests. 

\subsection{Funding}

This work has been supported by the Novo Nordisk Foundation (grant NNF21OC0066551 `MATOMIC'), 
during a visit by Wim Hordijk in Leipzig. Research in the Stadler lab is {supported} by the BMBF (Germany)
through DAAD project 57616814 (SECAI, School of Embedded Composite AI), and jointly with SMWK 
(Saxony) through the \emph{Center for Scalable Data Analytics and Artificial Intelligence Dresden/Leipzig} (SCADS24B).
High-performance computing resources were provided by the German Network for Bioinformatics Infrastructure (de.NBI).

\subsection{Authors' contributions}

RG, TG, WH, PFS, and NV developed the mathematical theory. RG generated and analyzed the data on autocatalytic cores and their 
graph representations in all irreducible RAFs. WH generated all instances of the Binary Polymer Model and collected irreducible 
RAFs. All authors contributed to the conceptualization, methodology, and writing of the manuscript.

\bibliography{RAFCores}

\newpage

\setcounter{figure}{0}
\setcounter{section}{0}

\begin{center}
\huge{\textbf{Supplementary Information}}
\end{center}

\section{Glossary on Graph Theory}\label{sec:Glossary}

\paragraph{Graphs} A graph $G=(V,E)$ is a tuple composed of a set of vertices and 
a set of edges. Edges encode the adjacency of elements $u,v\in V$, i.e., $\{u,v\}\in E$ 
whenever $u$ is adjacent to $v$. We write $V(G)$ and $E(G)$ for the set of vertices 
and edges of a given graph $G$.
 
\paragraph{Directed Graphs} In a directed graph $G\coloneqq (V,E)$, edges are ordered 
pairs instead of unordered pairs, and $(u,v) \in E$ whenever there is an edge from $u$ to 
$v$. The underyling undirected graph of $G$ is the graph $G_{\circ} \coloneqq (V, E_{\circ})$ with
$E_{\circ}\coloneqq \{ \{u,v\} \ \vert \ (u,v) \in E(G) \}$.

\paragraph{Subgraphs} A subgraph of a (directed) graph $G$, denoted as $G'\subseteq G$ is a graph $G=(V', E')$ 
such that $V'\subseteq V$ and $E'\subseteq E$. A subgraph \emph{induced} by a set of vertices 
$V'\subseteq V$ is the graph $G[V']\coloneqq (V', F)$ with $F\coloneqq \{ \{u,v\} \ \vert \ u,v\in V', 
\{u,v\} \in E\}$ or in the directed case $F\coloneqq \{(u,v) \ \vert \ u,v\in V', (u,v) \in E\}$.

\paragraph{Paths and Directed Paths} A (directed) path $\mathcal{P}_{v_1v_k}$ in a (directed) graph $G$ is 
a sequence of pairwise distinct vertices $(v_1,\ldots, v_k)$ such that $\{v_i, v_{i+1}\}\in E(G) \  ((v_i, v_{i+1})\in E(G)), 
1\leq i\leq k-1$.

\paragraph{(Strongly) Connected (Sub-)Graphs} A subgraph $G'\coloneqq (V', E')$ of a (directed) graph 
$G=(V,E)$ is called (strongly) connected if for any two vertices $u,v\in V'$ there is a path 
$\mathcal{P}_{uv}$, which, in the directed case, implies that there is a path from $u$ to $v$ and 
from $v$ to $u$. If $G$ is directed, then $G'\subseteq G$ is called connected if $G'_\circ$ is connected. 
If $G'$ is not connected, we may also write that $G$ is disconnected. $V'$ is called a (strongly) connected 
component if $G[V']$ is (strongly) connected and maximal, i.e., if there is no $v\in V\setminus V': V'\cup \{v\}$ 
is (strongly) connected. If $V'=V$ then we say that $G$ is (strongly) connected. 

\paragraph{Cut vertex} For a graph $G\coloneqq (V, E)$ a vertex $v\in V$ is called a cut-vertex if $G[V\setminus \{v\}]$ 
is disconnected. Accordingly, in the directed case, then $v\in V$ is a cut-vertex if $G_\circ[V\setminus \{v\}]$ is 
disconnected. 

\paragraph{Strong blocks} For a digraph $G\coloneqq (V, E)$ a strongly connected subgraph $G'\subseteq G$ is 
called a strongly connected block or strong block if it does not contain a cut-vertex.

\paragraph{Elementary circuit} An elementary circuit of length $k$ in a digraph $G\coloneqq (V,E)$ is a sequence 
of vertices $(v_1,\ldots, v_k, v_1)$ such that $(v_1,\ldots, v_k)$ is a path, $(v_k, v_1) \in E$. 

\paragraph{Digon} A digon is an elementary circuit of length $2$.

\section{The Binary Polymer Model}\label{sec:BinPol}

The binary polymer model (BPM) is an abstract model of polymer chemistry that was introduced to argue about the probability of emergence of autocatalytic sets in random reaction networks. The set of molecules $X$ consists of all binary strings up to a maximal specified length $n$. In general, the reaction set $\FR$ consists of all possible \emph{ligation} reactions concatenating two bit strings together into a product as long as it does not exceed the maximum polymer length $n$, together with their reverse \emph{cleavage} reactions that decompose a bit string into two smaller ones. We denote the reverse of an F-reaction $\fr$ with $\bar{\fr}$. The food set $\food$ consists of all binary strings up to a given length $t$, here always $t=2$, i.e., the food set consists of the monomer and dimers. However, for some models $\FR$ is composed of ligation reactions only.

Catalysis is assigned randomly. In particular, there is a fixed probability $p$ that a given molecule (binary string) can catalyze a given F-reaction. For instances containing both ligation and cleavage, F-reactions that are the inverse of each other share the same catalysts. In other words, to create an instance of the BPM, for predetermined $n$ and $t$, each molecule-reaction pair $(x,\fr), x \in X, \fr \in \FR$ is included in the set of catalyzations $\cat(\FR)$ with probability $p$. More formally, an instance of the BPM forms a CRS $\mathbf{\Phi}\coloneqq (X, \FR, \cat(\FR))$, where:\\

\begin{itemize}
  \item {$X = \{0,1\}^{\leq n} \text{ and } \FR \coloneqq \FR \cup \bar{\FR}$}\\[-0.5em]
  \item {$\FR \coloneqq \{x_i + x_j \rightarrow x_i \cdot x_j : (x_i, x_j \in X) \wedge (|x_i \cdot x_j| \leq n)\}$}\\[-0.5em]
  \item {$\bar{\FR} \coloneqq \{\bar{\fr} \ \vert \ \fr \in \FR \}$} \\[-0.5em]
  \item {$\mathbb{P}[(x, \fr) \in \cat(\FR)] = p \hspace{100pt} x\in X, \fr \in \FR$}\\[-0.5em]
  \item {$\mathbb{P}[(x, \bar{\fr}) \in \cat(\FR)] = \begin{cases} 1 & \text{if } (x, \fr) \in \cat(\FR) \\ 0 & \text{otherwise} \end{cases} \quad x\in X, \bar{\fr}\in \bar{\FR}$}\\[-0.5em]
  \item {$\food = \{0,1\}^{\leq t}$}\\[-0.5em]
\end{itemize}

\subsection{Simulation Details}

Table~\ref{tab:SimulationDetails} specifies simulation details for different instances of the BPM.
In particular, all models contributed to the generation of Fig.~\ref{fig:StrongBlocks}, while only \textit{n6\_split} 
and \textit{n6\_forw} were employed for the generation of Fig.~\ref{fig:irrRAFCores} and Fig.~\ref{fig:FoodVsNonFood} 
in the main text.

\begin{table}[htb]
	\caption{Specification of simulation details for irreducible RAFs from different instances of the BPM.
	Maximal polymer size is specified by $n$, while the maximum size of food molecules is specified by $t$. The 
	uniform probability for a non-food molecule to catalyze an F-reaction is given by $p$, while food species are excluded 
	from being catalysts. F-reactions are in general considered irreversible.}
	\label{tab:SimulationDetails}
	\begin{tabular}{p{1.5cm}|p{1cm}|p{1.5cm}|p{1.5cm}|p{2.0cm}|p{1.5cm}|p{1cm}}
	Name & n & t & p & food-catalysts & Ligation & Cleavage \\
	\hline 
	n4\_split     		& 	4 	& 	1 	& 	0.018 	& 	no 	& 	yes 		& 	yes 		\\
	n5\_forw   			& 	5  	& 	2 	& 	0.014 	& 	no 	& 	yes 		& 	no 		\\
	n5irrev 			& 	5 	& 	2 	& 	0.015 	& 	no 	& 	yes 		& 	no 		\\
	n5irrev2    		& 	5 	& 	2 	& 	0.017 	& 	no 	& 	yes 		& 	no 		\\
	n6\_split			& 	6 	& 	2 	& 	0.0025	& 	no 	& 	yes 		& 	yes		\\
	n6\_split\_red 		& 	6 	& 	2 	&	0.002	& 	no 	& 	yes 		& 	yes 		\\
	n6\_forw			& 	6 	& 	2	&      0.006	&	no 	& 	yes 		& 	no 		\\
	n7\_split			&      7	&	2	&	0.0012	&	no	& 	yes		& 	yes 		\\
	n7\_split\_red		&	7	&	2	&	0.0009	&	no	&	yes 		& 	yes		\\
	n7\_forw          		&	7	& 	2	&	0.0025	&	no	&	yes 		&	no 		\\
	n7\_forw\_red		&	7	&	2	&	0.0022	&	no	&	yes 		&	no 		\\
	n8\_split			&	8	&	2	&	0.00055	&	no	&	yes 		&	yes 		\\
	n8\_forw			&	8	&	2	&	0.0012	&	no	&	yes 		&	no 		\\
	\end{tabular}
\end{table}

\subsection{Maximal Strong Blocks}
\begin{figure}[bth]
	\begin{minipage}[c]{\textwidth}
		\begin{minipage}[c]{0.5\textwidth}
			\centering
		\includegraphics[width=0.8\textwidth]{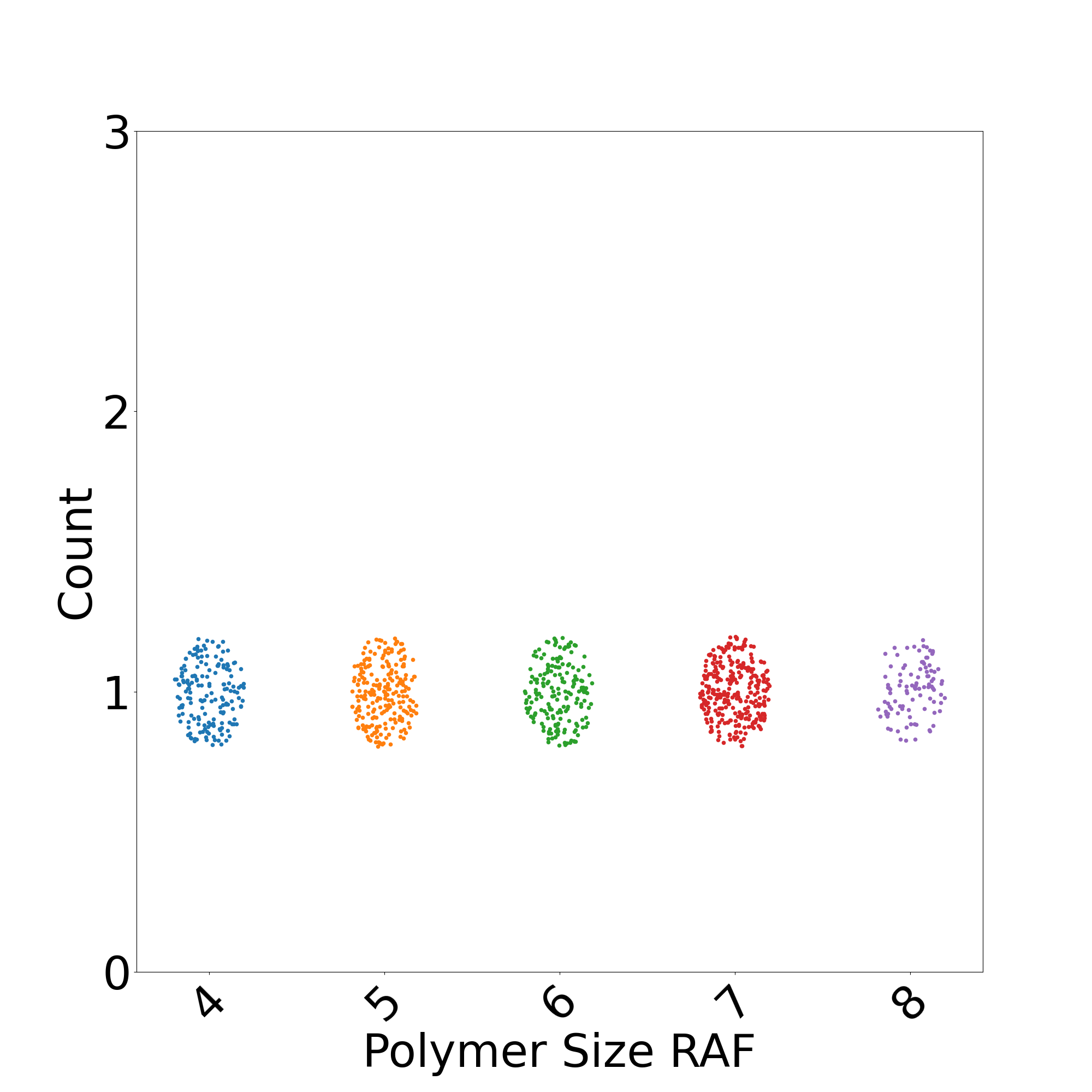}
		\end{minipage}%
		\hfill%
		\begin{minipage}[c]{0.5\textwidth}
			\centering
			\includegraphics[width=0.8\textwidth]{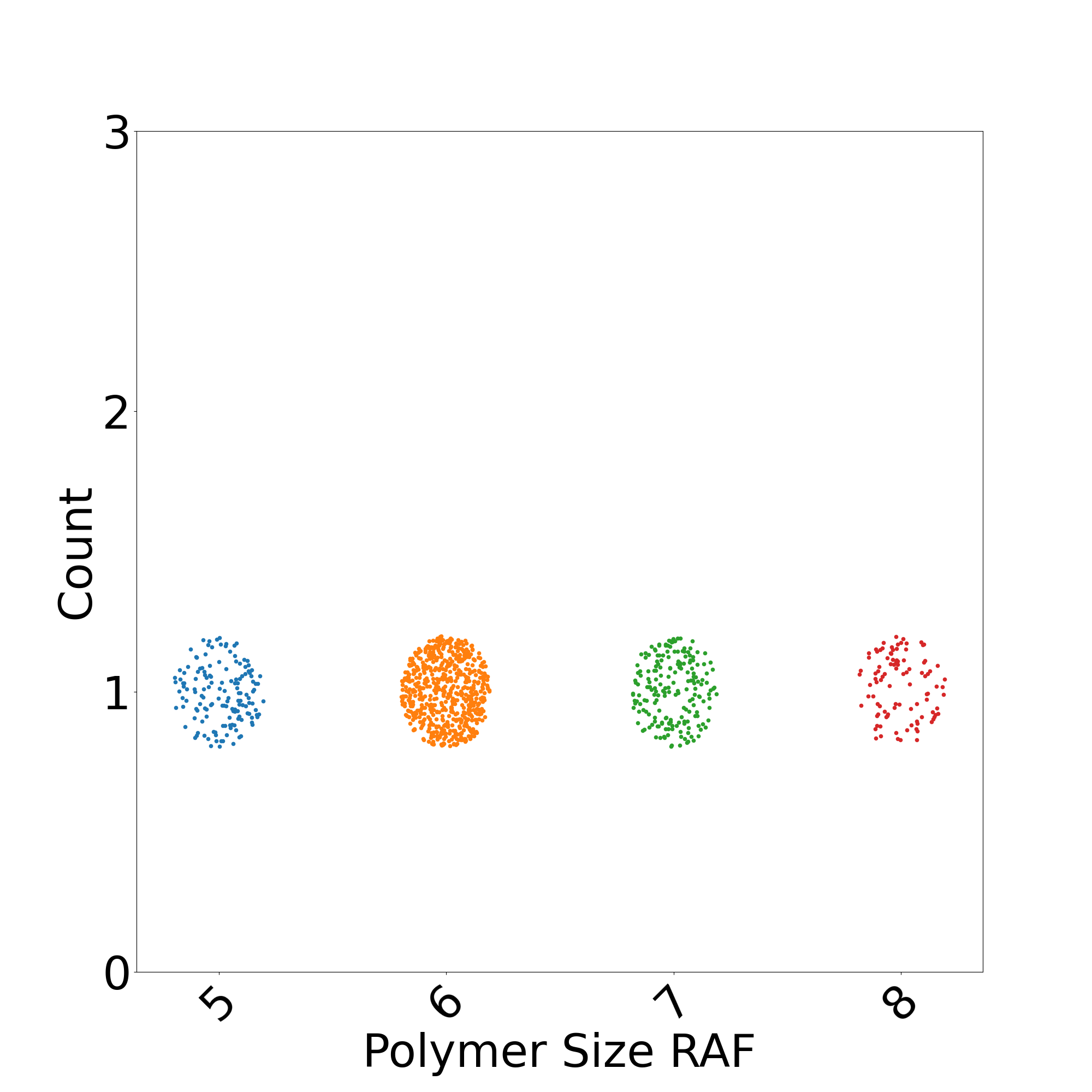}
		\end{minipage}
	\end{minipage}
	\begin{minipage}[c]{\textwidth}
		\begin{minipage}[c]{0.5\textwidth}
			\centering
		\includegraphics[width=0.8\textwidth]{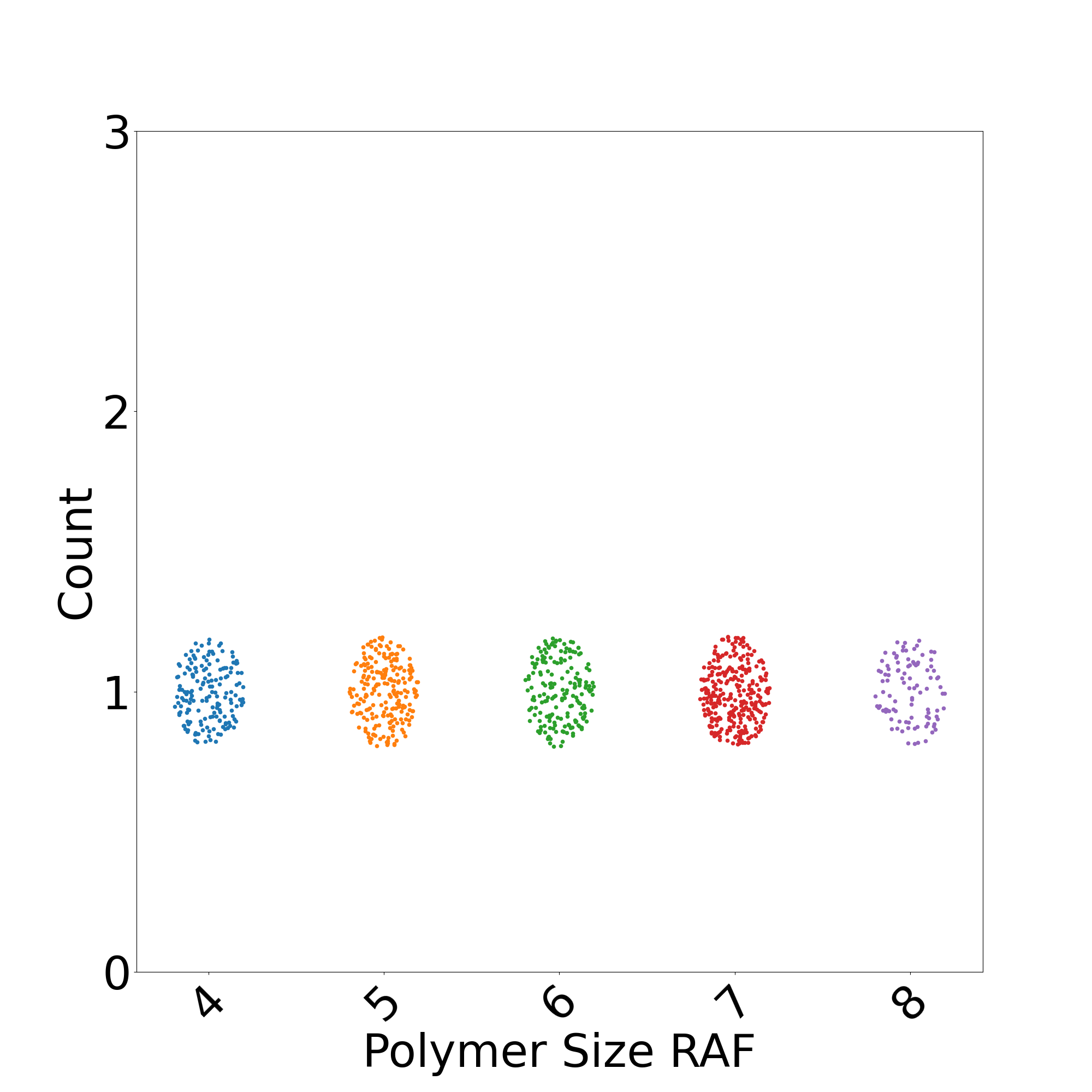}
		\end{minipage}%
		\hfill%
		\begin{minipage}[c]{0.5\textwidth}
			\centering
			\includegraphics[width=0.8\textwidth]{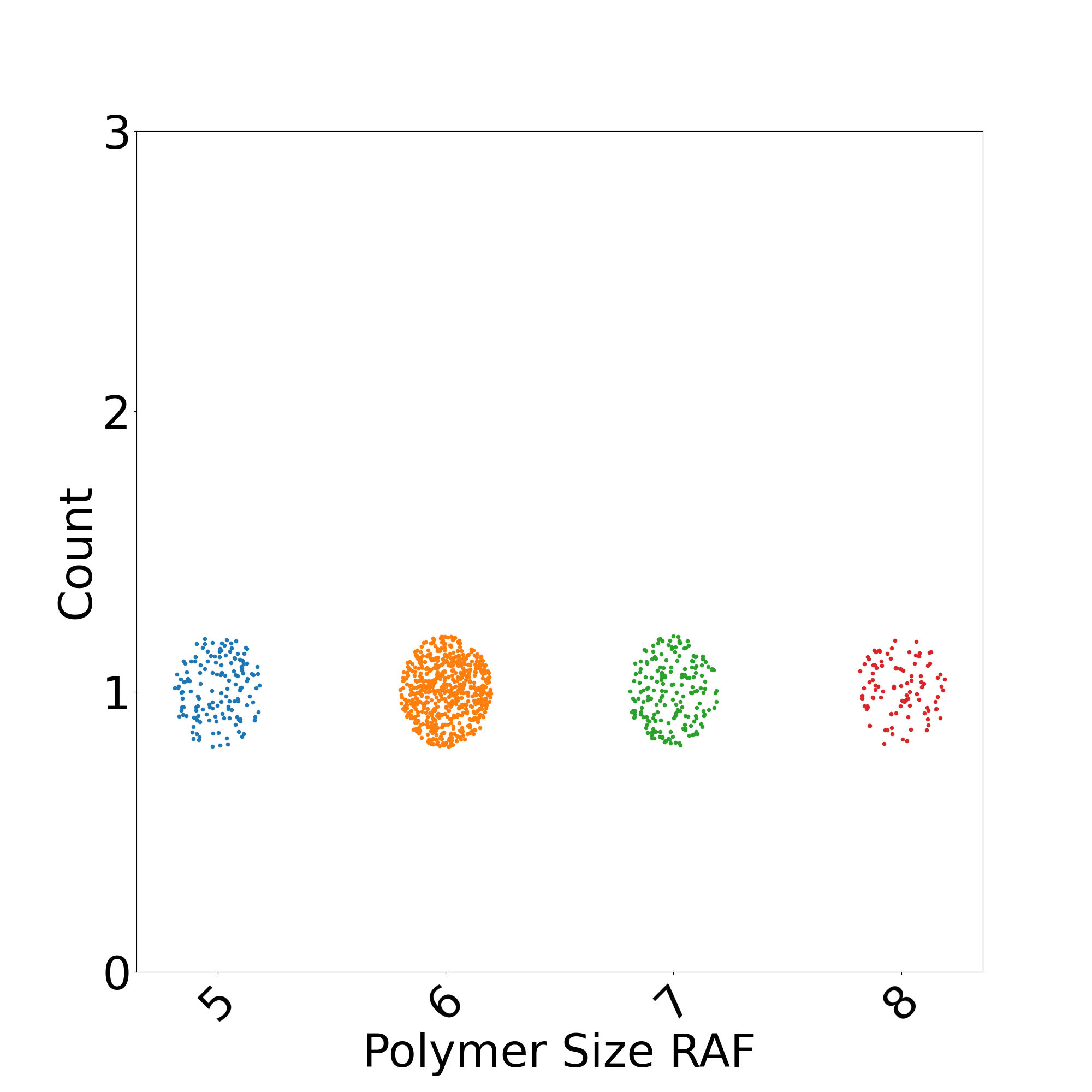}
		\end{minipage}
	\end{minipage}
		\caption{Number of maximal strong blocks containing at least one non-food, 
			non-waste species vertex in irreducible RAFs from different instances of the 
			BPM containing ligation/cleavage \textbf{(left)} or only ligation \textbf{(right)} 
			with \textbf{(upper)} and without \textbf{(lower)} food species. See SI 
			Table~\ref{tab:SimulationDetails}. Dots, each representing an irreducible 
			RAF with indicated maximal polymer length, are plotted in a non-overlapping manner.} 
		\label{fig:StrongBlocks}
\end{figure}

Figure~\ref{fig:StrongBlocks} shows the number of maximal strong blocks containing at least one non-food, 
non-waste species for all irreducible RAFs collected from the binary polymer models that are specified in 
Tab.~\ref{tab:SimulationDetails}.

\noindent Figure~\ref{fig:BPMMultipleStrongBlocks} depicts a theoretical example of an instance of the binary 
polymer model that is composed of two maximal strong blocks containing at least one non-food, 
non-waste species. However, as depicted in Fig.~\ref{fig:StrongBlocks}, such an instance was 
not observed by us. 

\begin{figure}[tbh]
	\begin{minipage}[c]{0.4\textwidth}
		\begin{tikzpicture}
			\node[draw, ellipse] (01) at (0,0) {$01$};
			\node[draw, ellipse] (11) at (2,0) {$11$};
			\node[draw, ellipse] (00) at (4,0) {$00$};						
			\node[draw, ellipse] (100) at (-1.5,-2.5) {$100$};
			\node[draw, ellipse] (0111) at (1,-2.5) {$0111$};
			\node[draw, ellipse] (011) at (3.5,-4) {$011$};
			\node[draw, ellipse] (1100) at (3,-2.5) {$1100$};
			\node[draw, ellipse] (11100) at (5,-2.5) {$11100$};
			\node[draw, ellipse] (01111100) at (2,-5) {$01111100$};
			\node[draw, ellipse] (01111) at (-2, -5) {$01111$};
			\node[draw, ellipse] (1111) at (0, -6.5) {$1111$};
			\node[draw, ellipse] (010) at (-2, -8) {$010$};
			\node[draw, ellipse] (0) at (-1, -10) {$0$};
			\node[draw, ellipse] (10) at (1, -10) {$10$};
			
			\node[draw, rectangle] (r1) at (1,-1) {$r_1$};
			\node[draw, rectangle] (r2) at (3,-1) {$r_2$};
			\node[draw, rectangle] (r3) at (2,-4) {$r_3$};
			\node[draw, rectangle] (r4) at (5,-4) {$r_4$};
			\node[draw, rectangle] (r5) at (0,-5) {$r_5$};
			\node[draw, rectangle] (r6) at (-2,-6.5) {$r_6$};
			\node[draw, rectangle] (r7) at (0,-8) {$r_7$};
			
			\draw[->] (01) to (r1);
			\draw[->] (11) to (r1);
			\draw[->] (r1) to (0111);
			\draw[->, dashed] (100) to [bend left = 15] (r1);
			\draw[->, dashed] (r1) to [bend left = 15] (100);

			\draw[->] (11) to (r2);
			\draw[->] (00) to (r2);			
			\draw[->] (r2) to (1100);
			\draw[->, dashed] (11100) to [bend left = 15] (r2);
			\draw[->, dashed] (r2) to [bend left = 15] (11100);
			
			\draw[->] (0111) to (r3);
			\draw[->] (1100) to (r3);
			\draw[->, dashed] (r3) to [bend right = 15] (011);
			\draw[->, dashed] (011) to [bend right = 15] (r3);
			\draw[->] (r3) to (01111100);
			
			\draw[->] (01111100) to [bend right = 30] (r4);
			\draw[->] (r4) to (11100);
			\draw[->] (r4) to (011);
			\draw[->, dashed] (r4) to [bend right = 15] (1100);
			\draw[->, dashed] (1100) to [bend right = 15] (r4);
			
			\draw[->] (01111100) to (r5);
			\draw[->] (r5) to (100);
			\draw[->] (r5) to (01111);
			\draw[->, dashed] (r5) to [bend right = 15] (1111);
			\draw[->, dashed] (1111) to [bend right = 15] (r5);
			
			\draw[->] (01111) to (r6);
			\draw[->] (r6) to (1111);
			\draw[->] (r6) to [bend right = 60] (0);
			\draw[->, dashed] (r6) to [bend right = 15] (010);
			\draw[->, dashed] (010) to [bend right = 15] (r6);
			
			\draw[->] (0) to (r7);
			\draw[->] (10) to (r7);
			\draw[->] (r7) to (010);
			\draw[->, dashed] (r7) to [bend right = 30] (1111);
			\draw[->, dashed] (1111) to [bend right = 30] (r7);
		\end{tikzpicture}
	\end{minipage}%
	\hfill%
	\begin{minipage}[c]{0.6\textwidth}
		\par\bigskip
		\par\bigskip
		\par\bigskip
		\par\bigskip
		\par\bigskip
		\par\bigskip
		\par\bigskip
		\par\bigskip
		\par\bigskip
		\par\bigskip
		\par\bigskip
		\par\bigskip
		\begin{equation*}
			\begin{split}
				r_1&: 01 + 11 + (100) \rightarrow 0111 + (100) \\
				r_2&: 11 + 00 + (11100) \rightarrow 1100 + (11100) \\
				r_3&: 0111 + 1100 (011) \rightarrow 01111100 + (011) \\
				r_4&: 01111100 + (1100) \rightarrow 011 + 11100 + (1100) \\
				r_5&: 01111100 + (1111) \rightarrow 01111 + 100 + (1111) \\
				r_6&: 01111 + (010) \rightarrow 0 + 1111 + (010) \\
				r_7&: 0 + 10 + (1111) \rightarrow 010 + (1111)
			\end{split}
		\end{equation*}
	\end{minipage}
	\caption{Example of an irreducible RAF from the BPM with multiple strong blocks.}
	\label{fig:BPMMultipleStrongBlocks}	
\end{figure}

\subsection{Compositions of irreducible RAFs}
Fig.~\ref{fig:irrRAFOverlap} and Fig.~\ref{fig:irrRAFCoreComp} show the overlap and 
composition in terms of F-reactions of the 50 arbitrarily chosen irreducible RAFs from 
two instances of the binary polymer model, i.e., n6\_split and n6\_forw, depicted in 
Fig.~\ref{fig:irrRAFCores} of the main text.

\begin{figure}[bth]
	\centering
	\begin{minipage}[c]{0.5\textwidth}
		\centering
		\includegraphics[width=\textwidth]{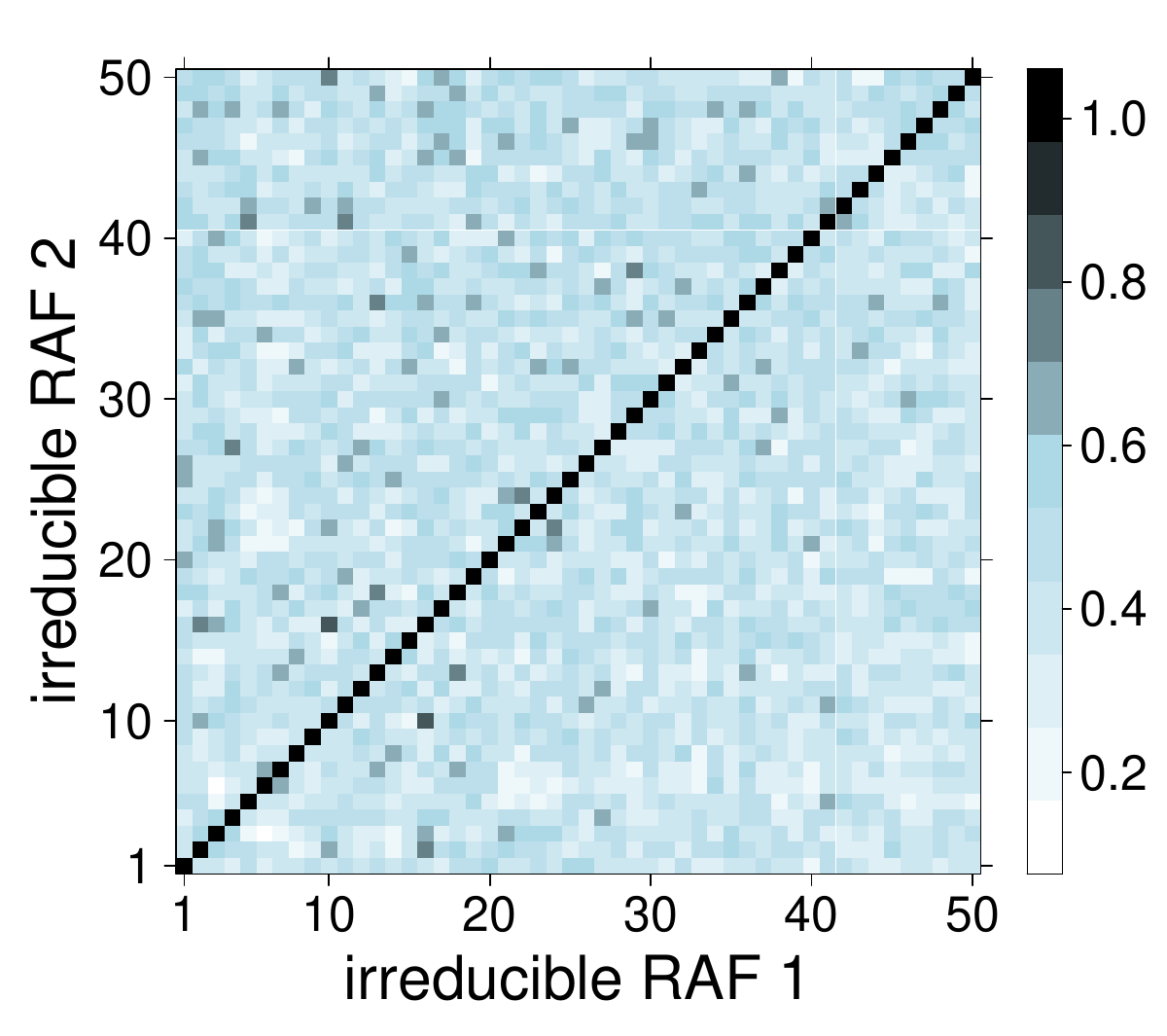}
	\end{minipage}%
	\hfill%
	\begin{minipage}[c]{0.5\textwidth}
		\includegraphics[width=\textwidth]{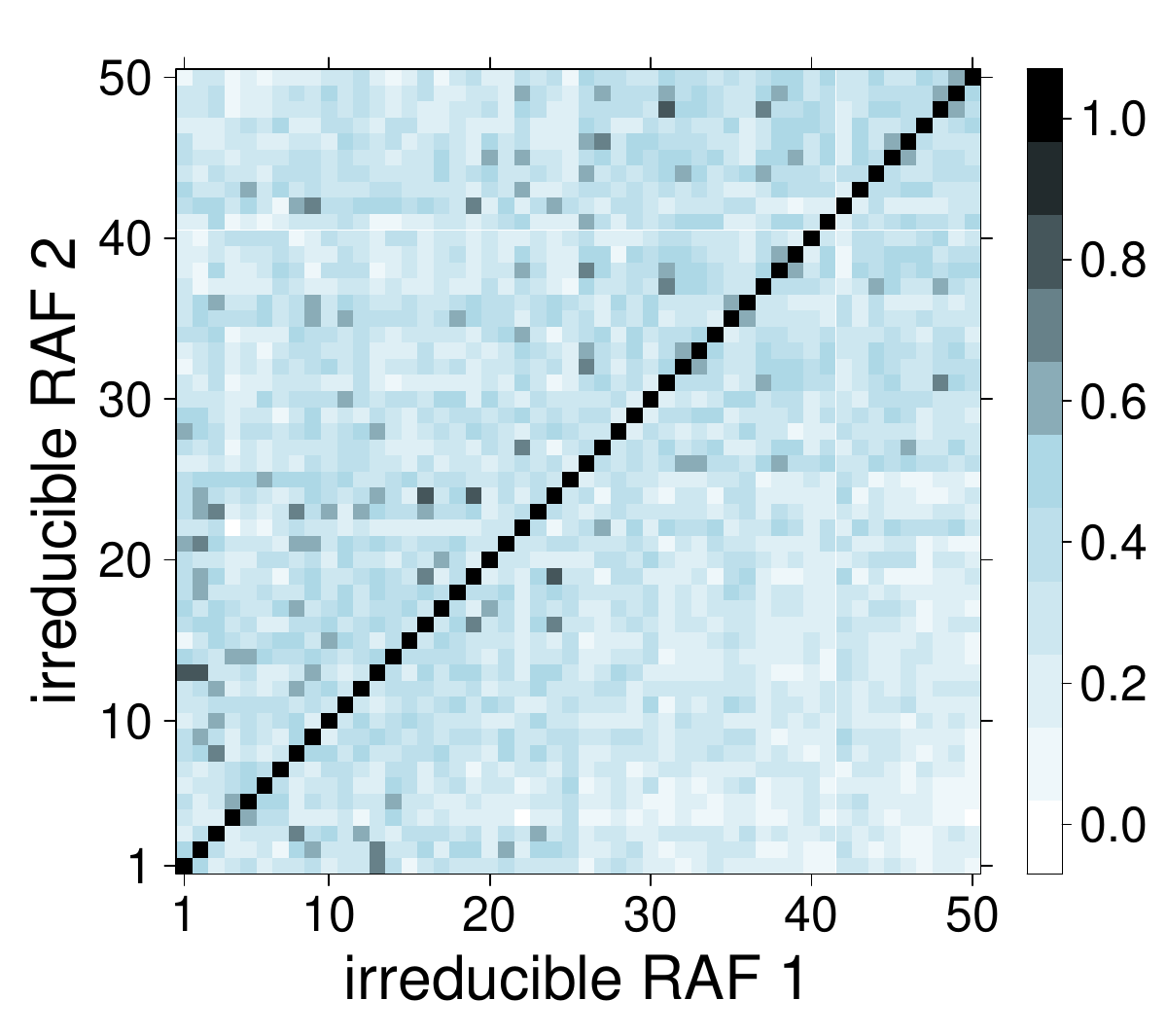}
	\end{minipage}
	\caption{Overlap w.r.t. F-reactions in the 50 irreducible RAFs from two 
	instances of the BPM with ligation/cleavage (\textbf{left}) or ligation 
	reactions only (\textbf{right}).}
	\label{fig:irrRAFOverlap}
\end{figure}

\begin{figure}
	\centering
	\begin{minipage}[c]{0.5\textwidth}
		\centering
		\includegraphics[width=\textwidth]{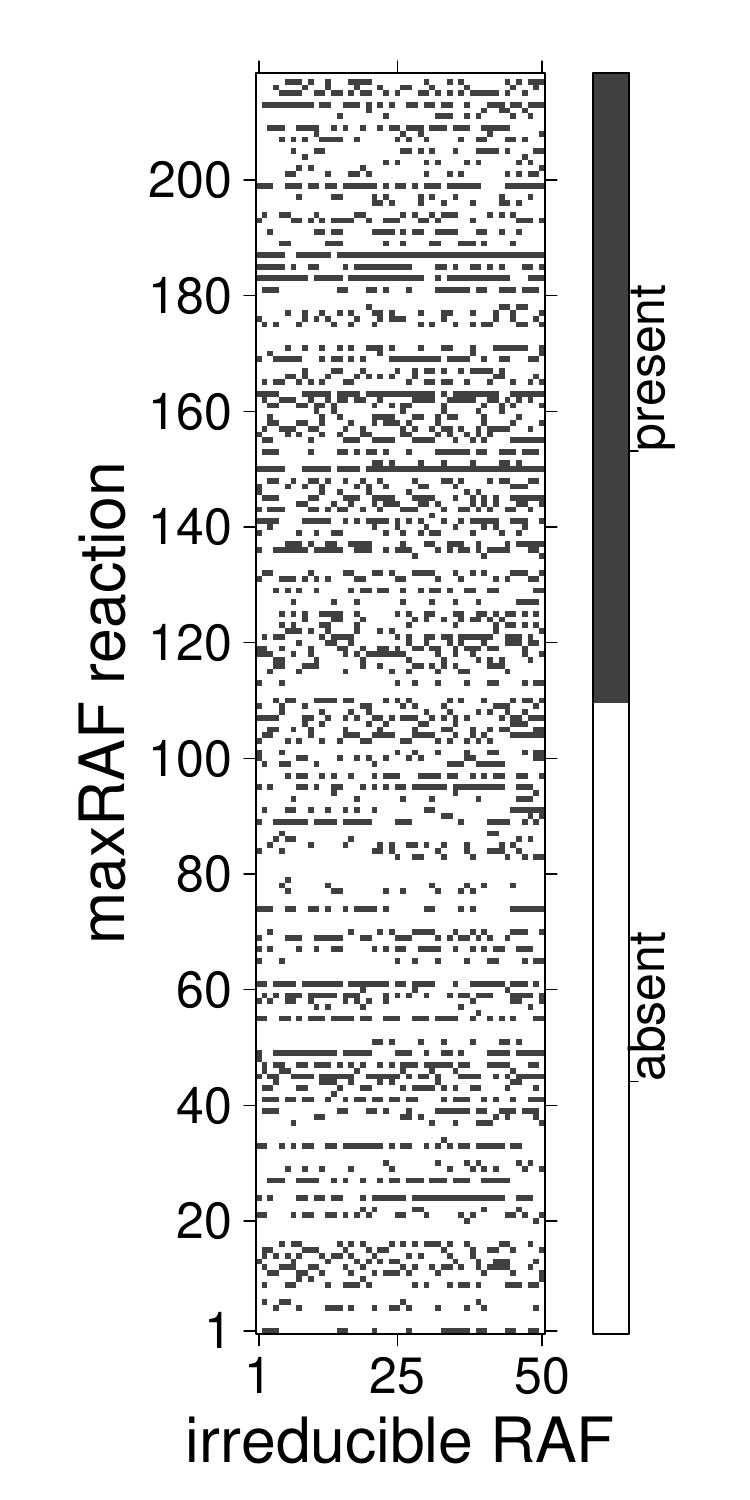}
	\end{minipage}%
	\hfill%
	\begin{minipage}[c]{0.5\textwidth}
		\includegraphics[width=\textwidth]{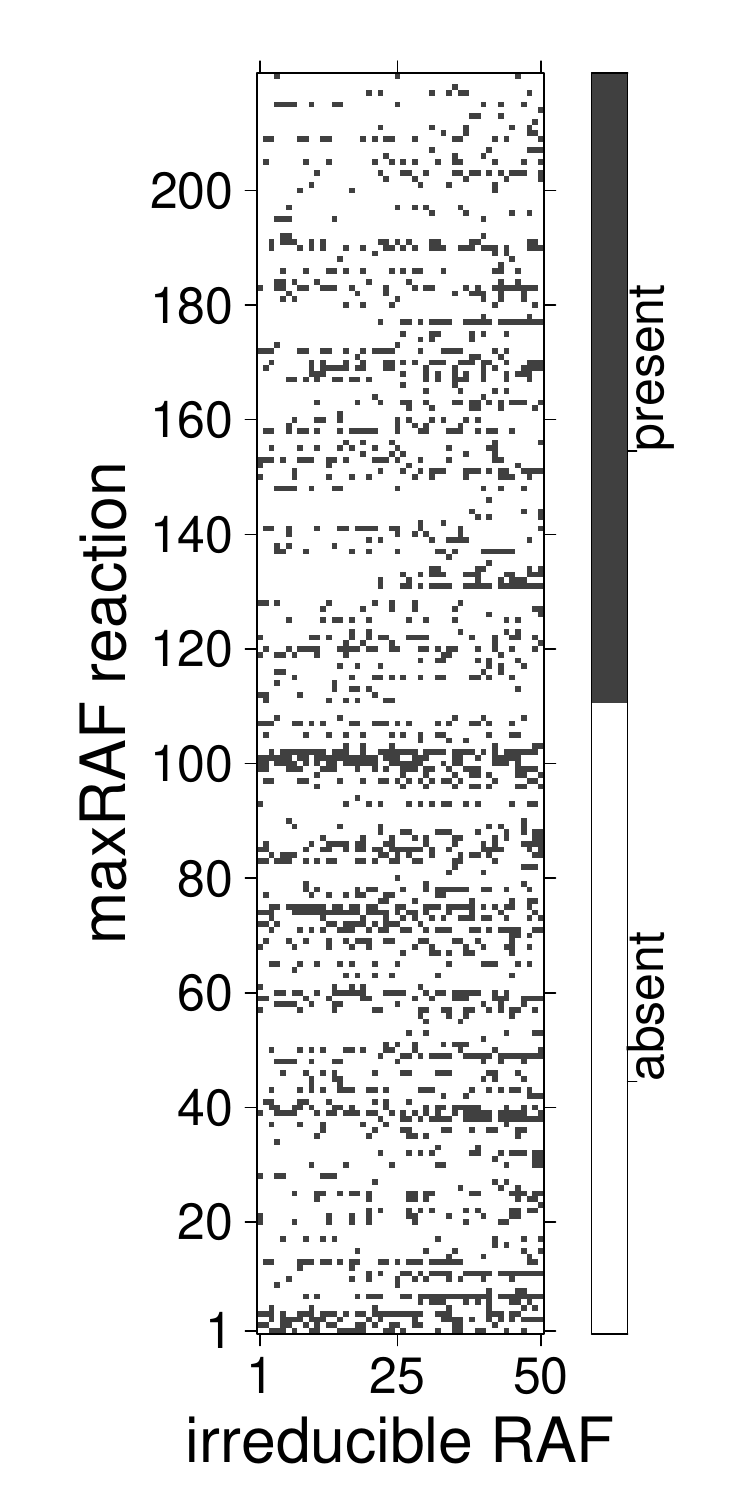}
	\end{minipage}
	\caption{Depiction of the F-reaction composition of 50 arbitrarily chosen 
	irreducible RAFs from two instances of the binary polymer model as depicted in 
	Fig.~\ref{fig:irrRAFCores} of the main text with maximal polymer size of $n=6$. See  
	Table~\ref{tab:SimulationDetails} for simulation details.}
	\label{fig:irrRAFCoreComp}
\end{figure}

\end{document}